\documentclass[twoside]{IEEEtran} %

\usepackage[utf8]{inputenc} 
\usepackage[T1]{fontenc}
\usepackage{url}
\usepackage{ifthen}
\usepackage{cite}
\usepackage[cmex10]{amsmath} %
\interdisplaylinepenalty=2500 %

\usepackage{enumerate}
\usepackage{array}

\let\truemathcal\mathcal
\usepackage{amssymb,bbm}
\usepackage[mathcal]{euscript}
\usepackage{mathrsfs}

\usepackage{amsthm,proof}
\usepackage{hyperref}

\usepackage{microtype}

\usepackage{footnote}

\usepackage[pdftex]{graphicx}
\DeclareGraphicsExtensions{.pdf,.jpg,.png} %
\usepackage{xcolor}

\newcommand{\E}{\ensuremath\mathbb{E}}
\renewcommand{\P}{\ensuremath\mathbb{P}}
\renewcommand{\d}{\ensuremath\,\mathrm{d}}
\newcommand{\dx}{\ensuremath\,\mathrm{d}x}
\newcommand{\dt}{\ensuremath\,\mathrm{d}t}

\newtheorem{theorem}{Theorem}
\newtheorem{lemma}{Lemma}
\newtheorem{proposition}{Proposition}
\newtheorem{corollary}{Corollary}
\theoremstyle{definition}

\newtheorem{remark}{Remark}
\newtheorem{example}{Example}

\def\qed{\hfill\IEEEQED}

\let\leq\leqslant
\let\geq\geqslant
\let\eps\varepsilon
\let\phi\varphi
\newcommand{\R}{\ensuremath\mathbb{R}}
\newcommand{\Z}{\ensuremath\mathbb{Z}}
\newcommand{\N}{\ensuremath\mathbb{N}}

\begin{document}
\title{Variations on a Theme by Massey} %
\author{%
Olivier Rioul,~\IEEEmembership{Member,~IEEE}\thanks{%
Manuscript received March, 6th, 2021; revised October 30th, 2021; accepted December 23, 2021. Date of publication , 2022; date of current version January 2nd, 2022.
}%
\thanks{The author is with the LTCI, Télécom Paris, Institut Polytechnique de Paris, F-91120, Palaiseau, France (e-mail: olivier.rioul@telecom-paris.fr).}%
%\thanks{%
%%
%Copyright \copyright{} 2022 IEEE. Personal use of this material is permitted.  However, permission to use this material for any other purposes must be obtained from the IEEE by sending a request to pubs-permissions@ieee.org.
%%
%}
}

%\markboth{IEEE Transactions on Information Theory,~Vol.~68, No.~?,~~2022}{Rioul: Variations on a Theme by Massey}

\maketitle
\begin{abstract}
In 1994, Jim Massey proposed the guessing entropy as a measure of the difficulty that an attacker has to guess a secret used in a cryptographic system, and established a well-known inequality between entropy and guessing entropy. Over 15 years before, in an unpublished work, he also established a well-known inequality for the entropy of an integer-valued random variable of given variance. In this paper, we establish a link between the two works by Massey in the more general framework of the relationship between discrete (absolute) entropy and continuous (differential) entropy.
Two approaches are given in which the discrete entropy (or Rényi entropy) of an integer-valued variable can be upper bounded using the differential (Rényi) entropy of some suitably chosen continuous random variable.
As an application, lower bounds on guessing entropy and guessing moments are derived in terms of entropy or Rényi entropy (without side information) and conditional entropy or Arimoto conditional entropy (when side information is available).
\end{abstract}

\begin{IEEEkeywords}
Arikan's inequality, discrete vs. differential entropies, generalized Gaussian densities, generalized exponential densities, guessing entropy, guessing moments, guessing with side information, Kullback's inequality, Massey's inequality, Poisson summation formula, Rényi entropies, Rényi-Arimoto conditional entropies.
\end{IEEEkeywords}

\section{Introduction}

\IEEEPARstart{I}{n} an unpublished work in the mid-1970s, later published in the late 1980s~\cite{Massey88}, James L.~Massey proved the following bound on the entropy of an integer-valued random variable $X$ with variance~$\sigma^2$: %
\begin{equation}\label{MI}%
H(X)< \tfrac{1}{2}\log\bigl(2\pi e(\sigma^2+\tfrac{1}{12})\bigr).
\end{equation}
This inequality establishes an interesting connection between the entropy of $X$ and that of a Gaussian random variable. After more than a decade, Massey also established an important inequality for the guessing entropy~\cite{Massey94}:
\begin{equation}\label{MIG}%
G(X)\geq 2^{H(X)-2}+1 \text{ when \hbox{$H(X)\geq 2$} bits}, 
\end{equation}
where again an integer-valued random variable (number of guesses) is involved, the guessing entropy $G(X)$ being defined as the minimum average number of guesses.
Perhaps surprisingly, the two Massey inequalities can be seen as part of a common framework which relates discrete (absolute) and continuous (differential) entropies.

The question of making the link between the entropy $H(X)$ of a discrete random variable $X$ and the entropy $h(\mathcal{X})$ of a continuous random variable $\mathcal{X}$ is not new.
The usual setting is to consider a discrete random variable $X$ whose values are regularly spaced $\Delta$ apart, with some probability distribution $p(x)=\P(X\!=\!x)$ having finite entropy.
As $\Delta\to 0$, $X$ may approach in distribution a continuous random variable~$\mathcal{X}$ with density $f$.
How then the discrete (absolute) entropy 
\begin{equation}
H(X)\triangleq\sum_x p(x) \log\! \frac{1}{p(x)}
\end{equation}
is related to the continuous (differential) entropy 
\begin{equation}\label{eq-h}
h(\mathcal{X})\triangleq\int \!f(x) \log\! \frac{1}{f(x)} \dx
\end{equation}
and how can $H(X)$ be evaluated from $h(\mathcal{X})$?
Similarly (or more generally), for any fixed $\alpha>0$, how is the discrete Rényi $\alpha$-entropy
\begin{equation}
H_\alpha(X) \triangleq \frac{1}{1-\alpha} \log \sum_x p(x)^\alpha 
\end{equation}
related to the continuous Rényi $\alpha$-entropy 
\begin{equation}\label{eq-h-alpha}
h_\alpha(\mathcal{X}) \triangleq \frac{1}{1-\alpha} \log \int\! f(x)^\alpha \dx
\end{equation}
and how can $H_\alpha(X)$ be evaluated from $h_\alpha(\mathcal{X})$? The limiting case $\alpha\to 1$ gives $H_1(X)=H(X)$ and $h_1(\mathcal{X})=h(\mathcal{X})$.

For Shannon's entropy, the classical answer to this question dates back to the 1961 textbook by Reza~\cite[\S\,8.3]{Reza61}, and has also been presented in the classical textbooks~\cite[\S\,1.3]{McEliece} and ~\cite[\S\,8.3]{CoverThomas}.  
The approach is to first consider the continuous variable $\mathcal{X}$ having density~$f$, and then \emph{quantize} it to obtain the discrete~$X$ with step size~$\Delta$. It follows that the integral in~\eqref{eq-h} or in~\eqref{eq-h-alpha} can approximated by a Riemann sum. 
Appendix~\ref{app-reza} generalizes the argument to Rényi entropies.
One obtains the well-known approximation 
$H(X) \approx h(\mathcal{X}) - \log\Delta$ for small $\Delta$, and more generally,
\begin{equation}\label{eq-hH}
H_\alpha(X) \approx h_\alpha(\mathcal{X}) - \log\Delta
\end{equation}
for any $\alpha>0$.
Reza's approximation~\eqref{eq-hH}, however appealing as it may be, is not so convenient for evaluating the discrete entropy of $X$ from the continuous one: It requires an arbitrary small $\Delta$ and the resulting values of~$X$ are in fact not necessarily regularly spaced since they correspond to mean values (Eq.~\eqref{mvt} in Appendix~\ref{app-reza}).

Massey's approach, in an unpublished work in the mid-1970s~\cite{Massey88}, is to write density $f$ as a staircase function whose values are the discrete probabilities. 
Compared to Reza's, Massey's approach somehow goes in the opposite direction: Instead of deriving the discrete $X$ from the continuous $\mathcal{X}$ and expressing the continuous entropy in terms of the discrete one, it starts from the discrete random variable $X$ with regularly spaced values, and adds an independent uniformly distributed random perturbation $\mathcal{U}$ to obtain a ``dithered'' continuous random variable $\mathcal{X}=X+\mathcal{U}$. This is explained in~\cite[Exercice 8.7]{CoverThomas},~\cite{CoverThomasSolutions} which also credits an unpublished work by Frans Willems. By doing so, the discrete entropy is expressed in terms of the continuous one. Remarkably, as stated in Theorem~\ref{thm-one} below,~\eqref{eq-hH}  becomes an \emph{exact} equality
\begin{equation}%
H_\alpha(X) = h_\alpha(\mathcal{X}) - \log\Delta
\end{equation}
where $\Delta$ needs not be arbitrarily small. 

This paper presents various Massey-type bounds on the Shannon entropy as well as on the Rényi entropy of an arbitrary positive order $\alpha>0$, of a discrete random variable using a version of \emph{Kullback's inequality} for exponential families applied to $\mathcal{X}$. An alternative bounding technique is to apply Kullback's inequality not to the continuous variable but directly to an integer-valued variable $X$ using the same exponential family density, combined with the \emph{Poisson summation formula} from Fourier analysis.

As an application, Massey's original inequality~\eqref{MI} can be recovered and improved by removing the constant $\frac{1}{12}$ inside the logarithm at the expense of an additional constant which is exponentially small as $\sigma^2$ increases (Equation~\eqref{eq-rioul2} below) :
\begin{equation}
H(X) < \frac{1}{2}\log(2\pi e \sigma^2) + \frac{2\log e}{e^{2\pi^2\sigma^2}-1}.
\end{equation}
In fact, the additional constant can become negative under some mild conditions and the bound $H(X)< \tfrac{1}{2}\log(2\pi e\sigma^2)$---which is classically obtained for \emph{continuous} random variables---holds for many examples of \emph{integer-valued} random variables including ones whose distribution satisfies an entropic central limit theorem. 

The natural generalization of~\eqref{MI} to Rényi entropies is also easily obtained, e.g.,
\begin{equation}
H_{\frac{1}{2}}({X})< \frac{1}{2}\log\Bigl(4\pi^2\bigl(\sigma^2+\frac{1}{12}\bigr)\Bigr) 
\end{equation}
(see~\eqref{ineq-Massey-alpha} below for the general case). This particular inequality can be improved as (Equation~\eqref{ineq-rioul2-1/2} below)
\begin{equation}
H_{\frac{1}{2}}(X) < \log(2\pi \sigma) + \frac{2\log e}{e^{2\pi\sigma}-1}.
\end{equation}

The method is not only applicable  when $X$ has fixed variance but also when $X>0$ has fixed mean~$\mu$ (and more generally with some fixed $\rho$-th order moment). 
It follows that Massey's lower bound~\eqref{MIG} for the guessing entropy can be easily improved as (Equation~\eqref{ineq-rioul2} below):
\begin{equation}
G(X) > \frac{2^{H(X)}}{e} + \frac{1}{2}. 
\end{equation}
valid for any value of $H(X)$. This inequality also holds in the presence of an observed output $Y$ of a side channel using conditional quantities (Equation~\eqref{ineq-rioul2|y} below):
\begin{equation}
G(X|Y) > \frac{2^{H(X|Y)}}{e} + \frac{1}{2}. 
\end{equation}
The improvement over Massey's original inequality~\eqref{MIG} is particularly important for large values of entropy, by the factor ${4}/{e}$. 
It is quite startling to notice that the approach followed by Massey back in the 1970s~\cite{Massey88} can improve the result of his 1994 paper~\cite{Massey94} so much.

The natural generalization to Rényi entropy $H_\alpha(X)$ (without side information) and to Arimoto's conditional entropy $H_\alpha(X|Y)$ (in the presence of some side information $Y$) reads, e.g., 
\begin{align}
G(X|Y) & >\frac{4}{9}2^{H_{2}({X|Y})}+\frac{1}{2}
\\
G(X|Y) & >  \frac{1}{4}2^{H_{\frac{2}{3}}({X|Y})}+\frac{1}{2}
\end{align}
(see~\eqref{ineq-rioul2-alpha|y} below for the general case).
As shown in this paper, such lower bounds depending of $H_\alpha(X|Y)$ cannot hold in general when $\alpha\leq 1/2$, because the support of $X$ may be infinite.
For $X$ with \emph{finite} support of size $M$, Arikan's inequality~\cite{Arikan96}:
\begin{equation}
G(X|Y) \geq  \frac{2^{H_{\frac{1}{2}}\!(X|Y)}}{1\!+\!\ln M} 
\end{equation}
can be recovered and generalized to values $\alpha<1/2$ by the method of this paper, e.g., 
\begin{equation}
G(X|Y) > \frac{2^{2H_{\frac{1}{3}}(X|Y)} }{2(2M+1)} 
\end{equation}
(see~\eqref{gen-arikan<1/2|} below for a general case).
Inequalities relating guessing entropy to (R\'enyi) entropies have become increasingly popular for practical applications because of scalability properties of entropy (see, e.g.,~\cite{ChoudaryPopescu17,TanasescuChoudaryRioulPopescu21}).

The techniques of this paper can also be applied to the guessing $\rho$-th moment $G_\rho(X|Y)$.
While Arikan's inequality
\begin{equation}
G_\rho(X|Y) \geq  \frac{2^{H_{\!\frac{1}{1+\rho}}\!(X|Y)}}{1\!+\!\ln M},
\end{equation}
holds for $X$ with finite support size $M$, lower bounds independent of $M$ and valid for infinite supports can be obtained for any $\alpha>\frac{1}{1+\rho}$, e.g.,
\begin{align}
G_2(X|Y) &> 2 \cdot \frac{2^{2H(X|Y)}}{\pi e}\\ 
G_3(X|Y) &> \frac{9}{2} \cdot \frac{2^{3H_{1/2}(X|Y)}}{\sqrt{3}\,\pi^3}\\
G_4(X|Y) &>  \frac{10000}{59049} \cdot{2^{4H_{2}(X|Y)}}
\end{align}
among many other inequalities of this kind
(see~\eqref{Grho|H} and~\eqref{Grho|Halpha} below for the general case). 

The remainder of this paper is organized as follows. Based on Massey's approach, a general method for establishing Massey-type inequalities for entropies and $\alpha$-entropies is presented in Section~\ref{sec-two}. An alternative ``mixed'' bounding technique using the Poisson summation formula is presented in Section~\ref{sec-alt}. Section~\ref{massey-ineq} applies the method to integer-valued random variables with fixed moment, support length, variance, or mean. Improved inequalities for fixed variance are derived in Section~\ref{sec-three}. Application to guessing is presented in Section~\ref{sec-guess}, where lower bounds are derived for guessing entropy and $\rho$-guessing entropy (guessing moment of order $\rho$).
Section~\ref{sec-conclusion} concludes and suggests perspectives.

\section{General Approach to Massey's Inequalities}\label{sec-two}

\subsection{Massey's Equivalence}

A general approach to Massey-type bounds first consists in identifying discrete entropies to continuous ones as follows.
\begin{theorem}\label{thm-one}
Let $X$ be a discrete random variable whose values are regularly spaced $\Delta$ apart, and define $\mathcal{X}$ by
\begin{equation}\label{perturb}
\mathcal{X}=X+ \mathcal{U}
\end{equation}
 where $\mathcal{U}$ is a continuous random variable independent of $X$, with support of finite length $\leq \Delta$. Then 
\begin{equation}\label{eq-X+Z}
H_\alpha(X) = h_\alpha(\mathcal{X}) -h_\alpha(\mathcal{U}).
\end{equation}
In particular, if $\mathcal{U}$ is uniformly distributed in an interval of length $\Delta$, then $h_\alpha(\mathcal{U})=\log\Delta$ and the exact equality
\begin{equation}\label{eq-hH=}
H_\alpha(X) = h_\alpha(\mathcal{X}) - \log\Delta
\end{equation}
holds for any $\alpha>0$.
\end{theorem}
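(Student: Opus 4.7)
The plan is to exploit the lattice structure: since $X$ lives on an arithmetic progression $a + \Delta\mathbb{Z}$ with spacing $\Delta$ and $\mathcal{U}$ has support of length at most $\Delta$, the translates of the support of $\mathcal{U}$ by distinct values of $X$ are \emph{essentially disjoint}. This disjointness is what turns a convolution into a clean expression and makes the identity exact.

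Concretely, first I would set $p_k = \P(X = a + k\Delta)$ and let $g$ be the density of $\mathcal{U}$ with support $S \subseteq [0,\Delta)$ (without loss of generality, after a shift absorbed into $a$). By independence, the density of $\mathcal{X} = X + \mathcal{U}$ is the convolution
\begin{equation}
f(x) = \sum_{k \in \mathbb{Z}} p_k\, g(x - a - k\Delta).
\end{equation}
The key observation is that the supports $a + k\Delta + S$ are pairwise disjoint (each lies in an interval of length $\le \Delta$ starting at a distinct lattice point), so for each $x$ at most one term in the sum is nonzero. Consequently, for any $\alpha > 0$,
\begin{equation}
f(x)^\alpha = \sum_{k \in \mathbb{Z}} p_k^\alpha\, g(x - a - k\Delta)^\alpha.
\end{equation}

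Next, I would integrate over $\mathbb{R}$ and use translation invariance of Lebesgue measure:
\begin{equation}
\int f(x)^\alpha \dx = \Bigl(\sum_k p_k^\alpha\Bigr) \int g(u)^\alpha \d u.
\end{equation}
Applying $\frac{1}{1-\alpha}\log(\cdot)$ yields $h_\alpha(\mathcal{X}) = H_\alpha(X) + h_\alpha(\mathcal{U})$, which rearranges to \eqref{eq-X+Z}. The Shannon case $\alpha = 1$ can either be obtained by taking a limit (continuity of $\alpha \mapsto H_\alpha, h_\alpha$ under mild hypotheses) or handled directly: the same disjointness makes $f(x)\log f(x) = \sum_k p_k g(x-a-k\Delta)\log\bigl(p_k g(x-a-k\Delta)\bigr)$ pointwise, and integrating splits into $H(X) + h(\mathcal{U})$.

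Finally, for the special case where $\mathcal{U}$ is uniform on an interval of length $\Delta$, a one-line computation gives $h_\alpha(\mathcal{U}) = \frac{1}{1-\alpha}\log\bigl(\Delta \cdot \Delta^{-\alpha}\bigr) = \log \Delta$ for all $\alpha \neq 1$ (and the same value by continuity at $\alpha=1$), yielding \eqref{eq-hH=}. I don't foresee any serious obstacle here; the only delicate point is verifying the disjointness of the translated supports, which is exactly where the hypothesis $|\text{supp}(\mathcal{U})| \le \Delta$ is used, and ensuring that boundary overlaps on a measure-zero set do not affect the integrals.
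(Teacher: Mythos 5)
Your proof is correct and follows essentially the same route as the paper's: write the density of $\mathcal{X}=X+\mathcal{U}$ as the mixture $\sum_k p_k\,\chi(x-x_k)$ whose terms have (essentially) disjoint supports, so that the integral defining $h_\alpha(\mathcal{X})$ splits into the discrete sum $\sum_k p_k^\alpha$ times $\int\chi^\alpha$ (and analogously for $\alpha=1$ via $f\log f$), giving $h_\alpha(\mathcal{X})=H_\alpha(X)+h_\alpha(\mathcal{U})$. The only cosmetic difference is that the paper splits the integral over the intervals $[x_k,x_k+\Delta]$ while you invoke the pointwise identity $f^\alpha=\sum_k p_k^\alpha\,\chi(\cdot-x_k)^\alpha$; these are the same disjointness argument.
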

\begin{proof}
 See Appendix~\ref{app-mass}.
\end{proof}

\begin{remark}
Theorem~\ref{thm-one} shows a peculiar additivity property of entropy:
\begin{equation}
h_\alpha(X+ \mathcal{U})=H_\alpha(X) + h_\alpha(\mathcal{U}),
\end{equation} 
which does not hold in general when $\mathcal{U}$ has support length $>\Delta$.
\end{remark}

\begin{remark}\label{rmk-Z}
The identity~\eqref{eq-hH=} is invariant by \emph{scaling}: if $s>0$, $H_\alpha(sX)=h_\alpha(s\mathcal{X})-\log(s\Delta)$ is the same as~\eqref{eq-hH=} because of the scaling property $h_\alpha(s\mathcal{X})=h_\alpha(\mathcal{X})+\log s$. %
As a result, one can always set $\Delta=1$ and consider an \emph{integer-valued} random variable $X$.
Hereafter whenever $\mathcal{U}$ is taken uniform we shall always make this assumption.
As a result, \eqref{eq-hH=} simply writes 
\begin{equation}\label{eq-hH==}
H_\alpha(X) = h_\alpha(\mathcal{X}) 
\end{equation}
when $\mathcal{U}$ is uniformly distributed in an interval of length $1$.
This is the original remark by Massey~\cite{Massey88} that discrete and continuous entropies coincide in this case.
\end{remark}

\subsection{Inequalities of the Kullback Type}

The next step in the general approach to Massey's inequalities is to bound continuous entropies $h_\alpha(\mathcal{X})$ using appropriate bounding techniques. The case $\alpha=1$ is familiar:

\begin{theorem}[Kullback's Inequality]
Let $\mathcal{X}$ be a continuous random variable with differential entropy $h(\mathcal{X})$ and $T(x)$ be a nonnegative function such that the ``moment'' $\E[T(\mathcal{X})]=m$ is a fixed quantity.
Then
\begin{equation}\label{ineq-general-continuous}
h(\mathcal{X}) \leq m \log e + \log Z
\end{equation}
where $Z%
=\int\! e^{-T(x)} \dx$. Equality holds if and only if $\mathcal{X}$ has density
\begin{equation}\label{eq-phi}
\phi(x) \triangleq \frac{e^{-T(x)} }{Z}.
\end{equation}
\end{theorem}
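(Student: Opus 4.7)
The plan is to invoke the Gibbs inequality (non-negativity of Kullback-Leibler divergence) between the density $f$ of $\mathcal{X}$ and the candidate optimizer $\phi$. Define
\begin{equation}
\phi(x) \triangleq \frac{e^{-T(x)}}{Z}, \qquad Z = \int e^{-T(x)}\dx,
\end{equation}
so that $\phi$ is a bona fide probability density (assuming $Z<\infty$, which is needed for the inequality to be non-vacuous). The key observation is that $\log\phi(x) = -T(x)\log e - \log Z$ is an \emph{affine} function of $T(x)$, so the cross-entropy $-\int f(x)\log\phi(x)\dx$ depends on $f$ only through $\E[T(\mathcal{X})]=m$.

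First I would write
\begin{equation}
D(f\|\phi) = \int f(x) \log\frac{f(x)}{\phi(x)}\dx = -h(\mathcal{X}) - \int f(x)\log\phi(x)\dx.
\end{equation}
Substituting $\log\phi(x) = -T(x)\log e - \log Z$ and using the hypothesis $\E[T(\mathcal{X})]=m$, the cross-entropy term collapses to $m\log e + \log Z$, giving the identity
\begin{equation}
D(f\|\phi) = -h(\mathcal{X}) + m\log e + \log Z.
\end{equation}
Then I invoke $D(f\|\phi)\geq 0$ (Gibbs' inequality, proved by Jensen applied to $-\log$), which immediately rearranges to \eqref{ineq-general-continuous}.

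For the equality case, since $D(f\|\phi)=0$ iff $f=\phi$ almost everywhere, equality in \eqref{ineq-general-continuous} holds exactly when $\mathcal{X}$ has density $\phi$. I would also note briefly that if one wishes to be careful, one should assume $f$ is absolutely continuous with respect to $\phi$ (which is automatic whenever $\phi>0$ on the support of $f$), and check that the density $\phi$ indeed satisfies $\E_\phi[T]=m$ when parameters are chosen so that $Z$ is finite --- but for the inequality itself (as opposed to the identification of the optimizer within a prescribed moment class) no such consistency check is needed.

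There is no real obstacle here: the argument is the standard maximum-entropy duality for exponential families, and all of the work is packaged into the non-negativity of relative entropy. The only subtlety worth flagging in the write-up is integrability of $f\log\phi$, which is guaranteed by the assumption that $\E[T(\mathcal{X})]$ exists and equals $m$ together with $Z<\infty$.
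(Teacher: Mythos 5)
Your argument is exactly the paper's: apply the information inequality $D(f\|\phi)\geq 0$ (Gibbs' inequality) to the density $\phi(x)=e^{-T(x)}/Z$, note that the cross-entropy term equals $m\log e+\log Z$ since $\E[T(\mathcal{X})]=m$, and read off the equality case from $D(f\|\phi)=0$ iff $f=\phi$ a.e. The proposal is correct and coincides with the paper's proof, with only the harmless extra remarks on integrability and absolute continuity.
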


\begin{proof}
Let $D(f\|\phi)=\int f \log \frac{f}{\phi}$ be the relative entropy (or Kullback-Leibler divergence) between the density $f$ of $\mathcal{X}$ and density $\phi$.  The \emph{information inequality}~\cite[Thm.\,2.6.3]{CoverThomas} states that $D(f\|\phi)\geq 0$ with equality iff (if and only if) $f=\phi$ a.e.
This gives the well known Gibbs inequality
\begin{equation}\label{gibbs}
h(\mathcal{X}) \leq -\E \log\phi(\mathcal{X}) 
\end{equation}
with equality iff $f=\phi$ a.e. Applying Gibbs' inequality to~\eqref{eq-phi} proves the theorem.
\end{proof}

\begin{remark}
Inequality~\eqref{ineq-general-continuous} is well known (see, e.g.,~\cite[\S\,21]{Rioul18}) and can be seen as a version of \emph{Kullback's inequality}~\cite[\S\,4]{Kullback54} (or the \emph{Kullback-Sanov inequality}~\cite[pp.~23--24]{Sanov57},~\cite[Chap.\,3, Thm.~2.1]{Kullback}) for exponential families parameterized by some $\theta\in\R$. It is more general in the sense that one does not use the condition on ``partition function'' $Z=Z(\theta)$ which would be required for equality to hold. Such a condition would read $\frac{\d}{\d\theta}\log Z(\theta)=-m$ in the case of a natural exponential family $\phi(x) = {e^{-\theta {T'}^{\vphantom{2}}(x)} }/{Z(\theta)}$ where $T'$ does not depend on $\theta$.
\end{remark}

The natural generalization to Rényi entropies is as follows.
\begin{theorem}[$\alpha$-Kullback's Inequality]\label{alpha-kullback}
Let $\mathcal{X}$ be a continuous random variable with differential $\alpha$-entropy $h_\alpha(\mathcal{X})$ and $T(x)$ be a nonnegative function such that the ``moment'' $\E[T(\mathcal{X})]=m$ is a fixed quantity.
Then
\begin{equation}\label{ineq-general-continuous-alpha}
h_\alpha(\mathcal{X}) \leq \frac{\alpha}{1-\alpha} \log m + \log Z_\alpha 
\end{equation}
where $Z_\alpha=\int \!{T(x)}^{\frac{\alpha}{\alpha-1}}\dx$. Equality holds iff $\mathcal{X}$ has density
\begin{equation}\label{eq-phi-alpha}
\phi(x)\triangleq \frac{{T(x)}^{\frac{1}{\alpha-1}}}{Z} 
\end{equation}
where $Z=\int \!{T(x)}^{\frac{1}{\alpha-1}}\dx$.
\end{theorem}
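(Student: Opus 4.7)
The plan is to mirror the proof of the Shannon case, replacing Gibbs' inequality by its natural Rényi counterpart, namely Hölder's inequality (equivalent here to the nonnegativity of a suitable Rényi divergence). The key idea is to factor the integrand of $\int\! f^\alpha\dx$ into a piece involving $Tf$ (whose integral equals $m$) and a piece involving only $T$ (whose integral equals $Z_\alpha$), and then apply Hölder in the appropriate direction.

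For $0<\alpha<1$, I would write $f(x)^\alpha = (T(x)f(x))^\alpha\cdot T(x)^{-\alpha}$ and apply Hölder with conjugate exponents $p=1/\alpha$ and $q=1/(1-\alpha)$, both greater than one, to obtain
\begin{equation*}
\int\! f^\alpha\dx \leq \Bigl(\int\! Tf\dx\Bigr)^{\!\alpha}\Bigl(\int\! T^{\alpha/(\alpha-1)}\dx\Bigr)^{\!1-\alpha} = m^\alpha Z_\alpha^{1-\alpha},
\end{equation*}
using $-\alpha/(1-\alpha)=\alpha/(\alpha-1)$. For $\alpha>1$ Hölder goes in the opposite direction: applied to $Tf=f\cdot T$ with exponents $\alpha$ and $\alpha/(\alpha-1)$, it yields $m\leq\bigl(\int\! f^\alpha\dx\bigr)^{1/\alpha}Z_\alpha^{(\alpha-1)/\alpha}$, which rearranges to the reverse bound $\int\! f^\alpha\dx\geq m^\alpha Z_\alpha^{1-\alpha}$.

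Taking $\tfrac{1}{1-\alpha}\log$ of both sides then produces the announced inequality $h_\alpha(\mathcal{X})\leq\tfrac{\alpha}{1-\alpha}\log m + \log Z_\alpha$ in both regimes, because the prefactor $1/(1-\alpha)$ flips sign at $\alpha=1$ precisely when the direction of the integral bound flips. Equality in Hölder's inequality holds iff the two integrands are proportional, forcing $f(x)\propto T(x)^{1/(\alpha-1)}$; the normalization $\int\! f\dx=1$ then pins down the extremal density as $\phi(x)=T(x)^{1/(\alpha-1)}/Z$, as claimed.

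The only real subtlety is the sign bookkeeping across the two regimes $\alpha<1$ and $\alpha>1$; there is no genuine obstacle since Hölder's inequality does all the work. One may also present the whole argument as a single application of the nonnegativity of the Rényi divergence $D_\alpha(f\|\phi)$ with $\phi$ the putative extremizer, which absorbs the case distinction into a uniform statement and dovetails with the Gibbs-style proof of the preceding theorem (recovered in the limit $\alpha\to 1$).
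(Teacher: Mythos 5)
Your Hölder argument is correct and complete: for $\frac{1}{1+\rho}$-free general $T$, writing $f^\alpha=(Tf)^\alpha\,T^{-\alpha}$ and applying H\"older with exponents $\frac{1}{\alpha},\frac{1}{1-\alpha}$ gives $\int f^\alpha \leq m^\alpha Z_\alpha^{1-\alpha}$ when $0<\alpha<1$, while for $\alpha>1$ H\"older applied to $\int Tf$ gives the reversed bound $\int f^\alpha\geq m^\alpha Z_\alpha^{1-\alpha}$, and in both regimes the prefactor $\frac{1}{1-\alpha}$ turns this into $h_\alpha(\mathcal{X})\leq\frac{\alpha}{1-\alpha}\log m+\log Z_\alpha$; the H\"older equality condition (proportionality of the two integrands) forces $f\propto T^{1/(\alpha-1)}$, hence $f=\phi$ after normalization. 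This is a genuinely different route from the paper, which instead invokes the nonnegativity of the relative $\alpha$-entropy $\Delta_\alpha(f\|\phi)=D_{1/\alpha}(f_\alpha\|\phi_\alpha)$ between escort densities (the $\alpha$-Gibbs inequality of \cite[Prop.~8]{Rioul20}) applied to the candidate density $\phi$. Your proof is more elementary and self-contained, needing no escort densities or external lemma; the paper's formulation buys a Gibbs-type inequality \eqref{gibbs-alpha} that is reused verbatim in the discrete setting (Theorem~\ref{thm-two-alpha}), which is why the paper packages the argument that way. One caveat on your closing aside: the plain nonnegativity of $D_\alpha(f\|\phi)$ with $\phi$ the extremizer does \emph{not} absorb the argument --- it only bounds $\int f^\alpha\phi^{1-\alpha}=Z^{\alpha-1}\int f^\alpha T^{-1}$, which does not involve $m=\int Tf$ or $\int f^\alpha$ in the needed way; the correct divergence-based statement is the relative $\alpha$-entropy (Rényi divergence of order $1/\alpha$ between escorts) used in the paper, which is essentially equivalent to your H\"older step. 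Also note, as the paper's Remark~3 does, that your argument implicitly requires $Z$ and $Z_\alpha$ to be finite (in particular $T>0$ a.e.\ on the support when $\alpha<1$), otherwise the bound is vacuous.
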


\begin{proof}
Let $D_\alpha(f\|\phi)=\frac{1}{\alpha-1}\log\int f^\alpha \phi^{1-\alpha}$ be the Rényi $\alpha$-divergence~\cite{vanErvenHarremoes14} between the density~$f$ of~$\mathcal{X}$ and density $\phi$. We have $D_\alpha(f\|\phi)\geq 0$ with equality iff $f=\phi$~a.e. 
Denoting the ``escort'' densities of exponent $\alpha$ by $f_\alpha=\frac{f^\alpha}{\int\!f^\alpha}$ and $\phi_\alpha=\frac{\phi^\alpha}{\int\!\phi^\alpha}$, the \emph{relative $\alpha$-entropy}~\cite{LapidothPfister16} \footnote{%
Also named Sundaresan's divergence~\cite{Sundaresan07}. For $\alpha=2$, $D_2(f\|\phi)=\log \frac{\int f^2 \int g^2}{(\int fg)^2}$\smallskip{} was previously known as the Cauchy-Schwarz divergence~\cite[Eq.\,(31) p.\,38]{Principe00}.
}
between $f$ and $\phi$ is defined as
\begin{equation}
\Delta_\alpha (f\|\phi) \triangleq D_{1/\alpha} (f_\alpha\|\phi_\alpha)
\end{equation}
which is nonnegative and vanishes iff $f=\phi$ a.e.
Expanding $D_{1/\alpha} (f_\alpha\|\phi_\alpha)$ gives the $\alpha$-Gibbs' inequality~\cite[Prop.~8]{Rioul20}
which generalizes Gibbs' inequality~\eqref{gibbs}:
\begin{equation}\label{gibbs-alpha}
h_\alpha(\mathcal{X}) \leq \frac{\alpha}{1-\alpha} \log \E \,\phi_\alpha^{1-\frac{1}{\alpha}}(\mathcal{X})
\end{equation}
with equality iff $f=\phi$ a.e. 
Applying $\alpha$-Gibbs' inequality to~\eqref{eq-phi-alpha} proves the theorem.
\end{proof}

\begin{remark}
Notice that both ${T(x)}^{\frac{1}{\alpha-1}}$ and ${T(x)}^{\frac{\alpha}{\alpha-1}}$ need to be Lebesgue-integrable over the given support interval for~$Z$ and $Z_\alpha$ to be well defined and 
finite. %

If the relation $\E[T(\mathcal{X})]=m$ is also satisfied when $\mathcal{X}\sim\phi$, then
\begin{equation}\label{eq-trick}
\frac{Z_\alpha}{Z}=\frac{1}{Z}\int \!T(x){T(x)}^{\frac{1}{\alpha-1}}\dx=\E[T(\mathcal{X})]=m
\end{equation}
so that in this case~\eqref{ineq-general-continuous-alpha} simplifies to
\begin{equation}\label{ineq-general-continuous-alpha-Z}
h_\alpha(\mathcal{X}) \leq \frac{ \log m}{1-\alpha} + \log Z.
\end{equation} 
\end{remark}

\subsection{Examples of Inequalities of the Kullback Type}\label{sec-two-ex}

A general maximization statement of $\alpha$-entropies subject to constraints is given in~\cite{BunteLapidoth16}.
A fairly general example is obtained when $\mathcal{X}$ is parametrized by $\rho\mskip0.5\thinmuskip$th-order moment $\theta=\E(|\mathcal{X}|^\rho)$ where $\rho>0$ is arbitrary. 

\begin{theorem}\label{lem-three}
For  $\theta=\E(|\mathcal{X}|^\rho)$ with $0<\rho<+\infty$, and $\alpha>\frac{1}{1+\rho}$, both~\eqref{ineq-general-continuous-alpha} and~\eqref{ineq-general-continuous-alpha-Z} reduce to
\begin{equation}\label{eq-maxentgen-alpha}
h_\alpha(\mathcal{X})\leq 
\begin{cases}
\frac{1}{\rho} \log \bigl(\frac{(1+\rho)\alpha-1}{1-\alpha}\theta\bigr)
+\frac{1}{1-\alpha}\log \frac{\rho\alpha}{(1+\rho)\alpha-1} 
\\[1ex]\;+ \log \frac{\textcolor{blue}{\bf 2}\cdot\Gamma(\frac{1}{\rho}+1)\Gamma(\frac{1}{1-\alpha}-\frac{1}{\rho})}{\Gamma(\frac{1}{1-\alpha})}
\qquad\;\text{for $\frac{1}{1+\rho}\!<\!\alpha\!<\!1$;}\\[3ex] 
\frac{1}{\rho} \log \bigl(\frac{(1+\rho)\alpha-1}{\alpha-1}\theta\bigr)
+\frac{1}{\alpha-1}\log \frac{(1+\rho)\alpha-1}{\rho\alpha}
\\[1ex]\;+\log\frac{\textcolor{blue}{\bf 2}\cdot \Gamma(\frac{1}{\rho}+1)\Gamma(\frac{\alpha}{\alpha-1})}{\Gamma(\frac{\alpha}{\alpha-1}+\frac{1}{\rho})}
\qquad\quad\;\text{for $\alpha>1$,}
\end{cases}
\end{equation}
with equality iff $\mathcal{X}$ is a generalized $\alpha$-Gaussian random variable. 
Inequality~\eqref{ineq-general-continuous} reduces to
\begin{equation}\label{ineq-gen-rho}
h(\mathcal{X})\leq \tfrac{1}{\rho}\log(\rho e \theta)  + \log \bigl(\textcolor{blue}{\bf 2}\Gamma(1+\tfrac{1}{\rho})\bigr)
\end{equation}
with equality iff $\mathcal{X}$ is a generalized Gaussian random variable.

In case of the one-side constraint $\mathcal{X}\geq 0$ with $\theta=\E({\mathcal{X}}^\rho)$, the same inequalities hold when the factor $\textcolor{blue}{\bf 2}$ inside the logarithm is removed.
\end{theorem}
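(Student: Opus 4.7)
The strategy is to apply Theorem~\ref{alpha-kullback} (or, for $\alpha=1$, the basic Kullback inequality) with a nonnegative function $T(x)$ chosen so that the associated extremal density $\phi$ is the generalized $\alpha$-Gaussian of $\rho$-th moment~$\theta$; both $Z$ and $Z_\alpha$ then reduce to explicit Beta-function integrals.

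For $\tfrac{1}{1+\rho}<\alpha<1$, I would take $T(x)=1+\lambda|x|^\rho$, so that $\phi(x)\propto(1+\lambda|x|^\rho)^{1/(\alpha-1)}$ is a polynomial-tailed generalized $\alpha$-Gaussian. Both $Z$ and $Z_\alpha$ are integrable precisely in the range $\alpha>1/(1+\rho)$, and the constraint $\E[T(\mathcal{X})]=1+\lambda\theta=m$ is exact, placing us in the simplified regime~\eqref{ineq-general-continuous-alpha-Z}. The substitution $u=\lambda|x|^\rho$ identifies $Z$ with $\tfrac{2}{\rho}\lambda^{-1/\rho}B(\tfrac{1}{\rho},\tfrac{1}{1-\alpha}-\tfrac{1}{\rho})$, while the analogous substitution in $\E_\phi[|X|^\rho]$ supplies the calibration $\lambda\theta=\frac{1-\alpha}{(1+\rho)\alpha-1}$, whence $m=\frac{\rho\alpha}{(1+\rho)\alpha-1}$. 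Plugging into~\eqref{ineq-general-continuous-alpha-Z} and simplifying via $\Gamma(1/\rho)=\rho\,\Gamma(1/\rho+1)$ reproduces the first branch of~\eqref{eq-maxentgen-alpha}.

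For $\alpha>1$ I would take instead $T(x)=(1-\lambda|x|^\rho)_+$, so that $\phi(x)\propto(1-\lambda|x|^\rho)_+^{1/(\alpha-1)}$ is a compactly supported generalized $\alpha$-Gaussian. Here $\E_f T$ is not directly tied to $\theta$, but the pointwise bound $(1-y)_+\geq 1-y$ gives $\E_f[T(\mathcal{X})]\geq 1-\lambda\theta$; combined with $\alpha/(1-\alpha)<0$, this upper-bounds the RHS of~\eqref{ineq-general-continuous-alpha} by what one would obtain by formally taking $m=1-\lambda\theta$. Invoking the identity $Z_\alpha=Z\cdot\E_\phi T=Z(1-\lambda\theta)$ from~\eqref{eq-trick} then collapses the bound to the same simplified form $h_\alpha(\mathcal{X})\leq\frac{\log(1-\lambda\theta)}{1-\alpha}+\log Z$. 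Now $Z$ evaluates to $\tfrac{2}{\rho}\lambda^{-1/\rho}B(\tfrac{1}{\rho},\tfrac{\alpha}{\alpha-1})$ on the truncated interval~$[0,\lambda^{-1/\rho}]$, the calibration yields $\lambda\theta=\frac{\alpha-1}{(1+\rho)\alpha-1}$, and the same $\Gamma(s+1)=s\Gamma(s)$ identity recasts the $\Gamma$-ratio into the stated $\Gamma(\tfrac{\alpha}{\alpha-1})/\Gamma(\tfrac{\alpha}{\alpha-1}+\tfrac{1}{\rho})$ form.

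The limit $\alpha=1$ is the classical Kullback--Gibbs argument with $T(x)=\lambda|x|^\rho$, yielding the generalized Gaussian $\phi(x)\propto e^{-\lambda|x|^\rho}$, $Z=2\lambda^{-1/\rho}\Gamma(1/\rho+1)$, and the calibration $\lambda\rho\theta=1$, which reproduces~\eqref{ineq-gen-rho}. For the one-sided constraint $\mathcal{X}\geq 0$, every integral runs over $[0,\infty)$ instead of~$\R$, which simply drops the factor~$2$ from each $Z$ and hence the $\log 2$ from the final bound. The main obstacle is the $\alpha>1$ case, where one must combine Theorem~\ref{alpha-kullback} with the $(1-y)_+\geq 1-y$ trick and carefully exploit the sign of $\alpha/(1-\alpha)$; the subsequent $\Gamma$-function bookkeeping is routine but error-prone, since the same quantity admits two equivalent $\Gamma$-ratio expressions related by $\Gamma(s+1)=s\Gamma(s)$.
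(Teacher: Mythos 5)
Your proposal is correct and follows essentially the same route as the paper's proof in Appendix~\ref{app-lem}: take $T$ affine in $|x|^\rho$ (its positive part when $\alpha>1$), calibrate the parameter so that the extremal generalized $\alpha$-Gaussian has $\rho$-th moment $\theta$, and evaluate $Z$ and $Z_\alpha$ as Beta integrals with $m=Z_\alpha/Z=\frac{\rho\alpha}{(1+\rho)\alpha-1}$. Your explicit use of $(1-y)_+\geq 1-y$ together with the sign of $\frac{\alpha}{1-\alpha}$ for $\alpha>1$ is merely a more carefully spelled-out version of the step the paper handles implicitly by restricting to the support interval, not a different argument.
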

\begin{proof}
See Appendix~\ref{app-lem}, where the generalized $\alpha$-Gaussian is given in~\eqref{alpha-Gaussian-generalized}. 
The limiting case $\alpha\to 1$ gives~\eqref{ineq-gen-rho}.
The case $\alpha=1$ is also proved directly by setting $T(x)=\frac{1}{\rho}\frac{|x|^\rho}{\theta}$ so that $m=\frac{1}{\rho}$ and $Z=2\Gamma(1+\frac{1}{\rho})(\rho\theta)^{1/\rho}$ in~\eqref{ineq-general-continuous}.
\end{proof}

Let $\mu_{\mathcal{X}}$ and $\sigma^2_{\mathcal{X}}$ denote the mean and variance of ${\mathcal{X}}$, respectively.
We illustrate~Theorem~\ref{lem-three} in three classical situations:
 
\paragraph{Support length parameter} 
This can be seen as a particular case of Theorem~\ref{lem-three} by setting $\rho=+\infty$ in the case of a finite support $(-1,1)$. More generally, suppose $\mathcal{X}$ has finite support:  $\mathcal{X}\in(a,b)$ a.s.; letting $\ell(\cdot)$ denote the support length, the corresponding parameter is $\theta=\ell(\mathcal{X})=b-a$.
For $\alpha=1$, we set $T(x)=0$ if $x\in(a,b)$ and $=+\infty$ otherwise. Then $\phi$ is the uniform distribution on $(a,b)$, moment $m=0$, partition $Z=b-a$ and~\eqref{ineq-general-continuous} reduces to the known bound~\cite[Ex.\,12.2.4]{CoverThomas} 
\begin{equation}\label{maxentunif}
h(\mathcal{X})\leq \log(b-a) 
\end{equation}
with equality iff $\mathcal{X}$ is uniformly distributed in $(a,b)$.

For $\alpha\ne 1$, we set $T(x)=1$ if $x\in(a,b)$ and $=0$ otherwise, so that $\phi=\phi_\alpha$ is the uniform distribution on $(a,b)$, moment $m=1$, $Z=Z_\alpha=b-a$ and~\eqref{ineq-general-continuous-alpha} or~\eqref{ineq-general-continuous-alpha-Z} reduces to
\begin{equation}\label{maxentunif-alpha}
h_\alpha(\mathcal{X})\leq \log(b-a) 
\end{equation}
with equality iff $\mathcal{X}$ is uniformly distributed in $(a,b)$.

\paragraph{Variance parameter}  
This can be seen as a particular case of Theorem~\ref{lem-three} by setting
$\rho=2$ for the centered variable $\mathcal{X}-\mu_{\mathcal{X}}$.
A direct derivation is as follows.
We assume that $\mathcal{X}\!\in\!\R$ with parameter $\theta=\sigma_{\mathcal{X}}$. For $\alpha=1$ we set $T(x)=\frac{1}{2}(\!\frac{x-\mu_{\mathcal{X}}}{\sigma_{\mathcal{X}}}\!)^2$, so that $\phi=\truemathcal{N}(\mu_{\mathcal{X}},\sigma_{\mathcal{X}}^2)$ is the Gaussian density,  moment $m=\frac{1}{2}$, partition $Z=\sqrt{2\pi\sigma^2_{\mathcal{X}}}$, and~\eqref{ineq-general-continuous} reduces to the well-known Shannon bound~\cite[\S\,20.5]{Shannon48} 
\begin{equation}\label{maxentvar}
h(\mathcal{X})\leq \frac{1}{2}\log(2\pi e\sigma_{\mathcal{X}}^2)
\end{equation} 
with equality iff $\mathcal{X}$ is Gaussian. 

For $\alpha\ne 1$ we set $T(x)$ in the form $T(x)=1+\beta\cdot (\frac{x-\mu_{\mathcal{X}}}{\sigma_{\mathcal{X}}})^2$ so that $m=1+\beta$ and $\beta$ is such that~\eqref{eq-phi-alpha} has finite variance~$\sigma_{\mathcal{X}}^2$. 
The corresponding density $\phi$ is known as the \emph{$\alpha$-Gaussian} density~\cite{CostaHeroVignat03}. 
Under these assumptions, one has $\alpha>\frac{1}{3}$, $\beta=\frac{1-\alpha}{3\alpha-1}$, and both~\eqref{ineq-general-continuous-alpha}\,and\,\eqref{ineq-general-continuous-alpha-Z} reduce to the following
\begin{corollary}\label{lem-one}
For any continuous random variable $\mathcal{X}$ with differential $\alpha$-entropy $h_\alpha(\mathcal{X})$,
\begin{equation}\label{maxentvar-alpha}
h_\alpha(\mathcal{X})\leq 
\begin{cases}
\frac{1}{2} \log \bigl(\frac{3\alpha-1}{1-\alpha}\pi\sigma^2_{\mathcal{X}}\bigr)
+\frac{1}{1-\alpha}\log \frac{2\alpha}{3\alpha-1} 
\\[1ex]\;+ \log \frac{\Gamma(\frac{1}{1-\alpha}-\frac{1}{2})}{\Gamma(\frac{1}{1-\alpha})}
\qquad\qquad\qquad\text{for $\frac{1}{3}<\alpha<1$;}\\[3ex]
\frac{1}{2} \log \bigl(\frac{3\alpha-1}{\alpha-1}\pi\sigma^2_{\mathcal{X}}\bigr)
+\frac{1}{\alpha-1}\log \frac{3\alpha-1}{2\alpha}
\\[1ex]\;+\log\frac{\Gamma(\frac{\alpha}{\alpha-1})}{\Gamma(\frac{\alpha}{\alpha-1}+\frac{1}{2})}
\qquad\qquad\qquad\text{for $\alpha>1$,}
\end{cases}
\end{equation}
with equality iff $\mathcal{X}$ is $\alpha$-Gaussian.
\end{corollary}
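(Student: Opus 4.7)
The plan is to derive Corollary~\ref{lem-one} as the $\rho=2$ specialization of Theorem~\ref{lem-three}, applied to the centered variable $\mathcal{X}-\mu_{\mathcal{X}}$. Since $h_\alpha$ is translation invariant, $h_\alpha(\mathcal{X})=h_\alpha(\mathcal{X}-\mu_{\mathcal{X}})$, and the constraint $\theta=\E(|\mathcal{X}-\mu_{\mathcal{X}}|^2)=\sigma^2_{\mathcal{X}}$ is exactly the variance. Substituting $\rho=2$ in~\eqref{eq-maxentgen-alpha} gives $\tfrac{1}{\rho}=\tfrac{1}{2}$, $(1+\rho)\alpha-1=3\alpha-1$, and $2\Gamma(\tfrac{1}{\rho}+1)=2\Gamma(\tfrac{3}{2})=\sqrt{\pi}$. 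The $\sqrt{\pi}$ factor is absorbed inside the first logarithm, turning $\tfrac{1}{2}\log(\tfrac{3\alpha-1}{1-\alpha}\sigma^2)+\log\sqrt{\pi}$ into $\tfrac{1}{2}\log(\tfrac{3\alpha-1}{1-\alpha}\pi\sigma^2)$, and similarly for the case $\alpha>1$. The extremal density in Theorem~\ref{lem-three} (the generalized $\alpha$-Gaussian for $\rho=2$) is by definition the $\alpha$-Gaussian, giving the equality condition.

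As an alternative self-contained proof, I would invoke Theorem~\ref{alpha-kullback} directly with $T(x)=1+\beta\bigl(\tfrac{x-\mu_{\mathcal{X}}}{\sigma_{\mathcal{X}}}\bigr)^{\!2}$, so that $m=\E[T(\mathcal{X})]=1+\beta$. The extremal density~\eqref{eq-phi-alpha} is then proportional to $(1+\beta y^2)^{1/(\alpha-1)}$ in the reduced variable $y=(x-\mu_{\mathcal{X}})/\sigma_{\mathcal{X}}$, which has full support $\R$ for $\tfrac{1}{3}<\alpha<1$ (Student-$t$ shape) and compact support $|y|\leq 1/\sqrt{-\beta}$ for $\alpha>1$ (Epanechnikov shape, with $\beta<0$). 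The parameter $\beta$ is determined by imposing that this density have variance $\sigma^2_{\mathcal{X}}$, i.e.\ $\int y^2\phi\,\mathrm{d}y=1$. A standard Beta-function reduction evaluates this second moment and yields $\beta=\tfrac{1-\alpha}{3\alpha-1}$, which requires $\alpha>1/3$ in both regimes and is consistent with the trick~\eqref{eq-trick} so that~\eqref{ineq-general-continuous-alpha-Z} applies.

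It then remains to evaluate the normalizer $Z=\int T(x)^{1/(\alpha-1)}\,\mathrm{d}x=\sigma_{\mathcal{X}}\int_{\R}(1+\beta y^2)^{1/(\alpha-1)}\,\mathrm{d}y$ in closed form. The substitution $u=\beta y^2/(1+\beta y^2)$ (for $\alpha<1$) or $u=-\beta y^2$ (for $\alpha>1$) turns this into a Beta integral, giving $Z$ in terms of $\Gamma$-functions. Inserting the expression for $\beta$ and collecting the exponents, the bound~\eqref{ineq-general-continuous-alpha-Z}, which reads $h_\alpha(\mathcal{X})\leq\tfrac{\log m}{1-\alpha}+\log Z$ with $\log m=\log(1+\beta)=\log\tfrac{2\alpha}{3\alpha-1}$, reproduces exactly the two-case expression~\eqref{maxentvar-alpha}.

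The main obstacle is purely computational bookkeeping: one must carry out two parallel Beta-integral evaluations, being careful about the sign of $\beta$ and the change of support, and then reconcile the resulting $\Gamma$-factors. Using the identity $\Gamma(z+1)=z\Gamma(z)$ to simplify $\Gamma\bigl(\tfrac{1}{1-\alpha}\bigr)=\bigl(\tfrac{1}{1-\alpha}-\tfrac{1}{2}\bigr)\,\Gamma\bigl(\tfrac{1}{1-\alpha}-\tfrac{1}{2}\bigr)$-type relations cleanly matches the exponent $\tfrac{1}{1-\alpha}$ appearing in the second term of~\eqref{maxentvar-alpha}. Equality throughout holds iff $f=\phi$ a.e.\ by the equality case of Theorem~\ref{alpha-kullback}, i.e.\ iff $\mathcal{X}$ is $\alpha$-Gaussian.
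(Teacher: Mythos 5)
Your proposal is correct and takes the paper's own route: the paper derives Corollary~\ref{lem-one} precisely as the $\rho=2$ specialization of Theorem~\ref{lem-three} applied to the centered variable $\mathcal{X}-\mu_{\mathcal{X}}$ (equivalently, the direct $\alpha$-Kullback argument with $T(x)=1+\beta\bigl(\tfrac{x-\mu_{\mathcal{X}}}{\sigma_{\mathcal{X}}}\bigr)^2$ and $\beta=\tfrac{1-\alpha}{3\alpha-1}$ worked out in Appendix~\ref{app-lem}), and your bookkeeping $2\Gamma(\tfrac{3}{2})=\sqrt{\pi}$ absorbed into the half-logarithm is exactly right. One non-load-bearing slip: the relation $\Gamma\bigl(\tfrac{1}{1-\alpha}\bigr)=\bigl(\tfrac{1}{1-\alpha}-\tfrac{1}{2}\bigr)\Gamma\bigl(\tfrac{1}{1-\alpha}-\tfrac{1}{2}\bigr)$ is not an instance of $\Gamma(z+1)=z\Gamma(z)$ (the shift there is $\tfrac{1}{2}$, not $1$) and is false, but nothing in your derivation actually needs it, since the Beta-integral evaluations already produce the $\Gamma$-ratios appearing in~\eqref{maxentvar-alpha} directly.
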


\begin{proof}
See Appendix~\ref{app-lem}, where the expression of the $\alpha$-Gaussian is given in~\eqref{alpha-Gaussian}.
\end{proof}
Fig.~\ref{fig1} plots $\alpha$-Gaussian densities for different values of~$\alpha$.

\begin{figure}[htb!]
\centering
\includegraphics[width=.9\linewidth]{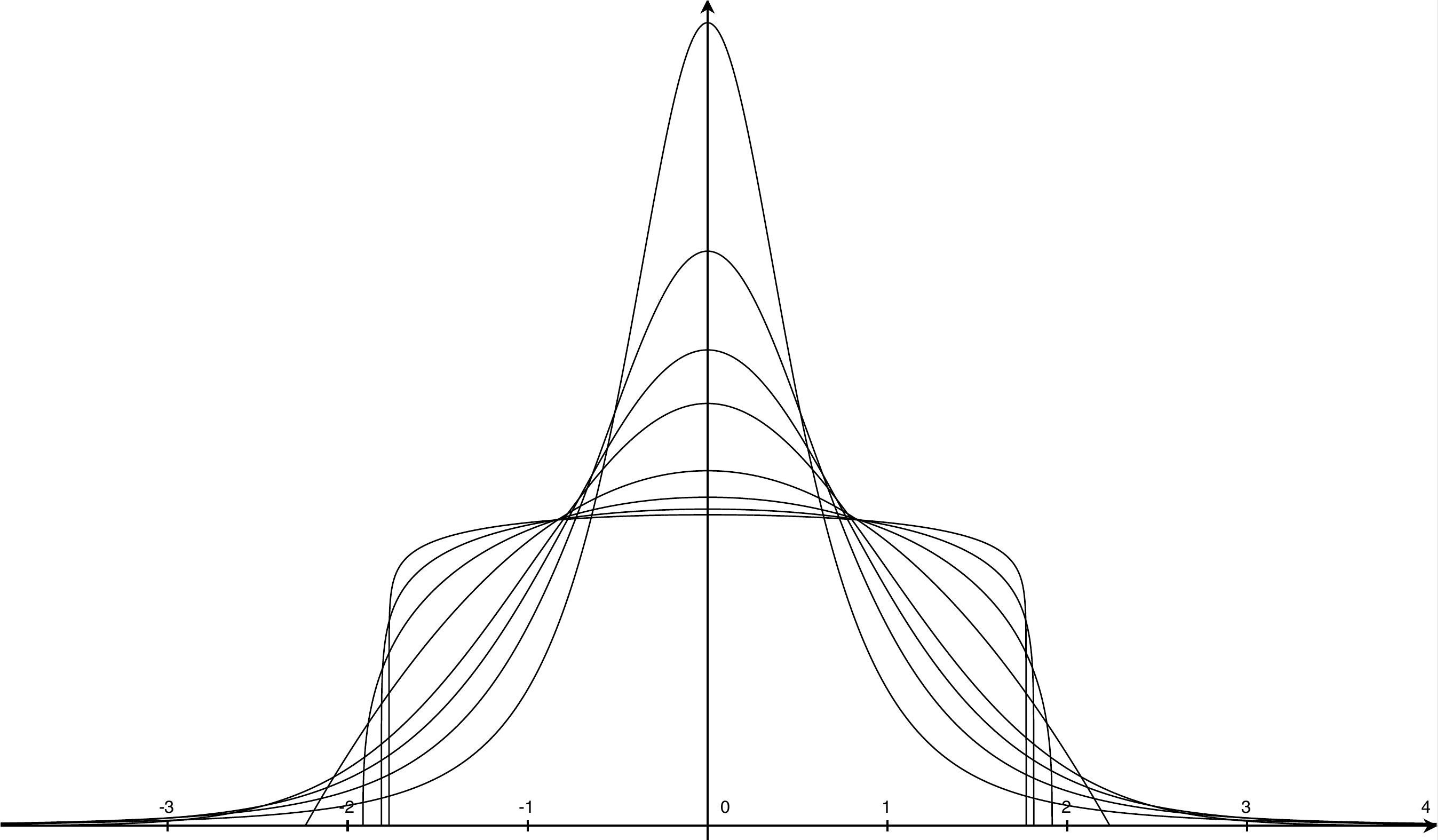}
\setlength{\unitlength}{0.55pt}
\begin{picture}(0,0)
\put(-217,167){\tiny$\alpha\!=\!\frac{1}{\sqrt{3}}$}
\put(-217,120){\tiny$\alpha\!=\!1$}
\put(-135,85){\tiny$\alpha\!=\!16$}
\end{picture}
\caption{$\alpha$-Gaussian distributions~(\protect\ref{alpha-Gaussian}) for $\alpha=3^{-3/4}$, $3^{-1/2}$, $3^{-1/4}$, $1$, $2$, $4$, $8$, $16$.} \label{fig1}
\end{figure}

\begin{example}\label{ex-alpha-sigma}
When $\alpha\to 1$ we recover~\eqref{maxentvar} attained for the Gaussian density. As other examples we have%
\begin{align}
h_{\frac{1}{2}}(\mathcal{X})&\leq \log(2\pi\sigma_{\mathcal{X}})\\
h_{\frac{2}{3}}(\mathcal{X})&\leq  \log\bigl(\frac{8\pi \,\sigma_{\mathcal{X}}}{3\sqrt{3}} \bigr) \\
h_{2}(\mathcal{X})&\leq 
\log\bigl(\frac{5\sqrt{5} \,\sigma_{\mathcal{X}}}{3}\bigr)\\
h_{3}(\mathcal{X})&\leq 
\log\Bigl(\frac{2{\pi\, \sigma_{\mathcal{X}}}}{\sqrt{3}}\Bigr)
\end{align}
with equality iff $\mathcal{X}$ is $\frac{1}{2}$-Gaussian, $\frac{2}{3}$-Gaussian, $2$-Gaussian and $3$-Gaussian, respectively. 
\end{example}

\paragraph{Mean parameter} 
This can be seen as a particular case of Theorem~\ref{lem-three} by setting
$\rho=1$  under the one-sided constraint $\mathcal{X}\geq 0$.
A direct derivation is as follows.
We assume that $\mathcal{X}> 0$ a.s. with parameter $\theta=\mu_{\mathcal{X}}$. For $\alpha=1$ we set $T(x)=\frac{x}{\mu_{\mathcal{X}}}$ so that $\phi$ is the exponential density, moment $m=1$, partition $Z=\mu_{\mathcal{X}}$ and~\eqref{ineq-general-continuous} reduces to another Shannon bound~\cite[\S~20.7]{Shannon48} 
\begin{equation}\label{maxentmean}
h(\mathcal{X})\leq \log(e\mu_\mathcal{X}) 
\end{equation}
with equality iff $\mathcal{X}$ is exponential.

For $\alpha\ne 1$, we set $T(x)$ in the form $T(x)=1+\beta\cdot\frac{x}{\mu_{\mathcal{X}}}$ so that $m=1+\beta$ and $\beta$ is such that~\eqref{eq-phi-alpha} has finite mean $\mu_{\mathcal{X}}$. The corresponding density~$\phi$ can be named ``$\alpha$-exponential''.  
Under these assumptions, one has $\alpha>\frac{1}{2}$, $\beta=\frac{1-\alpha}{2\alpha-1}$, 
and both~\eqref{ineq-general-continuous-alpha}\,and\,\eqref{ineq-general-continuous-alpha-Z} reduce to the following
\begin{corollary}\label{lem-two}
For any continuous random variable $\mathcal{X}$ with differential $\alpha$-entropy $h_\alpha(\mathcal{X})$,
\begin{equation}\label{maxentmean-alpha}
h_\alpha(\mathcal{X}) \!\leq\! \log\mu_{\mathcal{X}} + \frac{\alpha}{1\!-\!\alpha} \log \frac{\alpha}{2\alpha\!-\!1} 
\!=\!\log\mu_{\mathcal{X}} + \frac{\alpha}{\alpha\!-\!1} \log \frac{2\alpha\!-\!1}{\alpha\mathstrut}
\end{equation}
with equality iff $\mathcal{X}$ is $\alpha$-exponential.
\end{corollary}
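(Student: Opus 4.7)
The plan is to apply the $\alpha$-Kullback inequality (Theorem~\ref{alpha-kullback}) on the half-line $(0,\infty)$ with the affine choice $T(x) = 1 + \beta\,x/\mu_{\mathcal{X}}$, paralleling the derivation for the variance parameter but replacing the quadratic form by a linear one (i.e.\ $\rho=1$ in place of $\rho=2$). Equivalently, the corollary is just the one-sided $\rho=1$ specialization of Theorem~\ref{lem-three}, so one could alternatively substitute $\rho=1$ into \eqref{eq-maxentgen-alpha} (with the factor $2$ removed) and simplify.

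For the direct approach, I would first determine $\beta$ so that the extremal density $\phi(x) \propto T(x)^{1/(\alpha-1)}$ from \eqref{eq-phi-alpha} has mean exactly $\mu_{\mathcal{X}}$. A change of variable $u = T(x)$ reduces both the normalization and the mean integrals to elementary Beta-type integrals, and imposing $\int_0^\infty x\,\phi(x)\dx = \mu_{\mathcal{X}}$ yields $\beta = (1-\alpha)/(2\alpha-1)$, hence $m=\E[T(\mathcal{X})] = 1+\beta = \alpha/(2\alpha-1)$. Checking convergence of $\int\phi$ and $\int\phi^\alpha$ pins down the admissible parameter range $\alpha>1/2$: for $1/2<\alpha<1$, $\beta>0$ and $\phi$ is a heavy-tailed one-sided density on $(0,\infty)$, whereas for $\alpha>1$, $\beta<0$ and $\phi$ is supported on the bounded interval $(0,-\mu_{\mathcal{X}}/\beta)$. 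The same substitution gives a closed-form expression for $Z = \int_0^\infty T(x)^{1/(\alpha-1)}\dx$ in each regime.

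Because $\E[T(\mathcal{X})] = m$ holds by construction also when $\mathcal{X}\sim\phi$, I may invoke the simplified form \eqref{ineq-general-continuous-alpha-Z} rather than \eqref{ineq-general-continuous-alpha}, namely $h_\alpha(\mathcal{X}) \leq \frac{\log m}{1-\alpha} + \log Z$. Substituting the values of $m$ and $Z$ and cancelling should produce the stated bound, and the equality clause of Theorem~\ref{alpha-kullback} identifies the maximizer $\phi$ as the $\alpha$-exponential density by definition.

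The step I expect to be the main obstacle is verifying that the Gamma-function factors arising from $Z$ collapse to the clean single expression $\log\mu_{\mathcal{X}} + \frac{\alpha}{1-\alpha}\log\frac{\alpha}{2\alpha-1}$ uniformly in $\alpha$. The cases $\alpha<1$ and $\alpha>1$ must be treated separately, since the sign of $\beta$ and hence the support of $\phi$ differ, yet they must yield the same closed form; the reconciliation relies on the identity $\Gamma(z+1)=z\,\Gamma(z)$ applied with an integer shift, which is precisely the ``miracle'' that makes the one-sided $\rho=1$ case much simpler than the two-sided $\rho=2$ case of Corollary~\ref{lem-one}, and explains the absence of any residual Gamma factor in \eqref{maxentmean-alpha}.
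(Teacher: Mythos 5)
Your proposal is correct and matches the paper's own derivation: the paper obtains Corollary~\ref{lem-two} exactly as the one-sided $\rho=1$ case of Theorem~\ref{lem-three} (equivalently, the $\alpha$-Kullback inequality with $T(x)=1+\beta x/\mu_{\mathcal{X}}$, $\beta=\frac{1-\alpha}{2\alpha-1}$, $m=\frac{\alpha}{2\alpha-1}$, and $Z_\alpha=\mu_{\mathcal{X}}$ via~\eqref{eq-Zalpha-mean}), with the Gamma factors collapsing just as you describe. No gaps to report.
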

\begin{proof}
See Appendix~\ref{app-lem}, where the expression of the $\alpha$-exponential is given in~\eqref{alpha-exp}.
\end{proof}
Fig.~\ref{fig2} plots $\alpha$-exponential densities for different values of~$\alpha$. For $\alpha<1$, $\phi$ is a Pareto Type II distribution with shape parameter $\frac{\alpha}{1-\alpha}$, also known as the Lomax density. %

\begin{figure}[htb!]
\centering
\includegraphics[width=.9\linewidth]{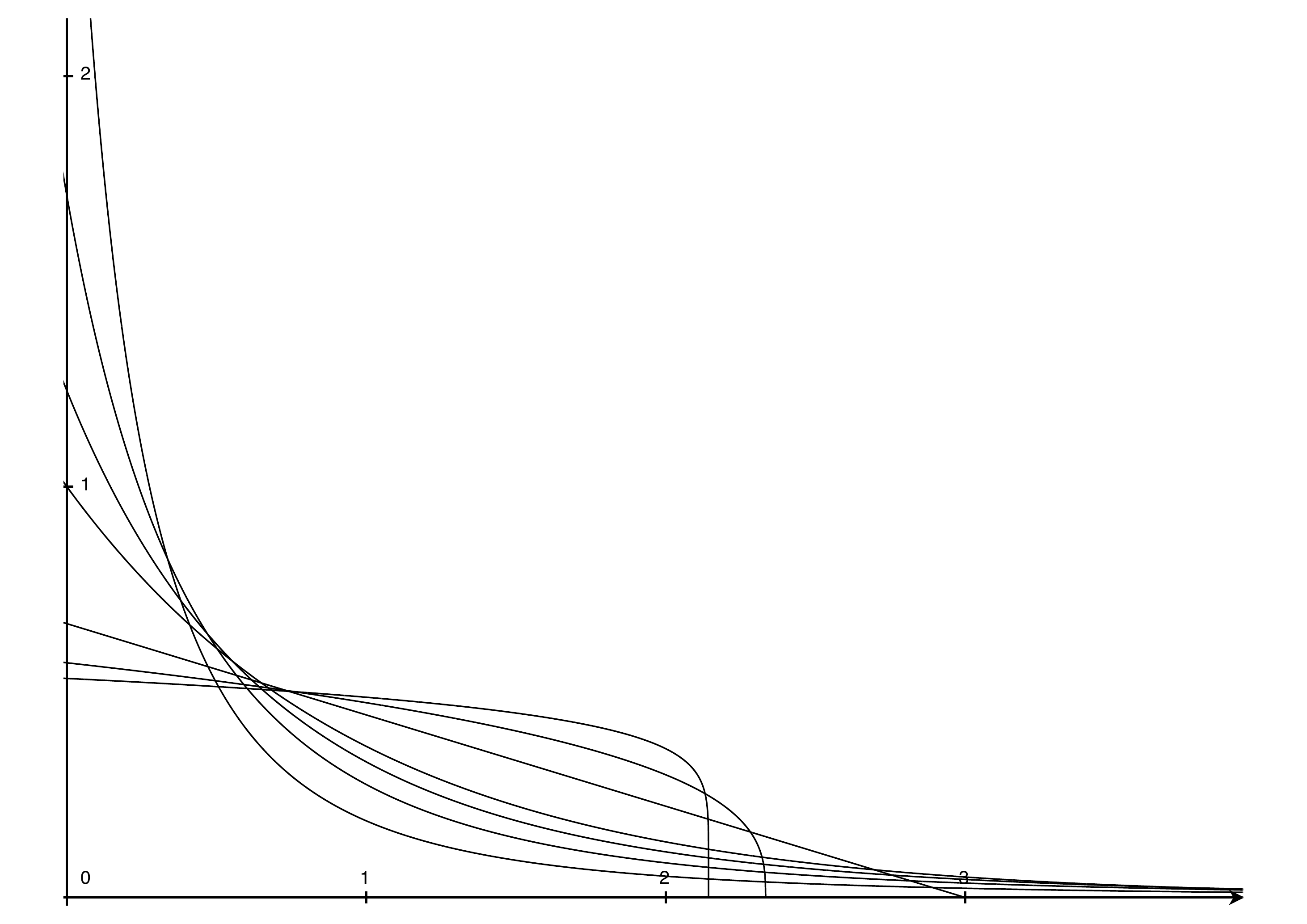}
\setlength{\unitlength}{0.58pt}
\begin{picture}(0,0)
\put(-300,35){\tiny$\alpha\!=\!\tfrac{1}{\sqrt{2}}$}
\put(-255,38){\tiny$\alpha\!=\!1$}
\put(-235,65){\tiny$\alpha\!=\!8$}
\end{picture}
\caption{$\alpha$-exponential distributions~(\protect\ref{alpha-exp}) for $\alpha=2^{-3/4}$, $2^{-1/2}$, $2^{-1/4}$, $1$, $2$, $4$, $8$.} \label{fig2}
\end{figure}

\begin{example}\label{ex-alpha-mu}
When $\alpha\to 1$ we recover~\eqref{maxentmean} attained for the exponential density. As other examples we have%
\begin{align}
h_{\frac{2}{3}}(\mathcal{X})&\leq \log (4\mu_{\mathcal{X}})\\
h_{\frac{3}{4}}(\mathcal{X})&\leq \log\frac{27\mu_{\mathcal{X}}}{8}\\
h_{2}(\mathcal{X})&\leq \log\frac{9\mu_{\mathcal{X}}}{4}
\end{align}
with equality iff $\mathcal{X}$ is $\frac{2}{3}$-exponential, $\frac{3}{4}$-exponential, and $2$-exponential, respectively. 
\end{example}

\section{Alternative Bounding Techniques}\label{sec-alt}

\subsection{Mixed Discrete-Continuous Inequalities of the Kullback Type}

Instead of applying Kullback inequalities~\eqref{ineq-general-continuous} or~\eqref{ineq-general-continuous-alpha} on $\mathcal{X}=X+\mathcal{U}$, it is possible, as an alternative, to apply a similar inequality directly on the discrete entropy of $X$ but using the same  probability density functions.

\begin{theorem}[Case $\alpha=1$]%
\label{thm-two}%
 Let $X$ be a discrete random variable and let $\mathcal{X}$ be the random variable having density~\eqref{eq-phi}:
 \begin{equation}\label{eq-phif}
f(x) \triangleq \frac{e^{-T(x)} }{Z} 
\end{equation}
such that the ``moment'' $\E[T(X)]=\E [T(\mathcal{X})]= m$ is a fixed quantity. Then 
\begin{equation}\label{ineq-general-discrete}
H(X) \leq h(\mathcal{X}) + \log Z'  %
\end{equation}
where %
\begin{equation}\label{eq-Z'}
Z'=\sum_{x} f(x),
\end{equation}
the sum being over all discrete values $x$ of $X$.
\end{theorem}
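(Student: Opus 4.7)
The plan is to mimic the proof of Kullback's inequality~\eqref{ineq-general-continuous} but working entirely in the discrete world, using a discretization of $f$ as the comparison distribution. Concretely, I would introduce the probability mass function $q(x) \triangleq f(x)/Z'$ defined on the discrete values taken by $X$; by the definition of $Z'$ in~\eqref{eq-Z'}, this is a legitimate PMF. The role played by $\phi$ in the proof of Kullback's inequality will now be played by $q$.

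Next I would apply the discrete information inequality $D(p\|q) \geq 0$, where $p(x)=\P(X\!=\!x)$, which immediately yields the Gibbs-type bound
\begin{equation*}
H(X) \leq -\sum_x p(x) \log q(x) = -\E[\log f(X)] + \log Z'.
\end{equation*}
Substituting the exponential form $f(x) = e^{-T(x)}/Z$ and invoking the hypothesis $\E[T(X)] = m$ turns the middle term into $m \log e + \log Z$. On the other hand, a direct computation of the differential entropy of $\mathcal{X}\sim f$, using $\E[T(\mathcal{X})] = m$, gives exactly the same quantity:
\begin{equation*}
h(\mathcal{X}) = -\int f(x)\log f(x)\dx = \E[T(\mathcal{X})]\log e + \log Z = m \log e + \log Z.
\end{equation*}
Combining the two identities yields $H(X) \leq h(\mathcal{X}) + \log Z'$, as claimed.

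There is no real obstacle here; the argument is essentially bookkeeping once the right comparison PMF $q = f/Z'$ is identified. The conceptual content lies precisely in that choice: $q$ is the renormalization of $f$ onto the discrete support of $X$, and the matching-moments hypothesis $\E[T(X)] = \E[T(\mathcal{X})] = m$ is exactly what makes the discrete cross-entropy $-\E[\log f(X)]$ coincide with the continuous entropy $h(\mathcal{X})$, up to the correction $\log Z'$ that accounts for the fact that the density values $f(x)$ do not automatically sum to one over the discrete support of $X$.
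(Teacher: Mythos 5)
Your proposal is correct and is essentially identical to the paper's own proof: it uses the same comparison PMF $q=f/Z'$, the same discrete information (Gibbs) inequality $D(p\|q)\geq 0$, and the same identification $-\E\log f(X)=m\log e+\log Z=h(\mathcal{X})$ via the equality case of Kullback's inequality.
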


\begin{proof}
Apply the information inequality $D(p\|q)\geq 0$ to $p(x)=\P(X\!=\!x)$, the probability distribution of $X$,
and to $q(x)=\frac{f(x)}{Z'}$, which is also a discrete probability distribution on the same alphabet because of the normalization constant $Z'$. We obtain Gibbs' inequality in the form $H(X) \leq -\E \log q(X) = -\E \log f(X) +\log Z'$ where $-\E \log f(X) = \E[T(X)] \log e +\log Z =  \E[T(\mathcal{X})] \log e +\log Z=h(\mathcal{X})$ by the equality case in~\eqref{ineq-general-continuous}.
\end{proof}

\begin{theorem}[Case $\alpha\ne 1$]\label{thm-two-alpha}%
 Let $X$ be a discrete random variable and let $\mathcal{X}$ be the random variable having density~\eqref{eq-phi-alpha}: 
\begin{equation}\label{eq-phi-alphaf}
f(x)\triangleq \frac{{T(x)}^{\frac{1}{\alpha-1}}}{Z} 
\end{equation}
such that the ``moment'' $\E[T(X)]=\E [T(\mathcal{X})]= m$ is a fixed quantity. Then 
\begin{equation}\label{ineq-general-discrete-alpha}
H_\alpha(X) \leq h_\alpha(\mathcal{X}) + \log Z'_\alpha  %
\end{equation}
where %
\begin{equation}\label{eq-Z'alpha}
Z'_\alpha=\sum_{x} f_\alpha(x),
\end{equation}
and $f_\alpha=\frac{f^\alpha}{\int f^\alpha}$ is the $\alpha$-escort density of $f$, the sum being over all discrete values $x$ of $X$.
\end{theorem}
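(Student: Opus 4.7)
The plan is to mimic the proof of Theorem~\ref{thm-two} but with the $\alpha$-Gibbs inequality~\eqref{gibbs-alpha} replacing the ordinary Gibbs inequality~\eqref{gibbs}. On the discrete support of $X$, the natural reference mass function is the renormalized restriction
\[
q(x) \triangleq \frac{f(x)}{Z'}, \qquad Z' = \sum_x f(x),
\]
of the density $f$. Its $\alpha$-escort simplifies nicely:
\[
q_\alpha(x) = \frac{q(x)^\alpha}{\sum_y q(y)^\alpha} = \frac{f(x)^\alpha}{\sum_y f(y)^\alpha} = \frac{f_\alpha(x)}{Z'_\alpha},
\]
since the constants $Z$ and $\int f^\alpha\dx$ cancel out and $Z'_\alpha = \sum_x f_\alpha(x)$ is precisely the quantity defined in~\eqref{eq-Z'alpha}.

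Next I would evaluate the right-hand side of the $\alpha$-Gibbs bound~\eqref{gibbs-alpha} applied with the reference distribution $q$ in place of $\phi$. Using the explicit form $f_\alpha(x) = T(x)^{\alpha/(\alpha-1)}/Z_\alpha$ that comes out of the proof of Theorem~\ref{alpha-kullback}, the exponent $1-1/\alpha = (\alpha-1)/\alpha$ collapses the power of $T$ back to $T$ itself:
\[
q_\alpha(x)^{1-1/\alpha} = \frac{T(x)}{(Z_\alpha Z'_\alpha)^{(\alpha-1)/\alpha}}.
\]
Taking expectation with respect to the law of $X$ and using the moment identity $\E[T(X)] = m$ gives
\[
\E\bigl[q_\alpha(X)^{1-1/\alpha}\bigr] = \frac{m}{(Z_\alpha Z'_\alpha)^{(\alpha-1)/\alpha}},
\]
and substituting into~\eqref{gibbs-alpha} produces
\[
H_\alpha(X) \leq \frac{\alpha}{1-\alpha}\log m + \log Z_\alpha + \log Z'_\alpha.
\]

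To finish, I would invoke the equality case of the continuous $\alpha$-Kullback inequality (Theorem~\ref{alpha-kullback}) applied to $\mathcal{X}\sim f$: since $f$ is exactly the density~\eqref{eq-phi-alpha} that saturates~\eqref{ineq-general-continuous-alpha} and since $\E[T(\mathcal{X})]=m$ by hypothesis, the first two terms collapse to $h_\alpha(\mathcal{X})$, yielding~\eqref{ineq-general-discrete-alpha}. I expect the main subtlety to lie in the escort identity $q_\alpha = f_\alpha/Z'_\alpha$, because $Z'_\alpha$ has a slightly mixed discrete--continuous flavour (a discrete sum of values of a \emph{continuous} density, normalized by a continuous integral). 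Once this cancellation of normalizations is verified, the rest is bookkeeping with the four constants $Z,Z_\alpha,Z',Z'_\alpha$, and the argument applies uniformly across $\alpha<1$ and $\alpha>1$ since the sign of $\tfrac{\alpha}{1-\alpha}$ is already absorbed into the formulation of~\eqref{gibbs-alpha}.
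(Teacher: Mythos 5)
Your proposal is correct and takes essentially the same route as the paper's proof: the same reference distribution $q=f/Z'$ on the support of $X$, the same escort identity $q_\alpha=f_\alpha/Z'_\alpha$, and the same collapse of $\frac{\alpha}{1-\alpha}\log m+\log Z_\alpha$ into $h_\alpha(\mathcal{X})$ via the equality case of Theorem~\ref{alpha-kullback}. The only nuance is that the bound you invoke must be the \emph{discrete} counterpart of~\eqref{gibbs-alpha} (the paper states it as~\eqref{gibbs-alpha-z}, obtained from nonnegativity of the discrete relative $\alpha$-entropy), not the continuous inequality itself, but the derivation is verbatim the same.
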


\begin{proof}
Let $D_\alpha(p\|q)=\frac{1}{\alpha-1}\log\sum p^\alpha(x) q^{1-\alpha}(x)$ be the Rényi $\alpha$-divergence~\cite{vanErvenHarremoes14} between the distribution~$p$ of a discrete random variable $X$ and some probability distribution $q$ defined over the same alphabet. We have $D_\alpha(p\|q)\geq 0$ with equality iff $p=q$~a.e. 
Denoting the ``escort'' distributions of exponent $\alpha$ by $p_\alpha(x)=\frac{p^\alpha(x)}{\sum p^\alpha(x)}$ and $q_\alpha(x)=\frac{q^\alpha(x)}{\sum q^\alpha(x)}$, the \emph{relative $\alpha$-entropy}~\cite{LapidothPfister16} 
between $p$ and $q$ is defined as
\begin{equation}
\Delta_\alpha (p\|q) \triangleq D_{1/\alpha} (p_\alpha\|q_\alpha) \geq 0
\end{equation}
with equality $=0$ iff $p=q$ a.e.
Expanding $D_{1/\alpha} (p_\alpha\|q_\alpha)$ similarly as in~\cite[Prop.~8]{Rioul20} gives the following $\alpha$-Gibbs' inequality
which generalizes the discrete Gibbs inequality:
\begin{equation}\label{gibbs-alpha-z}
H_\alpha({X}) \leq \frac{\alpha}{1-\alpha} \log \E \,q_\alpha^{1-\frac{1}{\alpha}}({X})
\end{equation}
with equality iff $p=q$ a.e. 
Now apply~\eqref{gibbs-alpha-z} to $p(x)=\P(X\!=\!x)$, the probability distribution of $X$, and to $q(x)=\frac{f(x)}{Z'}$ with the normalization constant $Z'=\sum_x f(x)$, which is also a discrete probability distribution on the same alphabet. Since $q_\alpha(x)=\frac{f_\alpha(x)}{Z'_\alpha}$, we obtain $H_\alpha(X) \leq \frac{\alpha}{1-\alpha} \log \E \,q_\alpha^{1-\frac{1}{\alpha}}({X}) = \frac{\alpha}{1-\alpha} \log \E \,f_\alpha^{1-\frac{1}{\alpha}}({X}) +\log Z'_\alpha$ where \smallskip$\frac{\alpha}{1-\alpha} \log \E \,f_\alpha^{1-\frac{1}{\alpha}}({X})=\frac{\alpha}{1-\alpha}\log \E[T(X)] +\log Z_\alpha =\smallskip  \frac{\alpha}{1-\alpha}\log \E[T(\mathcal{X})] +\log Z_\alpha = h_\alpha(\mathcal{X})$ by the equality case in~\eqref{ineq-general-continuous-alpha}.
\end{proof}

\begin{remark}
Similary as for~\eqref{eq-hH=}, notice that~\eqref{ineq-general-discrete} and~\eqref{ineq-general-discrete-alpha} are invariant by \emph{scaling}: if $\Delta>0$, $H_\alpha(\Delta X)=H_\alpha(X)$ while $h_\alpha(\Delta\mathcal{X})=h_\alpha(\mathcal{X})+\log\Delta$, hence under scaling by $\Delta$, $Z'_\alpha$ is divided by~$\Delta$, and the r.h.s. of~\eqref{ineq-general-discrete-alpha} becomes $h_\alpha(\mathcal{X})+\log\Delta+\log(Z'_\alpha/\Delta)=h_\alpha(\mathcal{X})+\log Z'_\alpha$.
\end{remark}

\subsection{Examples of Mixed Inequalities of the Kullback Type}\label{subsec-mixed-ex}

As in the preceding section, we illustrate the bounding method for an integer-valued $X$ in three situations:
\paragraph{Support length parameter} 
$X$ has finite support $\{k,\ldots,k+\ell\}$ of length $\ell\geq 0$, $\mathcal{X}$ is uniformly distributed on an interval $(a,b)$ that includes $\{k,k+\ell\}$. Then $h(\mathcal{X})=h_\alpha(\mathcal{X})=\log (b-a)$,
$f=f_\alpha$, $Z'_\alpha=\sum_x \frac{1}{b-a} = \frac{\ell+1}{b-a}$ so that~\eqref{ineq-general-discrete} and~\eqref{ineq-general-discrete-alpha} reduce to the known bound $H_\alpha(X)\leq \log (b-a) + \log \frac{\ell+1}{b-a} = \log(\ell+1)$ achieved when $X$ is equiprobable.

\paragraph{Variance parameter} 

\begin{corollary}
Let $X$ be integer-valued with finite mean $\mu$ and variance $\sigma^2$.
Then
\begin{equation}\label{ineq-sigma}
H(X) \leq  \tfrac{1}{2}\log(2\pi e\sigma^2) +\log \sum_x \frac{e^{-\frac{1}{2}(\frac{x-\mu}{\sigma})^2}}{\sqrt{2\pi\sigma^2}},
\end{equation} 
which can be simplified as
\begin{equation}\label{ineq-sigmabis}
H(X) \leq  \frac{\log e}{2} +\log \sum_x e^{-\frac{1}{2}(\frac{x-\mu}{\sigma})^2},
\end{equation}
the sums being taken over all nonnegative integer values $x$ of~$X$.

For $\alpha>\frac{1}{3}$ and any integer-valued $X$ with mean $\mu$ and variance $\sigma^2$, 
\begin{equation}\label{ineq-sigma-alpha}
H_\alpha(X)\leq   \frac{\alpha}{1\!-\!\alpha}\log \frac{2\alpha}{3\alpha\!-\!1} 
+\log\sum_x
\Bigl(1+\frac{1\!-\!\alpha}{3\alpha\!-\!1} \bigl(\frac{x\!-\!\mu}{\sigma}\bigr)^2\Bigr)_{\!\!+}^{\frac{\alpha}{\alpha\!-\!1}}
\end{equation}
where the sum is taken over all integer values $x$ of $X$.
\end{corollary}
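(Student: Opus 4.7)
This corollary is a direct application of the mixed discrete-continuous Kullback-type inequalities of Theorems~\ref{thm-two} and~\ref{thm-two-alpha}, specialized to the variance parameter case treated in Section~\ref{sec-two-ex}. The idea is to choose exactly the same $T$ and density $f$ as in the continuous variance bounds~\eqref{maxentvar} and~\eqref{maxentvar-alpha}, and then observe that the moment constraint $\E[T(X)]=\E[T(\mathcal{X})]$ is automatically satisfied because $X$ and $\mathcal{X}$ share mean $\mu$ and variance $\sigma^2$.

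For the $\alpha=1$ case, I set $T(x)=\tfrac{1}{2}\bigl(\tfrac{x-\mu}{\sigma}\bigr)^2$ so that the density~\eqref{eq-phif} is the Gaussian with mean $\mu$ and variance $\sigma^2$, with $Z=\sqrt{2\pi\sigma^2}$ and $h(\mathcal{X})=\tfrac{1}{2}\log(2\pi e\sigma^2)$. The moment condition $\E[T(X)]=\tfrac{1}{2\sigma^2}\E[(X-\mu)^2]=\tfrac{1}{2}=\E[T(\mathcal{X})]$ holds by hypothesis, so Theorem~\ref{thm-two} immediately yields~\eqref{ineq-sigma}. The simplified form~\eqref{ineq-sigmabis} follows by pulling the constant $1/\sqrt{2\pi\sigma^2}$ out of the sum in~\eqref{ineq-sigma} and noting the cancellation $-\tfrac{1}{2}\log(2\pi\sigma^2)+\tfrac{1}{2}\log(2\pi e\sigma^2)=\tfrac{1}{2}\log e$.

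For the $\alpha\neq 1$ case, I set $T(x)=1+\beta\bigl(\tfrac{x-\mu}{\sigma}\bigr)^2$ with $\beta=\tfrac{1-\alpha}{3\alpha-1}$, exactly as in the derivation of Corollary~\ref{lem-one}, so that $\mathcal{X}$ is the $\alpha$-Gaussian of mean $\mu$ and variance $\sigma^2$ and $m=1+\beta=\tfrac{2\alpha}{3\alpha-1}$. Again $\E[T(X)]=1+\beta=m$ by the variance hypothesis on $X$, so Theorem~\ref{thm-two-alpha} gives $H_\alpha(X)\leq h_\alpha(\mathcal{X})+\log Z'_\alpha$ with $Z'_\alpha=\tfrac{1}{Z_\alpha}\sum_x T(x)_+^{\alpha/(\alpha-1)}$. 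Now applying the equality case of Theorem~\ref{alpha-kullback} to the $\alpha$-Gaussian $\mathcal{X}$ yields the compact identity $h_\alpha(\mathcal{X})=\tfrac{\alpha}{1-\alpha}\log m+\log Z_\alpha$, whence
\[
h_\alpha(\mathcal{X})+\log Z'_\alpha=\tfrac{\alpha}{1-\alpha}\log\tfrac{2\alpha}{3\alpha-1}+\log\sum_x\Bigl(1+\tfrac{1-\alpha}{3\alpha-1}\bigl(\tfrac{x-\mu}{\sigma}\bigr)^2\Bigr)_+^{\alpha/(\alpha-1)},
\]
which is exactly~\eqref{ineq-sigma-alpha}.

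There is essentially no technical obstacle here; the work was done in establishing Theorems~\ref{thm-two} and~\ref{thm-two-alpha}. The only point worth tracking is that the unified formula~\eqref{ineq-sigma-alpha} handles both regimes simultaneously: for $\tfrac{1}{3}<\alpha<1$ one has $\beta>0$ so $T(x)$ is positive everywhere and the $(\cdot)_+$ notation is cosmetic, whereas for $\alpha>1$ one has $\beta<0$ so the $\alpha$-Gaussian has bounded support and $(\cdot)_+$ is genuinely needed to zero out integer terms falling outside that support. Expressing $h_\alpha(\mathcal{X})$ through the compact form $\tfrac{\alpha}{1-\alpha}\log m+\log Z_\alpha$ rather than through the Gamma-function expressions in Corollary~\ref{lem-one} conveniently avoids splitting the proof into the two sub-cases.
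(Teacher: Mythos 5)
Your proposal is correct and follows essentially the same route as the paper: apply Theorem~\ref{thm-two} with the Gaussian $\mathcal{N}(\mu,\sigma^2)$ for $\alpha=1$ and Theorem~\ref{thm-two-alpha} with the $\alpha$-Gaussian of the same mean and variance for $\alpha\neq 1$, using $h_\alpha(\mathcal{X})=\frac{\alpha}{1-\alpha}\log(1+\beta)+\log Z_\alpha$ so that $\log Z_\alpha$ cancels against $\log Z'_\alpha$. Your explicit checks of the moment condition and of the cancellation leading to~\eqref{ineq-sigmabis} are details the paper leaves implicit, but the argument is the same.
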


\begin{proof}
For $\alpha=1$ we take $\mathcal{X}\sim\mathcal{N}(\mu,\sigma^2)$ of differential entropy $h(\mathcal{X})=\frac{1}{2}\log(2\pi e\sigma^2)$.
Theorem~\ref{thm-two} then gives~\eqref{ineq-sigma}.

For $\alpha\ne 1$ we take $\mathcal{X}$ to be $\alpha$-Gaussian of parameters $(\mu_{\mathcal{X}}=\mu,\sigma^2_{\mathcal{X}}=\sigma^2)$ and differential entropy $h_\alpha(\mathcal{X}) = \frac{\alpha}{1-\alpha} \log (1+\beta) + \log Z_\alpha$, given by the r.h.s. of~\eqref{maxentvar-alpha}. 
From the expression of an $\alpha$-Gaussian~\eqref{alpha-Gaussian}, we have
$
f_\alpha(x) = \frac{1}{Z_\alpha} \bigl(1+\beta (\frac{x-\mu}{\sigma})^2\bigr)_+^{\frac{\alpha}{\alpha-1}}
$
where $\beta=\frac{1-\alpha}{3\alpha-1}$ and $Z_\alpha$ is given by~\eqref{eq-Zalpha-sigma}.
Therefore, Theorem~\ref{thm-two-alpha} gives~\eqref{ineq-sigma-alpha}.
\end{proof}

\begin{remark}\label{rmk-muindep}
It may appear peculiar that the upper bound in~\eqref{ineq-sigma},~\eqref{ineq-sigmabis}  or~\eqref{ineq-sigma-alpha} depends on the mean $\mu=\E(X)$ while the entropy $H_\alpha(X)$ should not. But this upper bound is, in fact, invariant by translation $X+c$ (where $c\in\Z$ because of the constraint of integer-valued variables), as is readily seen by making a change of variables in the sum, e.g., $\sum_x e^{-\frac{1}{2}(\frac{x-(\mu+c)}{\sigma})^2}=\sum_x e^{-\frac{1}{2}(\frac{x-\mu}{\sigma})^2}$. In other words, the upper bound in~\eqref{ineq-sigma},~\eqref{ineq-sigmabis} or~\eqref{ineq-sigma-alpha} depends only on $\mu$'s fractional part $\{\mu\}=\mu\bmod1$.
\end{remark}

\begin{remark}\label{rmk-partialsum}
The sum in~\eqref{ineq-sigma},~\eqref{ineq-sigmabis} or~\eqref{ineq-sigma-alpha} does not need to be taken over \textit{all} integers if the support of $X$ is limited. A tighter bound always results if one takes the sum only on those integers actually taken by the variable.
In particular, when $\alpha>1$, the sum in~\eqref{ineq-sigma-alpha} is restricted to values $x$ in the interval $|x-\mu|<\sqrt{\frac{3\alpha-1}{\alpha-1}}$.  
\end{remark}

\begin{remark}
 For large variance, the unsimplified expression~\eqref{ineq-sigma} is perhaps preferable because its second term
can be made small (see Example~\ref{rmk-historyPoisson} below). It should be noted, however, that for moderate values of the variance, the obtained bound in the simplified expression~\eqref{ineq-sigmabis} can be valuable. 
For example, when $X\sim\mathcal{B}(p)$ is a Bernoulli random variable of entropy $H_{\mathrm{b}}(p)=p\log\frac{1}{p}+(1-p)\log\frac{1}{1-p}$, the sum in~\eqref{ineq-sigmabis}  has only two terms:
\begin{equation}\label{ineq-moustache}
H_{\mathrm{b}}(p) \leq  \log \bigl( e^{\frac{\frac{1}{2}-p}{1-p}}+e^{\frac{p-\frac{1}{2}}{p}}\bigr).
\end{equation}
This is illustrated in Fig.~\ref{figmoustache}.
On the scale of the figure, when the variance is not too small ($|p-\frac{1}{2}|<0.2$), the two curves are indistinguishable, while in comparison Massey's original bound~\eqref{ineq-Massey} is much looser.
\end{remark}
\begin{figure}[htbp!]
\centering
\includegraphics[width=.7\linewidth]{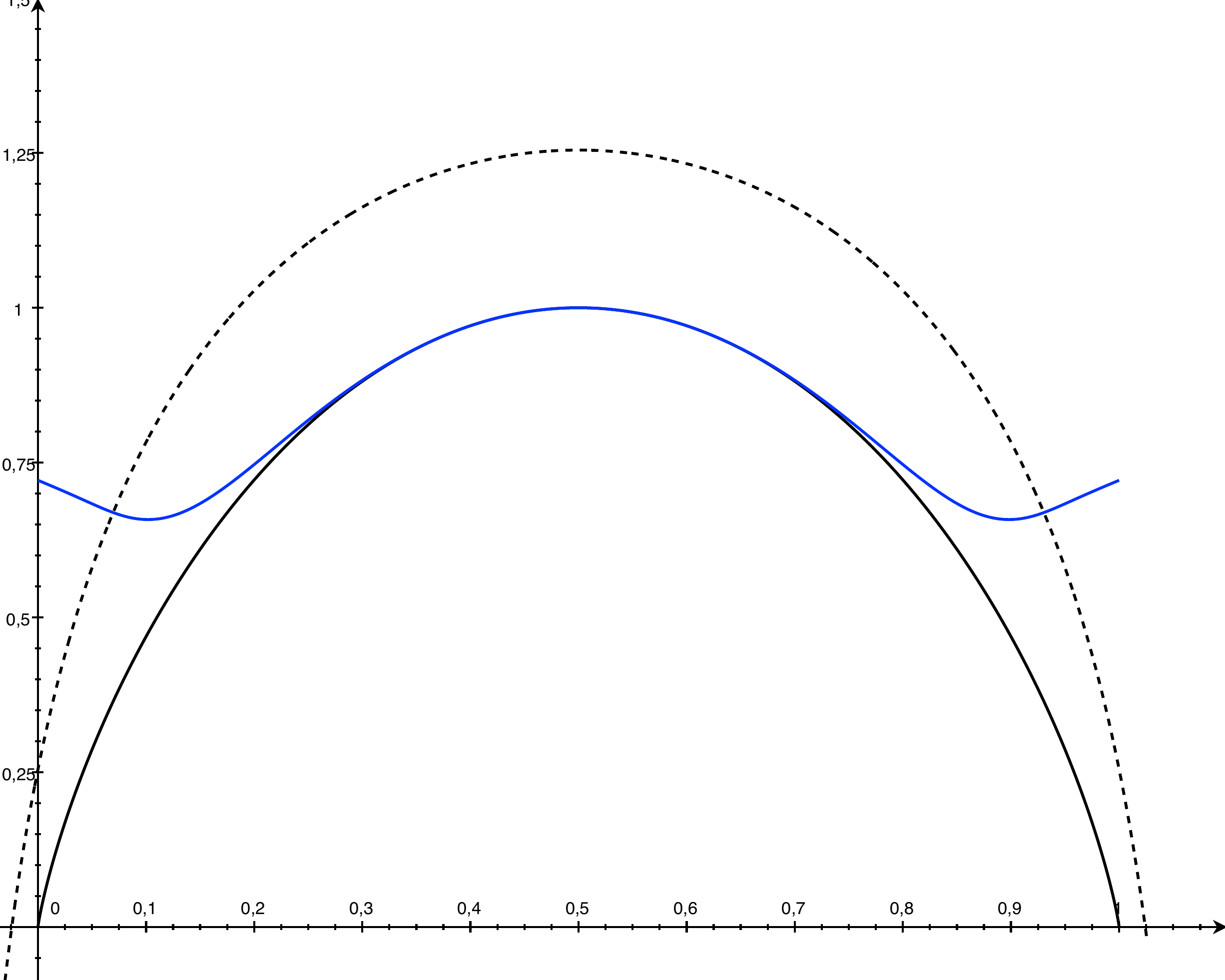}
\setlength{\unitlength}{0.55pt}
\begin{picture}(0,0)
\put(-20,20){\tiny$p$}
\put(-250,230){\tiny$\frac{1}{2}\log\bigl(2\pi e (p(1-p)+\frac{1}{12})\bigr)$}
\put(-275,115){\tiny$H_{\mathrm{b}}(p)=p\log\frac{1}{p}+(1-p)\log\frac{1}{1-p}$}
\put(-245,176){\textcolor{blue}{\footnotesize$\log \bigl( e^{\frac{\frac{1}{2}-p}{1-p}}+e^{\frac{p-\frac{1}{2}}{p}}\bigr)$}}
\end{picture}
\caption{Moustache bound~\eqref{ineq-moustache} (blue) vs. Massey's bound~\eqref{ineq-Massey} (dashed) on the binary entropy function (in bits).
} \label{figmoustache}
\end{figure}%

\paragraph{Mean parameter} 

\begin{corollary} Let $X\geq 0$ be integer-valued with finite mean $\mu$. Then
\begin{equation}\label{ineq-mu}
H(X) \leq \log(e\mu) +\log \sum_x \frac{e^{-x/\mu}}{\mu} =  \log e +\log \sum_x e^{-x/\mu}
\end{equation}
the sums being taken over all nonnegative integer values $x$ of~$X$.

For $\alpha>\frac{1}{2}$ and any integer-valued $X\geq 0$ with finite mean~$\mu$, 
\begin{equation}\label{ineq-mean-alpha}
H_\alpha(X)\leq   \frac{\alpha}{1-\alpha}\log \frac{\alpha}{2\alpha-1} 
+\log\sum_x
\Bigl(1+\frac{1-\alpha}{2\alpha-1} \cdot\frac{x}{\mu}\Bigr)_{\!\!+}^{\frac{\alpha}{\alpha-1}}
\end{equation}
the sum being taken over all nonnegative integer values $x$ of~$X$.
\end{corollary}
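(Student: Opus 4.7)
The plan is to apply Theorems~\ref{thm-two} and~\ref{thm-two-alpha} with the exponential and $\alpha$-exponential densities identified in the preceding subsection, in exactly the same way that the variance-constrained corollary was derived just above.

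For the Shannon case $\alpha=1$, I take $\mathcal{X}$ to be exponential with mean $\mu$, so that $T(x)=x/\mu$, $Z=\mu$ and $f(x)=e^{-x/\mu}/\mu$ on $x\geq 0$. The moment condition $\E[T(X)]=\E[T(\mathcal{X})]=1$ is equivalent to the hypothesis $\E X=\mu$, and~\eqref{maxentmean} gives $h(\mathcal{X})=\log(e\mu)$. Substituting into~\eqref{ineq-general-discrete} with $Z'=\sum_x f(x)=\frac{1}{\mu}\sum_x e^{-x/\mu}$ yields the first equality in~\eqref{ineq-mu}; the simplified form follows by pulling $\log(1/\mu)$ out of the sum and combining it with $\log(e\mu)$.

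For $\alpha\ne 1$, I take $\mathcal{X}$ to be the $\alpha$-exponential of Corollary~\ref{lem-two} with $\mu_{\mathcal{X}}=\mu$, i.e.\ with $T(x)=1+\beta\, x/\mu$ and $\beta=(1-\alpha)/(2\alpha-1)$, which is well defined since $\alpha>1/2$. Matching moments $\E[T(X)]=\E[T(\mathcal{X})]=1+\beta$ is again equivalent to $\E X=\mu$. The escort density is $f_\alpha(x)=\frac{1}{Z_\alpha}(1+\beta\, x/\mu)_+^{\alpha/(\alpha-1)}$ where $Z_\alpha=\int (1+\beta\, x/\mu)_+^{\alpha/(\alpha-1)}\dx$. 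The one point worth verifying is that $Z_\alpha=\mu$: either invoke~\eqref{eq-trick}, which gives $Z_\alpha/Z=m=1+\beta=\alpha/(2\alpha-1)$ and combine with the value of $Z$ computed in the proof of Corollary~\ref{lem-two}, or more directly substitute $u=1+\beta\, x/\mu$ and evaluate the single-variable integral on $(0,\infty)$ for $1/2<\alpha<1$ and on $(0,-\mu/\beta)$ for $\alpha>1$. Either way, $Z'_\alpha=\sum_x f_\alpha(x)=\frac{1}{\mu}\sum_x(1+\beta\, x/\mu)_+^{\alpha/(\alpha-1)}$.

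Finally, I combine $h_\alpha(\mathcal{X})=\log\mu+\frac{\alpha}{1-\alpha}\log\frac{\alpha}{2\alpha-1}$ from Corollary~\ref{lem-two} with $\log Z'_\alpha$ in~\eqref{ineq-general-discrete-alpha}: the $\log\mu$ in $h_\alpha(\mathcal{X})$ cancels the $\log(1/\mu)$ produced by $Z'_\alpha$, leaving precisely~\eqref{ineq-mean-alpha}. The only nontrivial step is the identification $Z_\alpha=\mu$; once this is established, the remainder is pure bookkeeping and is strictly parallel to the variance-parameter case treated just above.
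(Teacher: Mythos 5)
Your proposal is correct and follows essentially the same route as the paper: apply Theorems~\ref{thm-two} and~\ref{thm-two-alpha} to the exponential and $\alpha$-exponential densities of mean $\mu$, using $h(\mathcal{X})=\log(e\mu)$ resp.\ the r.h.s.\ of~\eqref{maxentmean-alpha} together with $Z_\alpha=\mu$ (the paper cites~\eqref{eq-Zalpha-mean} for this, which you re-verify directly). The extra moment-matching and change-of-variable checks you include are sound but add nothing beyond the paper's argument.
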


\begin{proof}
For $\alpha=1$ we take $\mathcal{X}$ with exponential  density $\frac{e^{-x/\mu}}{\mu}$ of differential entropy $h(\mathcal{X})=\log(e\mu)$.
Theorem~\ref{thm-two} gives~\eqref{ineq-mu}.

For $\alpha\ne 1$, $\mathcal{X}$ is $\alpha$-exponential of mean $\mu_{\mathcal{X}}=\mu$ and differential entropy  given by the r.h.s. of~\eqref{maxentmean-alpha}.
From the expression of an $\alpha$-exponential~\eqref{alpha-exp}, we have
$
f_\alpha(x) = \frac{1}{Z_\alpha} \bigl(1+\beta \frac{x}{\mu}\bigr)_+^{\frac{\alpha}{\alpha-1}}
$
where $\beta=\frac{1-\alpha}{2\alpha-1}$ and $Z_\alpha$ is given by~\eqref{eq-Zalpha-mean}, that is, $Z_\alpha=\mu$.
Theorem~\ref{thm-two-alpha} gives~\eqref{ineq-mean-alpha}.
\end{proof}

\begin{remark}
Again the sum in~\eqref{ineq-mu} or~\eqref{ineq-mean-alpha} does not need to be taken over \textit{all} $x\in\N$ if the support of $X$ is limited. 
In particular, when $\alpha>1$, the sum in~\eqref{ineq-mean-alpha} is restricted to values $x$ in the interval $0\leq x < \frac{2\alpha-1}{\alpha-1}\mu$.  

If, however, the sum is to be taken over $\N$, then evaluating the geometric sum $\sum_{x\in\N} e^{-x/\mu}= \frac{1}{1-e^{-1/\mu}}$ in~\eqref{ineq-mu} gives the inequality
\begin{equation}\label{ineq-mubof}
H(X) \leq \log e -\log (1-e^{-1/\mu}),
\end{equation}
As seen in Subsection~\ref{unfortunate} below, however, this bound turns out to be always weaker then the corresponding Massey-type inequality~\eqref{ineq-rioul1}. %

\end{remark}

\subsection{Use of the Poisson Summation Formula}

When $\sigma^2$ or $\mu$ is large, then the additional logarithmic term $\log Z'$ in~\eqref{ineq-general-discrete} is likely to be small because of the approximation $Z'=\sum_x f(x) \approx \int f(x)\dx = 1$. In order to evaluate this precisely, the \emph{Poisson summation formula} can be used. 

\begin{lemma}[Poisson Summation Formula~{\cite[p.\,252]{SteinWeiss71}}]
Let $f$ be Lebesgue-integrable and let
\begin{equation}
\hat{f}(t)\triangleq \int_{-\infty}^{+\infty} f(x) \,e^{-2i\pi tx} \dx 
\end{equation}
be the Fourier transform of $f(x)$. 
If both $f$ and $\hat{f}$ have $O(\frac{1}{|x|^{1+\eps}})$ decay at infinity then Poisson's summation formula holds:
\begin{equation}\label{eq-poisson}
\sum_{x\in\Z} f(x) =\sum_{x\in\Z} \hat{f}(x)
\end{equation}
where the $x=0$ term in the r.h.s. is  $\hat{f}(0)=\int f(x)\dx=1$. 
\end{lemma}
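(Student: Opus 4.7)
The plan is to prove the Poisson summation formula by the classical Fourier-series argument: periodize $f$, expand the periodization as a Fourier series, and evaluate at $0$. Concretely, I would first define
\begin{equation}
F(x) \triangleq \sum_{n\in\Z} f(x+n).
\end{equation}
The decay hypothesis $f(x)=O(1/|x|^{1+\eps})$ ensures that this series converges absolutely and uniformly on every compact set, so $F$ is a continuous function that is $1$-periodic by construction.

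Next, I would compute the Fourier coefficients of $F$. Using uniform convergence of the defining series of $F$ on $[0,1]$ to interchange sum and integral, and then unfolding the resulting sum of integrals over unit intervals into a single integral over $\R$,
\begin{equation}
c_k = \int_0^1 F(x) e^{-2i\pi kx}\dx = \sum_{n\in\Z}\int_0^1 f(x+n) e^{-2i\pi kx}\dx = \int_{-\infty}^{+\infty} f(y) e^{-2i\pi ky}\dy = \hat{f}(k),
\end{equation}
where in the last step I used the $1$-periodicity of $e^{-2i\pi kx}$ to absorb the $n$-shift.

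Finally, the decay hypothesis on $\hat{f}$ makes $\sum_k |\hat{f}(k)|<\infty$, so the Fourier series $\sum_k c_k e^{2i\pi k x}$ of $F$ converges absolutely and uniformly, and hence equals $F$ pointwise (by a standard Fourier inversion result for continuous periodic functions with absolutely summable Fourier coefficients). Evaluating this identity at $x=0$ yields
\begin{equation}
\sum_{n\in\Z} f(n) = F(0) = \sum_{k\in\Z} \hat{f}(k),
\end{equation}
which is~\eqref{eq-poisson}. The observation that the $k=0$ term on the right equals $\hat{f}(0)=\int f(x)\dx=1$ is a direct consequence of $f$ being a probability density.

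The main obstacle is the justification of pointwise equality between $F$ and its Fourier series; this is not automatic for arbitrary continuous periodic functions, but here it is guaranteed by the absolute summability of the $\hat{f}(k)$ coming from the assumed decay rate of $\hat{f}$. Everything else (Fubini-type interchanges, continuity of $F$) follows cleanly from the $O(1/|x|^{1+\eps})$ decay of $f$.
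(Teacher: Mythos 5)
Your proof is correct, but note that the paper does not prove this lemma at all: it is invoked verbatim from Stein and Weiss~\cite[p.~252]{SteinWeiss71}, so what you have written is in effect a self-contained reconstruction of the cited textbook argument (periodize $f$, identify the Fourier coefficients of the periodization $F(x)=\sum_{n\in\Z}f(x+n)$ with $\hat f(k)$, use absolute summability of the $\hat f(k)$ to equate $F$ with its Fourier series, and evaluate at $x=0$), rather than a divergence from anything in the paper. All the individual steps are sound: the tail decay $f(x)=O(|x|^{-1-\eps})$ gives uniform convergence of the periodization on compacts, the interchange $\int_0^1\sum_n=\sum_n\int_0^1$ is justified (even more simply by Tonelli, since $\sum_n\int_0^1|f(x+n)|\dx=\int_\R|f|<\infty$), and the uniqueness statement you invoke---a continuous $1$-periodic function with absolutely summable Fourier coefficients equals its Fourier series everywhere---is exactly the right tool. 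The one point to make explicit is that your argument (and indeed the very meaning of the pointwise sums $\sum_{x\in\Z}f(x)$ and the continuity of $F$) requires $f$ to be continuous, which the lemma as stated in the paper does not say but which Stein and Weiss assume; this is harmless here because the assumed decay of $\hat f$ puts $\hat f\in L^1$, so $f$ may be taken equal a.e.\ to a continuous function, and in every application in the paper (the Gaussian, two-sided exponential and Cauchy-type densities of Table~I) $f$ is explicitly continuous. Your closing remark that the $k=0$ term is $\hat f(0)=\int f=1$ likewise uses that $f$ is a probability density, which matches how the lemma is used in the paper.
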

The Fourier transform pairs used in this paper are given in Table~\ref{tab-FT}.

\begin{table}[htb!]
\centering
\caption{Some Fourier transform pairs.} \label{tab-FT}\normalsize
\renewcommand{\arraystretch}{1.3}
\newcolumntype{C}{>{$}c<{$}}
\begin{tabular}{|C|C|}\hline 
f(x) & \hat{f}(x) \\\hline\hline
\rule{0pt}{3.5ex}%
\dfrac{1}{\sqrt{2\pi\sigma^2}}e^{-\frac{1}{2}(\frac{x-\mu}{\sigma})^2}
&e^{-2i\pi\mu x} e^{-2(\pi\sigma x)^2} \\[1.5ex]\hline
\rule{0pt}{4ex}%
\dfrac{e^{-|x|/\mu}}{\mu}&\dfrac{2}{1+(2\pi\mu x)^2}
\\[1.5ex]\hline
\rule{0pt}{3.5ex}%
\dfrac{1}{\pi\sigma}\,\dfrac{1}{1+ (\frac{x-\mu}{\sigma})^2}
&
e^{-2i\pi\mu x} e^{-2\pi\sigma |x|}
\\[2ex]\hline
\rule{0pt}{3.5ex}%
\dfrac{2}{\pi\sigma}\dfrac{1}{(1+(\frac{x-\mu}{\sigma})^2)^2}
&
e^{-2i\pi\mu x}(1+2\pi \sigma |x|) e^{-2\pi\sigma|x|}
\\[2ex]\hline
\end{tabular} 
\renewcommand{\arraystretch}{1}
\end{table}

\begin{example}\label{rmk-historyPoisson}
As an example, using the first Fourier transform pair of Table~\ref{tab-FT} in Poisson's formula~\eqref{eq-poisson} one obtains\smallskip{} $\sum_{x\in\Z} \frac{e^{-\frac{1}{2}(\frac{x-\mu}{\sigma})^2}}{\sqrt{2\pi\sigma^2}}= \sum_{x\in\Z} 
e^{-2i\pi\mu x} e^{-2(\pi\sigma x)^2} =1+2\sum_{x=1}^{+\infty} e^{-2(\pi\sigma x)^2}\cos 2\pi\mu x$.
This identity  %
 is historically the very first occurence of the formula in 1823 by Poisson~\cite[Eq.~(15)]{Poisson1823} which was later generalized by other mathematicians to other Fourier transform pairs.
 It shows that for large variance, the second term inthe r.h.s. of~\eqref{ineq-sigma} is in fact exponentially small.
\end{example}

\section{Inequalities of the Massey Type}\label{massey-ineq}

In this section, we apply the techniques described in Section~\ref{sec-two} to obtains inequalities of the Massey type.
In keeping with Remark~\ref{rmk-Z}, we assume that $X$ is integer-valued, with mean $\mu$ and variance $\sigma^2$, and we apply Theorem~\ref{thm-one} in the form $H_\alpha(X)=h_\alpha(\mathcal{X})-h_\alpha(\mathcal{U})$ where $\mathcal{U}$ has support of finite length $\ell(\mathcal{U})=\Delta\leq 1$. Then 
Kullback's inequality~\eqref{ineq-general-continuous} or~\eqref{ineq-general-continuous-alpha}  applied to $\mathcal{X}=X+U$ provides various upper bounds on the discrete entropy $H(X)$ from upper bounds on $h(\mathcal{X})$.

We illustrate this approach here in the three classical situations \textit{a)}, \textit{b)}, \textit{c)} of Subsection~\ref{sec-two-ex}, where we respectively have
\begin{enumerate}[\it a)]
\item Support length $\ell(\mathcal{X})= \ell(X)+\ell(\mathcal{U})= \ell(X)+\Delta$;
\item Variance $\sigma_{\mathcal{X}}^2=\sigma^2+\sigma_{\mathcal{U}}^2$;
\item Mean $\mu_{\mathcal{X}}=\mu+\mu_{\mathcal{U}}$.
\end{enumerate}

\subsection{Inequalities for Fixed Support Length}\label{iffsl}

Suppose that $X$ has finite support $\{k,\ldots,k+\ell\}$ of length $\ell\geq 0$. Since $\ell(\mathcal{X})= \ell(X)+\ell(\mathcal{U})= \ell+\Delta$, by Theorem~\ref{thm-one} and inequality~\eqref{maxentunif} or~\eqref{maxentunif-alpha}, we have
\begin{equation}\label{ineq-triv}
H_\alpha(X) \leq \log(\ell+\Delta) - h(\mathcal{U})
\end{equation}
for any $\alpha>0$.
Since $\mathcal{U}$ has support length $\Delta\leq 1$, from~\eqref{maxentunif} or~\eqref{maxentunif-alpha} we always have $h(\mathcal{U})\leq \log \Delta\leq \log 1 = 0$ with equality iff $\mathcal{U}$ is uniformly distributed in an interval of length~$\Delta=1$.  Thus, given $\Delta$, the best upper bound in~\eqref{ineq-triv} is $\log(\ell+\Delta)-\log\Delta$, which is minimized when $\Delta$ is maximum $=1$. 
One obtains the well-known bound 
\begin{equation}
H_\alpha(X)\leq \log (\ell+1) 
\end{equation}
achieved when $X$ is equiprobable (hence $\mathcal{X}=X+\mathcal{U}$ is uniformly distributed).
\begin{remark}
Interestingly, achievability of $h(X+\mathcal{U})=\log(\ell+1)$ for $\alpha=1$ is at the basis of the analysis done in~\cite[Thm.\,1]{RioulMagossi14} on Shannon’s vs. Hartley's formula. 
\end{remark}

\subsection{Inequalities for Fixed Variance} 

Suppose that $X$ has finite variance $\sigma^2$. Since $\sigma_{\mathcal{X}}^2=\sigma^2+\sigma_{\mathcal{U}}^2$, by Theorem~\ref{thm-one} and inequality~\eqref{maxentvar}, we have
\begin{equation}\label{ineq-sigmaZ}
H(X) \leq  \tfrac{1}{2}\log\bigl(2\pi e(\sigma^2+\sigma_{\mathcal{U}}^2)\bigr) - h(\mathcal{U})
\end{equation} 
where $\mathcal{U}$ has support length $\leq 1$. 
Here the best choice of $\mathcal{U}$---the best compromise between maximum possible $h(\mathcal{U})$ and minimum possible $\sigma_{\mathcal{U}}^2$---depends on the value of $\sigma^2$.
But it can be observed that the obtained bound cannot be tight for small values of $\sigma^2$. 
Indeed when $\sigma^2=0$, $X$ is deterministic, \hbox{$H(X)=0$} and the upper bound in~\eqref{ineq-sigmaZ} becomes $\tfrac{1}{2}\log(2\pi e\sigma_{\mathcal{U}}^2)-h(\mathcal{U})$ which from~\eqref{maxentvar}  is strictly positive since $\mathcal{U}$ cannot be Gaussian when it has finite support.

Therefore, for large $\sigma^2$, the best asymptotic upper bound in~\eqref{ineq-sigmaZ} is obtained when $h(\mathcal{U})$ is maximum $=\log 1=0$. From the equality case in~\eqref{maxentunif} $\mathcal{U}$ is then uniformly distributed in an interval of length~$1$. In this case $\sigma_{\mathcal{U}}^2=\frac{1}{12}$ and one recovers \emph{Massey's inequality}~\cite{Massey88} 
\begin{equation}\label{ineq-Massey}
H(X)< \tfrac{1}{2}\log\bigl(2\pi e(\sigma^2+\tfrac{1}{12})\bigr)
\end{equation}
for any fixed $\sigma^2$, where the strictness of the inequality follows from the fact that $\mathcal{X}=X+\mathcal{U}$ is not Gaussian.

\begin{remark}
The bound~\eqref{ineq-Massey} is asymptotically tight for large~$\sigma^2$: As an example, for Poisson distributed $X$ 
we have~\cite{Evans88} $H(X)= \tfrac{1}{2}\log(2\pi e\sigma^2) + O(\frac{1}{\sigma^{2}})$.
However, it can still be improved: Section~\ref{sec-three} shows that the $\frac{1}{12}$ constant in~\eqref{ineq-Massey}  can be replaced by an arbitrary small constant as $\sigma$ gets large.
\end{remark}

The natural generalization of Massey's inequality~\eqref{ineq-Massey} to $\alpha$-entropies is given by the folllowing
\begin{theorem} For any integer-valued $X$ with finite variance $\sigma^2$,
\begin{equation}\label{ineq-Massey-alpha}
H_\alpha(X)< 
\begin{cases}
\frac{1}{2} \log \bigl(\frac{3\alpha-1}{1-\alpha}\pi(\sigma^2+\frac{1}{12})\bigr)
+\frac{1}{1-\alpha}\log \frac{2\alpha}{3\alpha-1} 
\\[1ex]\;+ \log \frac{\Gamma(\frac{1}{1-\alpha}-\frac{1}{2})}{\Gamma(\frac{1}{1-\alpha})}
\qquad\qquad\qquad\text{for $\frac{1}{3}<\alpha<1$}\\[3ex] 
\frac{1}{2} \log \bigl(\frac{3\alpha-1}{\alpha-1}\pi(\sigma^2+\frac{1}{12})\bigr)
+\frac{1}{\alpha-1}\log \frac{3\alpha-1}{2\alpha}
\\[1ex]\;+\log\frac{\Gamma(\frac{\alpha}{\alpha-1})}{\Gamma(\frac{\alpha}{\alpha-1}+\frac{1}{2})}
\qquad\qquad\qquad\text{for $\alpha>1$.}
\end{cases}
\end{equation} 
\end{theorem}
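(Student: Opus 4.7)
The plan is to follow exactly the pattern that produced~\eqref{ineq-Massey} in the Shannon case, but applied with the Rényi version of Kullback's inequality. Specifically, I would invoke Theorem~\ref{thm-one} with $\mathcal{U}$ uniformly distributed on an interval of length~$1$, so that $h_\alpha(\mathcal{U})=\log 1=0$, yielding the exact identity
\begin{equation}
H_\alpha(X)=h_\alpha(\mathcal{X})\quad\text{with}\quad \mathcal{X}=X+\mathcal{U}.
\end{equation}
Since $X$ and $\mathcal{U}$ are independent and $\sigma^2_{\mathcal{U}}=\frac{1}{12}$, the variance of the continuous random variable $\mathcal{X}$ is $\sigma^2_{\mathcal{X}}=\sigma^2+\frac{1}{12}$.

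Next I would apply Corollary~\ref{lem-one} to $\mathcal{X}$, which upper bounds $h_\alpha(\mathcal{X})$ in terms of $\sigma^2_{\mathcal{X}}$ by exactly the two-case expression appearing in~\eqref{maxentvar-alpha}. Substituting $\sigma^2_{\mathcal{X}}=\sigma^2+\frac{1}{12}$ into that expression produces, verbatim, the right-hand side of~\eqref{ineq-Massey-alpha}. The cases $\frac{1}{3}<\alpha<1$ and $\alpha>1$ of the theorem correspond directly to the two cases of the corollary; in particular, the constraint $\alpha>\frac{1}{3}$ inherited from Corollary~\ref{lem-one} is the natural domain of validity, as below $\alpha=\frac{1}{3}$ the $\alpha$-Gaussian does not have finite variance and the bound ceases to make sense.

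The one point requiring justification is the \emph{strictness} of the inequality. The equality case of Corollary~\ref{lem-one} characterizes the extremal densities as $\alpha$-Gaussians, whose density (see~\eqref{alpha-Gaussian}) is an analytic function (for $\alpha>1$, a compactly supported but smooth-in-its-interior function). The density of $\mathcal{X}=X+\mathcal{U}$, however, is a staircase function: since $X$ is integer-valued and $\mathcal{U}$ is uniform on a unit interval, $f_{\mathcal{X}}$ is piecewise constant on unit-length intervals, by the convolution formula for independent sums. A piecewise-constant density cannot equal an $\alpha$-Gaussian density almost everywhere, so equality is excluded and the strict inequality $<$ follows.

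The main (minor) obstacle is simply book-keeping: confirming that the expressions produced by plugging $\sigma^2+\tfrac{1}{12}$ into each branch of~\eqref{maxentvar-alpha} match the claimed right-hand side of~\eqref{ineq-Massey-alpha} term by term. Beyond that, no new analytical ingredient is needed; the theorem is obtained as a direct combination of Massey's equivalence~\eqref{eq-hH==} with the $\alpha$-Kullback bound for fixed variance, exactly mirroring how~\eqref{ineq-Massey} was derived from~\eqref{maxentvar} via~\eqref{ineq-sigmaZ}.
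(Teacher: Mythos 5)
Your proposal is correct and follows essentially the same route as the paper's own proof: take $\mathcal{U}$ uniform on a unit interval so that $H_\alpha(X)=h_\alpha(X+\mathcal{U})$ by Theorem~\ref{thm-one}, note $\sigma^2_{\mathcal{X}}=\sigma^2+\tfrac{1}{12}$, apply Corollary~\ref{lem-one}, and obtain strictness because the staircase density of $X+\mathcal{U}$ cannot be $\alpha$-Gaussian. No substantive difference from the paper's argument.
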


\begin{proof}
With a similar reasoning as above in the case $\alpha=1$ for large $\sigma^2$, the best upper bound in Theorem~\ref{thm-one} is obtained when $\mathcal{U}$ is uniformly distributed in an interval of length~$1$. Hence~\eqref{eq-hH==} holds, and since $\sigma_{\mathcal{X}}^2=\sigma^2+\sigma_{\mathcal{U}}^2=\sigma^2+\frac{1}{12}$, \eqref{maxentvar-alpha} gives~\eqref{ineq-Massey-alpha}.
The strictness of the inequality follows from the fact that $\mathcal{X}=X+\mathcal{U}$ (which has a staircase density) cannot be $\alpha$-Gaussian.
\end{proof}

\begin{example}
Thus, referring to Example~\ref{ex-alpha-sigma},
\begin{align}
\label{ex-1/2-sigma}
H_{\frac{1}{2}}({X})&< \frac{1}{2}\log\Bigl(4\pi^2\bigl(\sigma^2+\frac{1}{12}\bigr)\Bigr)\\[1ex]
\label{ex-2/3-sigma}
H_{\frac{2}{3}}({X})&<  \frac{1}{2}\log\Bigl(\frac{64}{27} \pi^2\bigl(\sigma^2+\frac{1}{12}\bigr)\Bigr) \\[1ex]
H_{2}({X})&< 
\frac{1}{2}\log\Bigl(\frac{125}{9}\bigl(\sigma^2+\frac{1}{12}\bigr)\Bigr)\\[1ex]
H_{3}({X})&< 
\frac{1}{2}\log\Bigl(\frac{4}{3}\pi^2\bigl(\sigma^2+\frac{1}{12}\bigr)\Bigr).
\end{align}
\end{example}

\begin{remark}
Such inequalities cannot exist in general when $\alpha\leq\frac{1}{3}$.  To see this, consider the discrete random variable $X\geq 1$ having distribution $\P(X=k)=\frac{c}{(k\log k)^3}$ with normalization constant $c=\sum_{k>0} \frac{1}{(k\log k)^3}$. Then $X$ has finite second moment $\sum_{k>0} \frac{c}{k\log^3 k}<+\infty$ hence finite variance, but $\sum_{k>0}\sqrt[3]{\P(X=k)}= \sum_{k>0} \frac{1}{k\log k}=+\infty$, hence $H_\alpha(X) \geq H_{\frac{1}{3}}(X)=+\infty$ for all $\alpha\leq\frac{1}{3}$.
\end{remark}

\subsection{Inequalities for Fixed Mean} \label{iffm}

Suppose that $X\geq 0$ has finite mean $\mu$. Since $\mu_{\mathcal{X}}=\mu+\mu_{\mathcal{U}}$, by Theorem~\ref{thm-one} and inequality~\eqref{maxentmean}, we have
\begin{equation}\label{ineq-muZ}
H(X) \leq \log\bigl(e(\mu+ \mu_\mathcal{U})\bigr) - h(\mathcal{U})
\end{equation}
provided that $\mathcal{U} \geq 0$ a.s. with support length $\leq 1$.

Again the best choice of $\mathcal{U}$ (the best compromise between maximum possible $h(\mathcal{U})$ and minimum possible $\mu_{\mathcal{U}}$) depends on the value of the parameter $\mu\geq 0$. Also 
the obtained bound cannot be tight for small values of $\mu$: When $\mu=0$, \hbox{$X=0$}~a.s., $H(X)=0$ and the upper bound in~\eqref{ineq-muZ} becomes $\log\bigl(e\mu_\mathcal{U}\bigr) - h(\mathcal{U})$ which from~\eqref{maxentmean}  is strictly positive because $\mathcal{U}$ cannot be exponential when it has finite support. 

For large $\mu$,  the best asymptotic upper bound in~\eqref{ineq-muZ} is again obtained when $h(\mathcal{U})$ is maximum $=\log 1=0$. 
From the equality case in~\eqref{maxentunif} $\mathcal{U}\geq 0$ is then uniformly distributed in an interval of length~$1$. In this case the minimum value of $\mu_\mathcal{U}$ is achieved when $\mathcal{U}\geq 0$ is uniformly distributed in $(0,1)$, which gives $\mu_\mathcal{U}=\frac{1}{2}$.
We obtain the following variation of Massey inequality.

\begin{theorem}
For any integer-valued $X\geq 0$ with finite mean~$\mu$,
\begin{equation}\label{ineq-rioul1}
H(X) < \log\bigl(e(\mu+ \tfrac{1}{2})\bigr)
\end{equation} 
\end{theorem}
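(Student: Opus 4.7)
The plan is to instantiate the general strategy of Section~\ref{massey-ineq} in situation~\textit{c)}, combining Theorem~\ref{thm-one} with the continuous maximum-entropy bound~\eqref{maxentmean}, and then optimize over the admissible choices of the dither $\mathcal{U}$. Concretely, I take $\mathcal{U}$ independent of $X$, nonnegative a.s., with support length $\Delta\leq 1$, and set $\mathcal{X}=X+\mathcal{U}$. Since $X\geq0$ and $\mathcal{U}\geq0$, we have $\mathcal{X}\geq0$ a.s., with mean $\mu_{\mathcal{X}}=\mu+\mu_{\mathcal{U}}$. Theorem~\ref{thm-one} gives $H(X)=h(\mathcal{X})-h(\mathcal{U})$, and~\eqref{maxentmean} applied to $\mathcal{X}$ yields
\begin{equation*}
H(X)\leq \log\bigl(e(\mu+\mu_{\mathcal{U}})\bigr)-h(\mathcal{U}).
\end{equation*}

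Next, I optimize the right-hand side over $\mathcal{U}$. Because $\mathcal{U}$ has support length $\leq 1$, inequality~\eqref{maxentunif} gives $h(\mathcal{U})\leq \log 1=0$, with equality iff $\mathcal{U}$ is uniform on an interval of length exactly $1$. Among such uniform laws whose support lies in $\R_+$, the mean $\mu_{\mathcal{U}}$ is minimized when the support is $(0,1)$, giving $\mu_{\mathcal{U}}=\tfrac{1}{2}$. With this choice both terms are simultaneously at their best values ($h(\mathcal{U})=0$ and $\mu_{\mathcal{U}}=\tfrac{1}{2}$), so the bound specializes to $H(X)\leq \log\bigl(e(\mu+\tfrac{1}{2})\bigr)$.

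Finally, I would argue that the inequality is strict. Equality in~\eqref{maxentmean} holds only when $\mathcal{X}$ has the exponential density on $\R_+$. However, with $X$ integer-valued and $\mathcal{U}$ uniform on $(0,1)$, the density of $\mathcal{X}=X+\mathcal{U}$ is a piecewise-constant staircase function taking the value $\P(X=k)$ on $(k,k+1)$, which cannot coincide a.e.\ with a strictly decreasing exponential density. Hence equality is impossible and $H(X)<\log\bigl(e(\mu+\tfrac{1}{2})\bigr)$. The only real subtlety is this strict-inequality step; the main computation is a transparent application of Theorem~\ref{thm-one} plus~\eqref{maxentmean}, so I do not expect any obstacle in the optimization itself.
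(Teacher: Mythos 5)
Your proposal is correct and follows essentially the same route as the paper: apply Theorem~\ref{thm-one} with $\mathcal{U}$ uniform on $(0,1)$ (so $h(\mathcal{U})=0$, $\mu_{\mathcal{X}}=\mu+\tfrac12$), invoke the maximum-entropy bound~\eqref{maxentmean}, and obtain strictness because the staircase density of $X+\mathcal{U}$ cannot be exponential. The ``optimization over $\mathcal{U}$'' framing is just the paper's own discussion of why this dither is the best choice; it is not needed for validity, since any admissible $\mathcal{U}$ yields a valid bound.
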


Here the strictness of the inequality follows from the fact that $\mathcal{X}=X+\mathcal{U}$ is not exponential, hence~\eqref{maxentmean} cannot be achieved with equality.

\begin{remark}
The bound~\eqref{ineq-rioul1} is asymptotically tight for large~$\mu$: As an example, for geometric $X$ 
we have $H(X)=\mu H_{\mathrm{b}}(1/\mu)=\log (e\mu) + O(\frac{1}{\mu})$ where $H_{\mathrm{b}}(p)=p\log\frac{1}{p}+(1-p)\log\frac{1}{1-p}$ is the binary entropy function.
\end{remark}

The natural generalization of~\eqref{ineq-rioul1} to $\alpha$-entropies is given by the following
\begin{theorem}
For any integer-valued $X\geq 0$ with mean~$\mu$ and any $\alpha>\frac{1}{2}$,
 \begin{equation}\label{ineq-rioul1-alpha} 
\begin{aligned}
H_\alpha(X) <\; &\log(\mu+\frac{1}{2}) + \frac{\alpha}{1-\alpha} \log \frac{\alpha}{2\alpha-1} 
\\
&=\log(\mu+\frac{1}{2}) + \frac{\alpha}{\alpha-1} \log \frac{2\alpha-1}{\alpha}. 
\end{aligned}
\end{equation}
\end{theorem}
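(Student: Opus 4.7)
My plan is to mimic the proof of the Shannon version~\eqref{ineq-rioul1} carried out in Subsection~\ref{iffm}, simply replacing the exponential maximum-entropy bound~\eqref{maxentmean} by its $\alpha$-entropy analog, Corollary~\ref{lem-two}. The hypothesis $\alpha>\tfrac{1}{2}$ is precisely the regime in which Corollary~\ref{lem-two} is meaningful (the $\alpha$-exponential density exists with finite mean).

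The first step is to invoke Theorem~\ref{thm-one}: for any continuous $\mathcal{U}\geq 0$ independent of $X$ with support of length $\Delta\leq 1$, one has $H_\alpha(X)=h_\alpha(\mathcal{X})-h_\alpha(\mathcal{U})$ where $\mathcal{X}=X+\mathcal{U}>0$ a.s.\ has mean $\mu_{\mathcal{X}}=\mu+\mu_{\mathcal{U}}$. The second step is to bound $h_\alpha(\mathcal{X})$ by Corollary~\ref{lem-two}, yielding
\[
H_\alpha(X)\leq \log(\mu+\mu_{\mathcal{U}})+\frac{\alpha}{1-\alpha}\log\frac{\alpha}{2\alpha-1}-h_\alpha(\mathcal{U}).
\]
The third step is to optimize over the auxiliary $\mathcal{U}$. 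By~\eqref{maxentunif-alpha}, $h_\alpha(\mathcal{U})\leq \log\Delta\leq 0$, with equality iff $\mathcal{U}$ is uniform on an interval of length $\Delta=1$. Among all such nonnegative uniforms, $\mathcal{U}\sim\mathrm{Unif}(0,1)$ additionally minimizes $\mu_{\mathcal{U}}=\tfrac{1}{2}$. This choice therefore simultaneously maximizes $h_\alpha(\mathcal{U})=0$ and minimizes $\mu_{\mathcal{U}}$, which plugged in gives exactly the claimed bound with~$\leq$.

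To upgrade $\leq$ to the strict $<$ stated in~\eqref{ineq-rioul1-alpha}, I would note that equality in Corollary~\ref{lem-two} forces $\mathcal{X}$ to be $\alpha$-exponential, whereas for $\mathcal{U}\sim\mathrm{Unif}(0,1)$ the density of $\mathcal{X}=X+\mathcal{U}$ is piecewise constant on unit-length intervals (a staircase), hence cannot coincide with a smooth $\alpha$-exponential density. The main (and rather minor) obstacle is justifying that the simultaneous optimization step really is lossless: since the two objectives (large $h_\alpha(\mathcal{U})$ and small $\mu_{\mathcal{U}}$) are both extremized by the same admissible $\mathcal{U}=\mathrm{Unif}(0,1)$, there is no tradeoff to resolve, and the argument goes through uniformly in $\mu\geq 0$ rather than merely asymptotically as $\mu\to\infty$.
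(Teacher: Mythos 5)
Your proposal is correct and follows essentially the same route as the paper: apply Theorem~\ref{thm-one} with $\mathcal{U}$ uniform on $(0,1)$ so that $H_\alpha(X)=h_\alpha(X+\mathcal{U})$, bound $h_\alpha(X+\mathcal{U})$ via Corollary~\ref{lem-two} with $\mu_{\mathcal{X}}=\mu+\tfrac{1}{2}$, and obtain strictness because the staircase density of $X+\mathcal{U}$ cannot be $\alpha$-exponential. Your side discussion of the optimization over $\mathcal{U}$ is slightly overstated (a shorter-support $\mathcal{U}$ can have smaller mean, so the two objectives are not literally extremized simultaneously over all admissible $\mathcal{U}$), but this is immaterial since any admissible choice of $\mathcal{U}$ yields a valid bound, and the paper likewise invokes optimality of the uniform choice only heuristically for large $\mu$ while the resulting inequality holds for all $\mu\geq 0$.
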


\begin{proof}
For large $\mu$, as in the case $\alpha=1$ above, the best upper bound in Theorem~\ref{thm-one} is obtained when $\mathcal{U}$ is uniformly distributed in $(0,1)$. 
Hence~\eqref{eq-hH==} holds, and since $\mu_{\mathcal{X}}=\mu+\mu_{\mathcal{U}}=\mu+\frac{1}{2}$, 
\eqref{maxentmean-alpha} gives~\eqref{ineq-rioul1-alpha}
for any $\alpha>\frac{1}{2}$, where the strictness of the inequality follows from the fact that $\mathcal{X}=X+\mathcal{U}$ (which has a staircase density) cannot be $\alpha$-exponential. 
\end{proof}

\begin{example}\label{ex-alpha-mu+1/2}
Thus, referring to Example~\ref{ex-alpha-mu},
\begin{align}
H_{\frac{2}{3}}({X})&< \log (4\mu+2)\\
H_{\frac{3}{4}}({X})&< \log\frac{27(\mu+\frac{1}{2})}{8}\\
H_{2}({X})&< \log\frac{9(\mu+\frac{1}{2})}{4}.
\end{align}
\end{example}
\begin{remark}\label{rmk-nogo}
Such inequalities cannot exist in general when $\alpha\leq\frac{1}{2}$.  To see this, consider the discrete random variable $X\geq 1$ with distribution $\P(X=k)=\frac{c}{(k\log k)^2}$ where  $c=\sum_{k>0} \frac{1}{(k\log k)^2}$ is a normalization constant. Then $X$ has finite mean $\mu= \sum_{k>0} \frac{c}{k\log^2 k}<+\infty$ but $\sum_{k>0}\sqrt{\P(X=k)}= \sum_{k>0} \frac{1}{k\log k}=+\infty$, hence $H_\alpha(X) \geq H_{\frac{1}{2}}(X)=+\infty$ for all $\alpha\leq\frac{1}{2}$.
\end{remark}

\section{Improved Inequalities}\label{sec-three}

In this section, we apply the alternative bounding techniques described in Section~\ref{sec-alt} with the aim to improve the previous inequalities of the Massey type.
Applying Theorem~\ref{thm-two} or~\ref{thm-two-alpha} will have the effect of removing the constant $\frac{1}{12}$ in~\eqref{ineq-Massey} and $\frac{1}{2}$ in~\eqref{ineq-rioul1} at the expense of an additional additive constant $\log Z'$ or $\log Z'_\alpha$ in the upper bound. 

We again consider an integer-valued variable under the three classical situations \textit{a)}, \textit{b)}, \textit{c)} of Subsection~\ref{subsec-mixed-ex}.

\subsection{Inequalities for Fixed Support Length}

In case~ \textit{a)} we have already seen in Subsection~\ref{subsec-mixed-ex} that one obtains the known inequality
$H_\alpha(X)\leq \log(\ell+1)$ achieved when $X$ of support length $\ell$ is equiprobable. Thus in this case,  no improvement is possible: We obtain the same result as in Subsection~\ref{iffsl}.

\subsection{Inequalities for Fixed Mean} \label{unfortunate}

Here we assume $X\geq 0$ with fixed mean $\mu$. For $\alpha=1$, inequality~\eqref{ineq-mu} applies with \smallskip $Z'=\frac{1}{\mu}\sum_{x\in\N} e^{-x/\mu}$.
Using the second Fourier transform pair of Table~\ref{tab-FT} in Poisson's formula~\eqref{eq-poisson}
we obtain  $\sum_{x\in\Z} \frac{e^{-|x|/\mu}}{\mu} = \sum_{x\in\Z} \frac{2}{1+(2\pi\mu x)^2}$, which gives
\begin{equation}\label{eq-poisson-laplacian}
Z'=\frac{1}{\mu}\sum_{x\in\N} e^{-x/\mu} = 1+\frac{1}{2\mu}+2\sum_{x=1}^{+\infty} \frac{1}{1+(2\pi\mu x)^2}.
\end{equation}
Here we have applied Poisson's formula to the symmetrized density $\frac{1}{2}\bigl(f(x)+f(-x)\bigr)$ to ensure that the decay condition at infinity holds for the Fourier transform.
It follows from~\eqref{eq-poisson-laplacian} that
\begin{equation}
\sum_{x\in\N} \frac{e^{-x/\mu}}{\mu} > 1+\frac{1}{2\mu},
\end{equation}
which implies that~\eqref{ineq-mubof} is strictly weaker than the Massey-type inequality~\eqref{ineq-rioul1}: %
In fact, \eqref{ineq-rioul1} already reads 
$H(X) < \log(e\mu) +\log (1+\frac{1}{2\mu})$. 

A similar phenomenon occurs when $\alpha\ne 1$. In fact, comparing 
\eqref{ineq-mean-alpha} to~\eqref{ineq-rioul1-alpha} one has
\begin{equation}\label{trapezoidal}
\sum_{x\in\N}\Bigl(1+\frac{1-\alpha}{2\alpha-1} \cdot\frac{x}{\mu}\Bigr)_{\!\!+}^{\frac{\alpha}{\alpha-1}} > \mu+\frac{1}{2}. 
\end{equation}
for any $\alpha>1/2$
(See Appendix~\ref{app-trapezoidal} for a simple proof).
Therefore, unfortunately, the approach of this section cannot improve the result in Subsection~\ref{iffm}.

\subsection{Improved Inequalities for Fixed Variance} 

For large variance $\sigma^2$, Massey's original inequality~\eqref{ineq-Massey} reads $H(X)\leq \tfrac{1}{2}\log\bigl(2\pi e(\sigma^2+\tfrac{1}{12})\bigr)<\frac{1}{2}\log(2\pi e \sigma^2) +\frac{\log e}{24\sigma^2}$.
Now~\eqref{ineq-sigma} together with Poisson's formula~\eqref{eq-poisson} greatly improves Massey's inequality, since the $O(\frac{1}{\sigma^2})$ term can be replaced by  the exponentially small  $O(e^{-2\pi^2\sigma^2})$:
\begin{theorem}\label{thm-rioul2}
For any integer-valued $X$ of variance $\sigma^2>0$,
\begin{equation}\label{eq-rioul2}
H(X) < \frac{1}{2}\log(2\pi e \sigma^2) + \frac{2\log e}{e^{2\pi^2\sigma^2}-1}.
\end{equation}
\end{theorem}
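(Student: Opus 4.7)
The plan is to apply the mixed Kullback inequality \eqref{ineq-sigma} from Theorem~\ref{thm-two} and then control the residual sum via the Gaussian Poisson summation formula in Table~\ref{tab-FT}. Specifically, start from
\begin{equation}
H(X) \leq \tfrac{1}{2}\log(2\pi e \sigma^2) + \log \sum_{x\in\Z} \frac{e^{-\frac{1}{2}(\frac{x-\mu}{\sigma})^2}}{\sqrt{2\pi\sigma^2}},
\end{equation}
where (following Remark~\ref{rmk-partialsum}) we have enlarged the range of summation to all of $\Z$ if necessary. The whole problem reduces to upper-bounding the logarithmic correction $\log Z'$.

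Next, using the first Fourier pair of Table~\ref{tab-FT} in the Poisson summation formula, as recorded in Example~\ref{rmk-historyPoisson},
\begin{equation}
Z' = \sum_{x\in\Z} \frac{e^{-\frac{1}{2}(\frac{x-\mu}{\sigma})^2}}{\sqrt{2\pi\sigma^2}} = 1 + 2\sum_{x=1}^{+\infty} \cos(2\pi\mu x)\, e^{-2\pi^2\sigma^2 x^2}.
\end{equation}
Trivially $\cos(2\pi\mu x) \leq 1$, which absorbs the dependence on $\mu$ noted in Remark~\ref{rmk-muindep}, giving
\begin{equation}
Z' \leq 1 + 2 \sum_{x=1}^{+\infty} e^{-2\pi^2\sigma^2 x^2}.
\end{equation}

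The key calculation is to dominate the Gaussian tail sum by a geometric one: for every $x\geq 1$ we have $x^2 \geq x$, hence $e^{-2\pi^2\sigma^2 x^2} \leq e^{-2\pi^2\sigma^2 x}$, with strict inequality as soon as $x\geq 2$. Summing the geometric series then yields
\begin{equation}
2\sum_{x=1}^{+\infty} e^{-2\pi^2\sigma^2 x^2} < \frac{2\, e^{-2\pi^2\sigma^2}}{1 - e^{-2\pi^2\sigma^2}} = \frac{2}{e^{2\pi^2\sigma^2}-1}.
\end{equation}

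Finally, apply the elementary inequality $\log(1+u) < u \log e$ (valid for $u>0$) with $u = \frac{2}{e^{2\pi^2\sigma^2}-1}$. This gives
\begin{equation}
\log Z' < \frac{2 \log e}{e^{2\pi^2\sigma^2}-1},
\end{equation}
and substituting into the bound on $H(X)$ yields~\eqref{eq-rioul2}. I do not expect any substantial obstacle: the main ingredients (Theorem~\ref{thm-two}, the Gaussian Poisson identity, and two elementary analytic inequalities) are already at hand, and the strict inequality propagates from the step $x^2 > x$ for $x\geq 2$.
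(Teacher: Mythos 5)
Your proposal is correct and follows essentially the same route as the paper: start from the mixed Kullback bound~\eqref{ineq-sigma}, evaluate $Z'$ via the Gaussian Poisson summation identity~\eqref{eq-poisson-gauss}, bound $\cos 2\pi\mu x\leq 1$ and $e^{-2\pi^2\sigma^2 x^2}\leq e^{-2\pi^2\sigma^2 x}$ to sum a geometric series, and finish with $\log(1+z)<(\log e)z$. The only cosmetic difference is that you track the strictness through the $x^2>x$ step, while the paper gets it from the strict logarithm inequality alone; both are valid.
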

\begin{proof}
Using the first Fourier transform pair of Table~\ref{tab-FT} in Poisson's formula~\eqref{eq-poisson} one  obtains 
\begin{equation}\label{eq-poisson-gauss}
\frac{1}{\sqrt{2\pi\sigma^2}}\sum_{x\in\Z} e^{-\frac{1}{2}(\frac{x-\mu}{\sigma})^2}
=1+2\sum_{x=1}^{+\infty} e^{-2(\pi\sigma x)^2}\cos 2\pi\mu x
\end{equation} 
The sum in the r.h.s. is bounded by
$\sum_{x\geq 1} e^{-2(\pi\sigma x)^2} \leq \sum_{x\geq 1} e^{-2(\pi\sigma)^2 x}  =\frac{1}{e^{2\pi^2\sigma^2}-1}$.  Substituting in~\eqref{ineq-sigma} and using the inequality $\log(1+z)< (\log e)z$ (when $z>0$) gives the result. 
\end{proof}

\begin{example}
As a illustration, consider a binomial $X\sim\mathcal{B}(n,p)$ of variance $\sigma^2 = npq$ (where $p+q=1$). The best known upper bound on $H(X)$ is~\cite[Eq.\,(7)]{Adell10}
\begin{equation}
H(X)   < \frac{1}{2}\log(2\pi enpq) + \frac{\log e}{12n} + \frac{\log(pq)}{2n} + \frac{\log e}{6npq}
\end{equation}
which~\eqref{eq-rioul2} considerably improves for large $n$ since all $O(\frac{1}{n})$ terms are replaced by $O(e^{-2\pi^2 npq})$:
\begin{equation}
H(X)   < \frac{1}{2}\log(2\pi enpq) + \frac{2\log e}{e^{2\pi^2npq}-1} .
\end{equation} 
\end{example}

The exponentially small term can even be made disappear under mild conditions.
For example:
\begin{corollary}\label{cor-three}
If the integer-valued variable $X\in\N$ is nonnegative and $\frac{\mu}{\sigma^2}$ is bounded by a constant $<2\pi$, then for large enough $\sigma^2$,
\begin{equation}\label{eq-rioul3}
H(X) < \frac{1}{2}\log(2\pi e \sigma^2).
\end{equation}
\end{corollary}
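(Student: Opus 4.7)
The plan is to sharpen~\eqref{ineq-sigma} using the hypothesis $X\geq 0$ to restrict the range of summation, and then show that the resulting sum is strictly less than $1$ once $\sigma^2$ is large enough. Concretely, by Remark~\ref{rmk-partialsum}, the sum in~\eqref{ineq-sigma} may be taken only over $x\in\N$, so that
\[
H(X)\leq \tfrac{1}{2}\log(2\pi e\sigma^2) + \log S, \qquad S \triangleq \sum_{x\in\N}\frac{e^{-\frac{1}{2}(\frac{x-\mu}{\sigma})^2}}{\sqrt{2\pi\sigma^2}},
\]
and it suffices to prove $S<1$ for all sufficiently large $\sigma^2$.

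To estimate $S$, I would write $S=\sum_{x\in\Z}f(x) - \sum_{x\leq -1}f(x)$ with $f(x)=\frac{1}{\sqrt{2\pi\sigma^2}}e^{-\frac{1}{2}(\frac{x-\mu}{\sigma})^2}$. The full $\Z$-sum is handled exactly as in the proof of Theorem~\ref{thm-rioul2}: Poisson's formula~\eqref{eq-poisson-gauss} together with the bound $\sum_{k\geq 1} e^{-2(\pi\sigma k)^2}\leq \frac{1}{e^{2\pi^2\sigma^2}-1}$ gives
\[
\sum_{x\in\Z} f(x)\leq 1+\frac{2}{e^{2\pi^2\sigma^2}-1}.
\]
For the negative tail, I would simply retain the largest (i.e., $x=-1$) term:
\[
\sum_{x\leq -1} f(x) \geq \frac{1}{\sqrt{2\pi\sigma^2}}e^{-(1+\mu)^2/(2\sigma^2)}.
\]
Combining these, $S<1$ follows from the pointwise comparison
\[
\frac{1}{\sqrt{2\pi\sigma^2}}e^{-(1+\mu)^2/(2\sigma^2)} > \frac{2}{e^{2\pi^2\sigma^2}-1}.
\]

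The final step is a comparison of exponents, and this is where the hypothesis on $\mu/\sigma^2$ is crucial---indeed, this is the main obstacle. Taking logarithms and using $\mu\leq c\sigma^2$ with $c<2\pi$, the condition reads, to leading order as $\sigma\to\infty$,
\[
\Bigl(2\pi^2 - \tfrac{c^2}{2}\Bigr)\sigma^2 > \tfrac{1}{2}\log(2\pi\sigma^2) + O(1).
\]
The hypothesis $c<2\pi$ is precisely what forces the coefficient $2\pi^2-c^2/2$ to be strictly positive, so the left-hand side grows as $\Theta(\sigma^2)$ while the right-hand side only as $\Theta(\log\sigma)$; the inequality therefore holds for all $\sigma^2$ sufficiently large, completing the argument. (If the bound $\mu/\sigma^2 < 2\pi$ were replaced by $\mu/\sigma^2 \leq 2\pi$, the coefficient would vanish and this Gaussian-tail-versus-Poisson-tail trade-off would collapse, explaining why the constant $2\pi$ is sharp for this method.)
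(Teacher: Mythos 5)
Your proof is correct and follows essentially the same route as the paper: restrict the sum in~\eqref{ineq-sigma} to $x\in\N$, control the full $\Z$-sum via the Poisson formula~\eqref{eq-poisson-gauss}, and show the discarded negative tail exceeds the Poisson correction, with the hypothesis $\mu/\sigma^2\leq c<2\pi$ entering exactly where the Gaussian tail exponent $\tfrac{(1+\mu)^2}{2\sigma^2}$ is weighed against $2\pi^2\sigma^2$. The only cosmetic difference is that you compare the aggregated correction $\frac{2}{e^{2\pi^2\sigma^2}-1}$ against the single term at $x=-1$, whereas the paper compares the two series term by term and reduces to $x=1$; both hinge on the same leading-order inequality.
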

\begin{proof}
Apply~\eqref{ineq-sigma} where the sum can be taken only over $x\in\N$. Then by~\eqref{eq-poisson-gauss}, 
\begin{equation*}
\sum_{x\in\N} \frac{e^{-\frac{1}{2}(\frac{x-\mu}{\sigma})^2}}{\sqrt{2\pi\sigma^2}}
\leq 1+2\sum_{x=1}^{+\infty} e^{-2(\pi\sigma x)^2}%
 - \sum_{x=1}^{+\infty} \frac{e^{-\frac{1}{2}(\frac{x+\mu}{\sigma})^2}}{\sqrt{2\pi\sigma^2}}.
\end{equation*}
To obtain~\eqref{eq-rioul3} it is sufficient to prove that 
$2e^{-2(\pi\sigma x)^2} < \frac{e^{-\frac{1}{2}(\frac{x+\mu}{\sigma})^2}}{\sqrt{2\pi\sigma^2}}$,
i.e., $2(\pi\sigma x)^2 - \frac{1}{2}(\frac{x+\mu}{\sigma})^2>\log\sqrt{8\pi\sigma^2}$
for all \hbox{$x\geq 1$}. When $2\pi\sigma^2>1$ we have $2(\pi\sigma)^2 > 1/2\sigma^2$ and it is enough to prove the required inequality for $x=1$, i.e., 
$(2\pi\sigma)^2 > (\frac{\mu+1}{\sigma})^2+\log({8\pi\sigma^2})$. This will hold for large enough $\sigma^2$ provided that $2\pi\sigma^2>(1+\eps)\mu$ for some $\eps>0$.
\end{proof}

\begin{example}
As an example, if $X\sim \mathcal{P}(\lambda)$ is Poisson-distributed then $\frac{\mu}{\sigma^2}=\frac{\lambda}{\lambda}=1<2\pi$ so that for large enough $\lambda$,
\begin{equation}
H(X) < \frac{1}{2}\log (2\pi e \lambda).
\end{equation}
It is found numerically that this inequality holds as soon as $\lambda>0.1312642451\ldots$. 
\end{example}

\begin{example}
Similarly, if $X\sim\mathcal{B}(n,p)$ is binomial, we may always assume that $p\leq \frac{1}{2}$ since considering $n-X$ in place of $X$ permutes the roles of $p$ and $q=1-p$ without changing $H(X)$. Then $\frac{\mu}{\sigma^2}=\frac{np}{npq}=\frac{1}{q}\leq 2 <2\pi$, and by Corollary~\ref{cor-three}, for large enough $n$,
\begin{equation}
H(X) < \frac{1}{2}\log (2\pi e npq).
\end{equation}
It is found numerically that this inequality holds for all $n>0$
as soon as $|p-\frac{1}{2}|<0.304449\ldots$.
\end{example}

\begin{remark}
For the last two examples, %
Takano's strong central limit theorem~\cite[Thm.~2]{Takano87} implies that 
\begin{equation}
H(X) = \frac{1}{2}\log(2\pi e \sigma^2) + o\Bigl( \frac{1}{\sigma^{1+\eps}}\Bigr) 
\end{equation}
for every $\eps>0$. The above inequalities show that the $o\bigl( \frac{1}{\sigma^{1+\eps}}\bigr)$ term is actually negative for large enough $\sigma$. 
\end{remark}

We now illustrate the use of the Poisson summation formula~\eqref{eq-poisson} in~\eqref{ineq-sigma-alpha}
for $\alpha$-entropies, in the two cases $\alpha=\frac{1}{2}$ and $\alpha=\frac{2}{3}$.

\begin{lemma}\label{lem-1/2-2/3}
One has the following Poisson summation formulas:
\begin{equation}\label{Z'1/2}
Z'_\frac{1}{2}=  \tfrac{1}{\pi\sigma}\!\sum_{x\in\Z}\tfrac{1}{1+ (\frac{x-\mu}{\sigma})^2}=1+2\sum_{x=1}^{+\infty} e^{-2\pi\sigma x}\cos 2\pi\mu x.
\end{equation}
\begin{equation}\label{Z'2/3}
Z'_\frac{2}{3}\!=\! \tfrac{2}{\pi\sigma}\!\sum_{x\in\Z}\!\!\tfrac{1}{(1+(\frac{x-\mu}{\sigma})^2)^2} \!=\!
1+2\!\sum_{x=1}^{+\infty} \!(1+2\pi \sigma x) e^{-2\pi\sigma x}\!\cos 2\pi\mu x.
\end{equation}
\end{lemma}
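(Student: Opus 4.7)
The plan is to apply the Poisson summation formula~\eqref{eq-poisson} directly to the two Fourier transform pairs from rows three and four of Table~\ref{tab-FT}. For~\eqref{Z'1/2}, I would take $f(x)=\frac{1}{\pi\sigma}\frac{1}{1+((x-\mu)/\sigma)^2}$, whose Fourier transform is $\hat{f}(\xi)=e^{-2i\pi\mu\xi}e^{-2\pi\sigma|\xi|}$; for~\eqref{Z'2/3}, I would take $f(x)=\frac{2}{\pi\sigma}\frac{1}{(1+((x-\mu)/\sigma)^2)^2}$, whose Fourier transform is $\hat{f}(\xi)=e^{-2i\pi\mu\xi}(1+2\pi\sigma|\xi|)e^{-2\pi\sigma|\xi|}$. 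In both cases $f$ has polynomial decay at infinity (of order $|x|^{-2}$ and $|x|^{-4}$ respectively) and $\hat{f}$ has exponential decay, so the hypotheses of the Poisson summation formula stated in the lemma immediately before are satisfied.

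With Poisson summation established, the left-hand equalities in~\eqref{Z'1/2} and~\eqref{Z'2/3} follow by reading off $\sum_{x\in\Z} f(x) = \sum_{x\in\Z}\hat{f}(x)$. To obtain the real-valued forms on the right, I would split the sum on the Fourier side into the $x=0$ term, the positive-$x$ terms, and the negative-$x$ terms, use the evenness of $|x|$ and pair the $\pm x$ contributions via $e^{-2i\pi\mu x}+e^{2i\pi\mu x}=2\cos 2\pi\mu x$. The $x=0$ term contributes $\hat{f}(0)=1$ (since $f$ is a probability density and $\int f=1$), and the paired terms yield $2\sum_{x\geq 1}e^{-2\pi\sigma x}\cos 2\pi\mu x$ for~\eqref{Z'1/2} and $2\sum_{x\geq 1}(1+2\pi\sigma x)e^{-2\pi\sigma x}\cos 2\pi\mu x$ for~\eqref{Z'2/3}.

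The only subtlety worth flagging is that the functions $f$ above are real and even about $x=\mu$ (not about $0$), so their Fourier transforms are complex (carrying the phase $e^{-2i\pi\mu\xi}$); one must be careful that the alternative "symmetrization" trick invoked in~\eqref{eq-poisson-laplacian} is not needed here, because the polynomial decay of $f$ alone is strong enough to justify Poisson summation without passing to $\frac{1}{2}(f(x)+f(-x))$. The pairing of $\pm\xi$ terms then produces the real cosine series automatically. This is the main thing to check carefully; the rest is a routine bookkeeping of two geometric-type sums.
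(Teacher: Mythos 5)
Your proposal is correct and takes essentially the same route as the paper: both apply the Poisson summation formula to the third and fourth Fourier transform pairs of Table~\ref{tab-FT} and pair the $\pm x$ terms into cosines, the stated decay hypotheses holding directly without any symmetrization trick. The only step the paper makes explicit that you take as given is the first equality in each display, namely that $Z'_{1/2}$ and $Z'_{2/3}$, defined as sums of the escort densities of the $\tfrac12$- and $\tfrac23$-Gaussians, coincide with the Cauchy-type sums $\tfrac{1}{\pi\sigma}\sum_x\bigl(1+(\tfrac{x-\mu}{\sigma})^2\bigr)^{-1}$ and $\tfrac{2}{\pi\sigma}\sum_x\bigl(1+(\tfrac{x-\mu}{\sigma})^2\bigr)^{-2}$.
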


\begin{proof}
By~\eqref{alpha-Gaussian} the $\frac{1}{2}$-Gaussian density is of the form 
$f(x)=\frac{1}{Z}(1+ (\frac{x-\mu}{\sigma})^2)^{-2}$. %
It follows that $f_\frac12(x)=\frac{1}{Z_\alpha}(1+ (\frac{x-\mu}{\sigma})^2)^{-1}=\frac{1}{\pi\sigma}\frac{1}{1+ (\frac{x-\mu}{\sigma})^2}$. \smallskip{}Using the third Fourier transform pair of Table~\ref{tab-FT} in Poisson's formula~\eqref{eq-poisson} one obtains $\sum_{x\in\Z} \frac{1}{\pi\sigma}\frac{1}{1+ (\frac{x-\mu}{\sigma})^2}=\sum_{x\in\Z} e^{-2i\pi\mu x} e^{-2\pi\sigma |x|}$,
\smallskip{}which is~\eqref{Z'1/2}.

By~\eqref{alpha-Gaussian} the $\frac{2}{3}$-Gaussian density is of the\smallskip{} form 
$f(x)=\frac{1}{Z}(1+ \beta(\frac{x-\mu}{\sigma})^2)^{-3}$ where $\beta=\frac{1}{3}$. 
It follows that\smallskip{} $f_\frac23(x)=\frac{1}{Z_\alpha} (1+ \beta(\frac{x-\mu}{\sigma})^2)^{-2}=\frac{2}{\pi\sigma}\frac{1}{(1+(\frac{x-\mu}{\sigma})^2)^2}$. \smallskip{}Using the fourth Fourier transform pair of Table~\ref{tab-FT} in Poisson's formula~\eqref{eq-poisson} one obtains\smallskip{} 
$\sum_{x\in\Z}\frac{2}{\pi\sigma}\frac{1}{(1+(\frac{x-\mu}{\sigma})^2)^2}=\sum_{x\in\Z} e^{-2i\pi\mu x}(1+2\pi \sigma |x|) e^{-2\pi\sigma|x|}$\smallskip{}, %
which is~\eqref{Z'2/3}.
\end{proof}

In the two cases $\alpha=\frac{1}{2}$ and $\frac{2}{3}$, %
the Massey-type inequalities~\eqref{ex-1/2-sigma} and~\eqref{ex-2/3-sigma} write $H_{\frac{1}{2}}(X)\leq  \frac{1}{2}\log\bigl(4\pi^2(\sigma^2+\frac{1}{12})\bigr)<\log(2\pi \sigma) +\frac{\log e}{24\sigma^2}$ and
$
H_{\frac{2}{3}}(X)\leq   \frac{1}{2}\log\Bigl(\frac{64}{27} \pi^2\bigl(\sigma^2+\frac{1}{12}\bigr)\Bigr)
<\log(\frac{8}{3\sqrt{3}} \pi\sigma) +\frac{\log e}{24\sigma^2}$,
respectively. In these inequalities, the $O(\frac{1}{\sigma^2})$ term can be replaced by  the exponentially small  $O(e^{-2\pi\sigma})$ and $O(\sigma e^{-2\pi\sigma})$, respectively:

\begin{theorem}\label{cor-1/2-2/3}
For any integer-valued $X$ of variance $\sigma^2>0$,
\begin{align}
H_{\frac{1}{2}}(X) &< \log(2\pi \sigma) + \frac{2\log e}{e^{2\pi\sigma}-1}
\label{ineq-rioul2-1/2}\\
H_{\frac{2}{3}}(X) &< \log\bigl(\frac{8 \pi\sigma}{3\sqrt{3}} \bigr)+ \frac{4(1+\pi\sigma)\log e}{e^{2\pi\sigma}-1}.
\label{ineq-rioul2-2/3}
\end{align}
\end{theorem}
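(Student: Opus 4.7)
The plan is to parallel the proof of Theorem~\ref{thm-rioul2} for the two specific values $\alpha=\frac12$ and $\alpha=\frac23$, combining the mixed Kullback inequality~\eqref{ineq-sigma-alpha}, the Poisson identities~\eqref{Z'1/2}--\eqref{Z'2/3} of Lemma~\ref{lem-1/2-2/3}, and the elementary bound $\log(1+z)<z\log e$ valid for $z>0$.

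First, specialize~\eqref{ineq-sigma-alpha}. For $\alpha=\frac12$ the prefactor $\frac{\alpha}{1-\alpha}\log\frac{2\alpha}{3\alpha-1}$ collapses to $\log 2$ and the summand to $\bigl(1+(\frac{x-\mu}{\sigma})^2\bigr)^{-1}$; for $\alpha=\frac23$ they become $2\log\frac43$ and $\bigl(1+\frac13(\frac{x-\mu}{\sigma})^2\bigr)^{-2}$. Up to a multiplicative normalization each summand is the escort density $f_\alpha$ appearing in Lemma~\ref{lem-1/2-2/3}, so the sum equals that normalization constant times $Z'_\alpha$. Absorbing this factor into the leading term recovers the advertised constants $\log(2\pi\sigma)$ and $\log\bigl(\frac{8\pi\sigma}{3\sqrt3}\bigr)$, reducing the task to bounding $\log Z'_\alpha$.

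Second, bound $Z'_\alpha-1$ via the Fourier-series representations of Lemma~\ref{lem-1/2-2/3} using $|\cos 2\pi\mu x|\leq 1$. The $\alpha=\frac12$ case is a plain geometric tail and gives $|Z'_{1/2}-1|\leq\frac{2}{e^{2\pi\sigma}-1}$ immediately. The $\alpha=\frac23$ case is an arithmetic–geometric tail: setting $q=e^{-2\pi\sigma}$ one computes $\sum_{x\geq 1}(1+2\pi\sigma x)q^x=\frac{q(1-q+2\pi\sigma)}{(1-q)^2}$, and the goal is to collapse this to the cleaner $\frac{2(1+\pi\sigma)q}{1-q}=\frac{2(1+\pi\sigma)}{e^{2\pi\sigma}-1}$. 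A short calculation shows this reduces to the inequality $(1+2\pi\sigma)e^{-2\pi\sigma}\leq 1$, which is itself a one-line consequence of $e^t\geq 1+t$.

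Finally, apply $\log(1+z)<z\log e$ to each of the resulting bounds on $\log Z'_\alpha$ and combine with the leading terms from the first step to obtain~\eqref{ineq-rioul2-1/2} and~\eqref{ineq-rioul2-2/3}. Strictness is inherited from this last step, since the estimate on $|Z'_\alpha-1|$ is strictly positive for $\sigma>0$. The only mildly delicate point is the arithmetic–geometric collapse for $\alpha=\frac23$; everything else is direct bookkeeping on the argument already used for Theorem~\ref{thm-rioul2}.
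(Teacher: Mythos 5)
Your proposal follows the paper's own proof essentially verbatim: specialize~\eqref{ineq-sigma-alpha} to $\alpha=\frac12$ and $\alpha=\frac23$, identify the resulting sums with $Z'_{1/2}$ and $Z'_{2/3}$ via Lemma~\ref{lem-1/2-2/3}, bound the Fourier tails by a geometric and an arithmetic--geometric series, and finish with $\log(1+z)<z\log e$. The only cosmetic difference is how the arithmetic--geometric tail is collapsed to $\frac{2(1+\pi\sigma)}{e^{2\pi\sigma}-1}$: you reduce it to $(1+2\pi\sigma)e^{-2\pi\sigma}\leq 1$, while the paper splits the sum and invokes $2\pi\sigma<e^{2\pi\sigma}-1$, which is the same elementary fact.
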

\begin{proof}
The sum in the r.h.s. of~\eqref{Z'1/2} is bounded by
$\sum_{x\geq 1} e^{-2\pi\sigma x} = \frac{1}{e^{2\pi\sigma}-1}$.  Substituting in~\eqref{ineq-sigma-alpha} and using the inequality $\log(1+z)< (\log e)z$ (when $z>0$) gives~\eqref{ineq-rioul2-1/2}. 

Likewise, the sum in the r.h.s. of~\eqref{Z'2/3} is bounded by
$\sum_{x\geq 1} (1+2\pi \sigma x)e^{-2\pi\sigma x} 
=\frac{1+2\pi\sigma}{e^{2\pi\sigma}-1} + \frac{2\pi\sigma}{(e^{2\pi\sigma}-1)^2}
<2\frac{1+\pi\sigma}{e^{2\pi\sigma}-1}
$ (where we used that $2\pi\sigma<e^{2\pi\sigma}-1$).
Substituting in~\eqref{ineq-sigma-alpha} and using the inequality $\log(1+z)< (\log e)z$ (when $z>0$) gives~\eqref{ineq-rioul2-2/3}. 
\end{proof}

\begin{remark}
Using the Poisson summation formula on other Fourier transform pairs, it is possible to generalize Theorem~\ref{cor-1/2-2/3} to any value of the form $\alpha=\frac{k+1}{k+2}$ ($k=0,1,\ldots$) and prove that
\begin{equation}
H_{\frac{k+1}{k+2}}(X) <  \log(c_k\pi\sigma) +O(\sigma^ke^{-2\pi\sigma})
\end{equation}
where the constant $c_k$ is given by
\begin{equation}
c_k={ 4\sqrt{2k+1}} \binom{2k}{k} \biggl(\!\frac{k+1}{2(2k+1)}\!\biggr)^{k+1}. 
\end{equation}
The method of this and the previous section is not easily applicable to many other cases, however, since it
depends on the availability of simple expressions of Fourier transform pairs with sufficient decay at infinity.  
\end{remark}

\section{Application to Guessing}\label{sec-guess}

\subsection{Improved Massey's Inequality for Guessing}\label{ImprovedMasseyGuessing}

Inequality~\eqref{ineq-rioul1} can be thought of as an improvement of Massey's inequality for the guessing entropy~\cite{Massey94}. To see this,
let $\truemathcal{G}(X)$ be the number of successive guesses of some (discrete valued) secret~$X$ before the actual value of $X$ is found, and define the \emph{guessing entropy} as the minimum average number of guesses for a given probability distribution of $X$:
\begin{equation}\label{guessingentropy}
G(X) \triangleq \min \E\bigl(\truemathcal{G}(X)\bigr).
\end{equation}
Massey's original inequality reads~\cite{Massey94}
\begin{equation}\label{eq-Massey-guess-orig}
 G(X)\geq 2^{H(X)-2}+1 \text{ when $H(X)\geq 2$ bits.}
\end{equation}
A more general situation described by Arikan in~\cite{Arikan96} is when one guesses $X$ given the observed output $Y$ of some side channel.  The corresponding (conditional) guessing entropy is~\cite{Arikan96}
\begin{equation}\label{guessingentropycond}
G(X|Y) \triangleq \E\bigl(G(X|Y=y)\bigr)
\end{equation}
where the expectation is over $Y$'s distribution.

\begin{theorem}[Improvement of Massey's Inequality]\label{cor-one}
When $H(X)$ or $H(X|Y)$ is expressed in bits,
\begin{align}\label{ineq-rioul2}
G(X) &> \frac{2^{H(X)}}{e} + \frac{1}{2}.
\\\label{ineq-rioul2|y}
G(X|Y) &> \frac{2^{H(X|Y)}}{e} + \frac{1}{2}.
\end{align}
\end{theorem}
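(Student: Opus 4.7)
The plan is to reduce the guessing bound directly to the Massey-type inequality~\eqref{ineq-rioul1} applied to an integer-valued nonnegative random variable derived from the optimal guessing strategy.

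First I would fix an optimal guessing strategy, i.e., one that achieves the minimum in~\eqref{guessingentropy}. Such a strategy lists the values of $X$ in order of decreasing probability, so the guess number $\mathcal{G}(X)$ is a bijection on the support of $X$ and takes values in $\{1,2,\ldots\}$. In particular, $H(\mathcal{G}(X)) = H(X)$ and $\E[\mathcal{G}(X)] = G(X)$. Now set $Z \triangleq \mathcal{G}(X) - 1$, which is integer-valued with $Z \geq 0$, mean $\mu_Z = G(X) - 1$, and entropy $H(Z) = H(\mathcal{G}(X)) = H(X)$.

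Applying inequality~\eqref{ineq-rioul1} to $Z$ (which is precisely the Massey-type bound for integer-valued nonnegative variables with fixed mean) yields
\begin{equation*}
H(X) = H(Z) < \log\!\bigl(e(\mu_Z + \tfrac{1}{2})\bigr) = \log\!\bigl(e(G(X) - \tfrac{1}{2})\bigr),
\end{equation*}
so that $2^{H(X)} < e\bigl(G(X) - \tfrac{1}{2}\bigr)$, which rearranges to~\eqref{ineq-rioul2}. The argument is clean because the strict inequality in~\eqref{ineq-rioul1} is inherited directly, and no condition on the size of $H(X)$ is needed (unlike the original~\eqref{eq-Massey-guess-orig}).

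For the conditional version~\eqref{ineq-rioul2|y}, I would apply the unconditional bound conditionally on each realization $Y=y$, giving
\begin{equation*}
G(X|Y=y) > \frac{2^{H(X|Y=y)}}{e} + \frac{1}{2}.
\end{equation*}
Taking the expectation over $Y$ and using the definition~\eqref{guessingentropycond} together with Jensen's inequality for the convex function $t \mapsto 2^t$ gives
\begin{equation*}
G(X|Y) > \frac{\E_Y\bigl[2^{H(X|Y=y)}\bigr]}{e} + \frac{1}{2} \geq \frac{2^{\E_Y[H(X|Y=y)]}}{e} + \frac{1}{2} = \frac{2^{H(X|Y)}}{e} + \frac{1}{2}.
\end{equation*}
The only potentially delicate step is confirming that the optimal guessing strategy makes $\mathcal{G}$ injective on $\mathrm{supp}(X)$ so that $H(\mathcal{G}(X))=H(X)$; this is immediate from the sort-by-probability construction, so I do not expect any real obstacle. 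Everything else is a direct substitution into~\eqref{ineq-rioul1}.
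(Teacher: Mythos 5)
Your proposal is correct and follows essentially the same route as the paper: apply the fixed-mean Massey-type bound~\eqref{ineq-rioul1} to the shifted optimal guess count $\mathcal{G}(X)-1\geq 0$ (using $H(\mathcal{G}(X))=H(X)$ and $\E[\mathcal{G}(X)]=G(X)$), then condition on $Y=y$, take expectations, and invoke Jensen's inequality for the exponential. No gaps; this matches the paper's argument step for step.
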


\begin{proof}
As explained in~\cite{Massey94} the optimal strategy leading to the minimum~\eqref{guessingentropy} require $k$ guesses with probability 
\begin{equation}
\P(\mathcal{G}(X)=k)=p_{(k)} \qquad (k=1,2,\ldots)
\end{equation}
where $p_{(k)}$ is the $k$th largest probability in $X$'s distribution.
Applying~\eqref{ineq-rioul1} to $\mathcal{G}(X)-1\geq 0$, and noting that $\mu=G(X)-1$ and $H(\mathcal{G}(X))=H(X)$ yields
\begin{equation}
H(X) < \log\bigl(e(G(X)- \tfrac{1}{2})\bigr) 
\end{equation}
which is~\eqref{ineq-rioul2}. 
Applying~\eqref{ineq-rioul2} to $X|Y=y$ for every $y$, taking the expectation over $Y$'s distribution and applying Jensen's inequality to the exponential function gives~\eqref{ineq-rioul2|y}.
\end{proof}

\begin{figure}[h!]
\centering
\includegraphics[width=.5\linewidth]{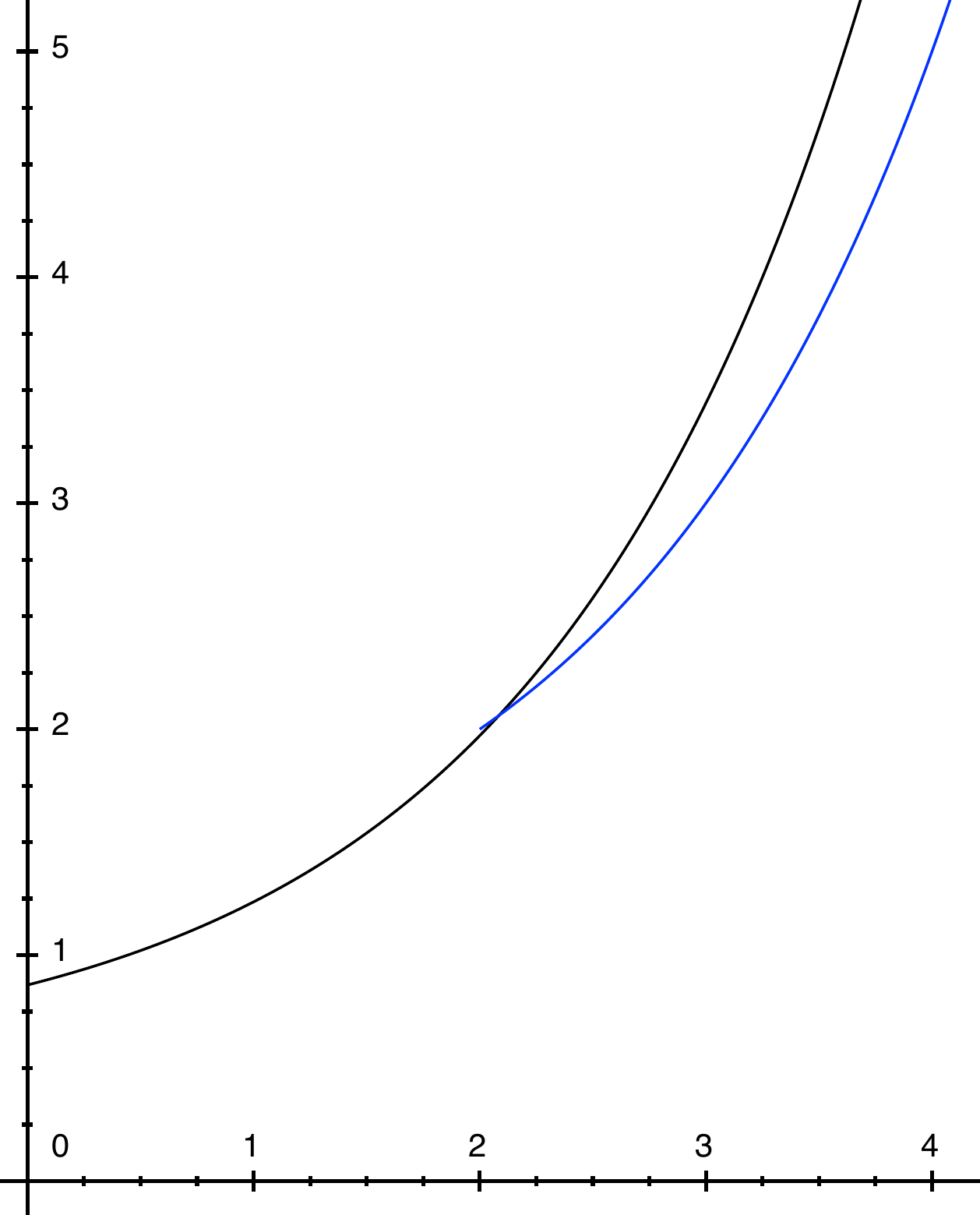}
\setlength{\unitlength}{.56pt}
\begin{picture}(0,0)
\put(-210,275){\footnotesize$\E[G(X|Y)]$}
\put(-30,30){\footnotesize$H(X|Y)$}
\put(-170,170){$\frac{2^{H(X|Y)}}{e} \!+\!\frac{1}{2}$}
\put(-95,130){\textcolor{blue}{$_{2^{H(X|Y)-2} + 1}$}}
\end{picture}
\caption{Massey's original (\textcolor{blue}{blue}) and improved (black) lower bounds.} \label{fig}
\end{figure}

\begin{remark}
Inequality~\eqref{ineq-rioul2} improves Massey's original inequality~\eqref{eq-Massey-guess-orig} as soon as 
$H(X)  \geq  \log\frac{2e}{4-e} \approx 2.0846\ldots$ bits
and is also valid for $H(X)< 2$ bits. %
Fig.~\ref{fig} shows that the improvement over Massey's original inequality is particularly important for large values of entropy, by the factor ${4}/{e}$. %
It is quite startling to notice that the approach followed by Massey back in the 1970s~\cite{Massey88} can improve the result of his 1994 paper~\cite{Massey94} so much.

Massey's inequality was already improved by the author, with a very different proof, in the (weaker) form $G(X|Y) > \frac{2^{H(X|Y)}}{e}$, see~\cite{Rioul19} and~\cite{Popescu20}. See also~\cite{Popescu19,Popescu20} for a different kind of improvement. 

Inequality~\eqref{ineq-rioul2} or~\eqref{ineq-rioul2|y} can be shown to be the best among all possible bounds of the form $G> a\cdot b^H + c$~\cite{TanasescuChoudaryRioulPopescu21}. In particular, for large values of entropy, the gain factor $\frac{4}{e}$ of~\eqref{ineq-rioul2} over~\eqref{eq-Massey-guess-orig} is optimal, as well as the additive constant $\frac{1}{2}$.
\end{remark}

\subsection{Generalization to Rényi entropies}

In this Subsection, we consider Rényi's entropy $H_\alpha(X)$ as well as 
Arimoto's conditional entropy $H_\alpha(X|Y)$~\cite{Arimoto75,FehrBerens14} of order $\alpha>0$ which finds natural application to guessing with side information~\cite{Arikan96,SasonVerdu18,SasonVerdu18bis}. 

\begin{theorem}\label{cor-ineq-rioul2-alpha}
When $H_\alpha(X)$ and $H_\alpha(X|Y)$ are expressed in bits, for any $\alpha>\frac{1}{2}$,
\begin{align}\label{ineq-rioul2-alpha}
G(X) &> \frac{2^{H_\alpha(X)}}{(1\!+\!\frac{\alpha-1}{\alpha})^\frac{\alpha}{\alpha-1}} +\frac{1}{2}
=(1\!-\!\tfrac{1-\alpha}{\alpha})^\frac{\alpha}{1-\alpha}\!\cdot\! 2^{H_\alpha(X)}+\frac{1}{2}.
\\
\label{ineq-rioul2-alpha|y}
G(X|Y)\! &>\! \frac{2^{H_\alpha(X|Y)}}{(1\!+\!\frac{\alpha-1}{\alpha})^\frac{\alpha}{\alpha-1}} \!+\!\frac{1}{2}
=(1\!-\!\tfrac{1-\alpha}{\alpha})^\frac{\alpha}{1-\alpha}\!\cdot\! 2^{H_\alpha(X|Y)}\!+\!\frac{1}{2}.
 \end{align}
\end{theorem}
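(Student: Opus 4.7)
The plan is to imitate the proof of Theorem~\ref{cor-one} but replace Massey's inequality for Shannon entropy by its $\alpha$-entropy version~\eqref{ineq-rioul1-alpha}, which is available for any $\alpha>\frac{1}{2}$. As in the unconditional Shannon case, the optimal guessing strategy puts $\P(\mathcal{G}(X)=k)=p_{(k)}$ (the $k$th largest probability), so that $\mathcal{G}(X)-1$ is an integer-valued nonnegative random variable with mean $G(X)-1$ whose distribution is merely a permutation of that of $X$; hence $H_\alpha(\mathcal{G}(X)-1)=H_\alpha(\mathcal{G}(X))=H_\alpha(X)$, since R\'enyi entropy depends only on the unordered multiset of probabilities.

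First I would apply~\eqref{ineq-rioul1-alpha} to $\mathcal{G}(X)-1$, obtaining
\begin{equation*}
H_\alpha(X) < \log\bigl(G(X)-\tfrac{1}{2}\bigr)+\tfrac{\alpha}{\alpha-1}\log\tfrac{2\alpha-1}{\alpha}.
\end{equation*}
Exponentiating in base~$2$ and noting that $\tfrac{2\alpha-1}{\alpha}=1+\tfrac{\alpha-1}{\alpha}$ yields
\begin{equation*}
G(X)>\tfrac{2^{H_\alpha(X)}}{(1+(\alpha-1)/\alpha)^{\alpha/(\alpha-1)}}+\tfrac{1}{2},
\end{equation*}
which is~\eqref{ineq-rioul2-alpha}. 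The case $\alpha=1$ is recovered as a limit since $(1+(\alpha-1)/\alpha)^{\alpha/(\alpha-1)}\to e$.

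Second, for the conditional inequality~\eqref{ineq-rioul2-alpha|y}, I would apply~\eqref{ineq-rioul2-alpha} to the conditional law of $X$ given $Y=y$ for each $y$ and then take the expectation over~$Y$:
\begin{equation*}
G(X|Y)=\E_Y\bigl[G(X|Y\!=\!y)\bigr]>\tfrac{\E_Y[2^{H_\alpha(X|Y=y)}]}{(1+(\alpha-1)/\alpha)^{\alpha/(\alpha-1)}}+\tfrac{1}{2}.
\end{equation*}
The main step, and what I expect to be the only delicate point, is to connect $\E_Y[2^{H_\alpha(X|Y=y)}]$ to Arimoto's conditional entropy $H_\alpha(X|Y)$. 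Writing $a_y=(\sum_x P(x|y)^\alpha)^{1/\alpha}$, one has $2^{H_\alpha(X|Y=y)}=a_y^{\alpha/(1-\alpha)}$ while $2^{H_\alpha(X|Y)}=\E_Y[a_y]^{\alpha/(1-\alpha)}$ by Arimoto's definition. Hence one needs the Jensen-type inequality
\begin{equation*}
\E_Y\bigl[a_y^{\alpha/(1-\alpha)}\bigr]\geq \E_Y[a_y]^{\alpha/(1-\alpha)}.
\end{equation*}

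This follows from the convexity of $t\mapsto t^{\alpha/(1-\alpha)}$ on $(0,\infty)$: for $\frac{1}{2}<\alpha<1$ the exponent $\alpha/(1-\alpha)$ exceeds $1$, while for $\alpha>1$ it is negative, and in both ranges the map is convex (the degenerate case $\alpha=1$ is already covered by Theorem~\ref{cor-one}). Combining this Jensen step with the inequality displayed just above then produces~\eqref{ineq-rioul2-alpha|y}. Thus the conditional bound reduces entirely to the unconditional one plus a convexity argument tailored to Arimoto's definition, and no additional tool beyond~\eqref{ineq-rioul1-alpha} is needed.
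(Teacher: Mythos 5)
Your proposal is correct and follows essentially the same route as the paper: the unconditional bound comes from applying the fixed-mean Rényi inequality~\eqref{ineq-rioul1-alpha} to $\mathcal{G}(X)-1$ (using $H_\alpha(\mathcal{G}(X))=H_\alpha(X)$ and mean $G(X)-1$), and the conditional bound from Arimoto's definition~\eqref{def-arimoto} together with Jensen's inequality for the convex map $t\mapsto t^{\frac{\alpha}{1-\alpha}}$ when $\alpha>\tfrac12$. Your explicit computation with $a_y=\bigl(\sum_x P(x|y)^\alpha\bigr)^{1/\alpha}$ is just a spelled-out version of the paper's Jensen step, so there is no substantive difference.
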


\begin{proof}
Similarly as in the preceding Subsection~\ref{ImprovedMasseyGuessing}, 
the $\mu+\frac{1}{2}$ term in~\eqref{ineq-rioul1-alpha} is replaced by $G(X)-\frac{1}{2}$, and one immediately obtains~\eqref{ineq-rioul2-alpha}.

Arimoto's conditional $\alpha$-entropy~\cite{Arimoto75} satisfies %
$H_\alpha(X|Y) 
=  \frac{\alpha}{1-\alpha} \log \E \exp \tfrac{1-\alpha}{\alpha} H_\alpha(X|Y=y)$.
Thus if $H_\alpha(X|Y)$ is expressed in bits, one has
\begin{equation}\label{def-arimoto}
 2^{H_\alpha(X|Y)} = \Bigl(\E\, 2^{\frac{1-\alpha}{\alpha}H_\alpha(X|Y=y)}\Bigr)^{\frac{\alpha}{1-\alpha}}
\end{equation}
where the expectation is over $Y$'s distribution.
Applying~\eqref{ineq-rioul2-alpha} to $X|Y=y$ for every $y$, taking the expectation over $Y$'s distribution and applying Jensen's inequality to the function $x\mapsto x^{\frac{\alpha}{1-\alpha}}$, which is strictly convex when $\alpha>\frac{1}{2}$, gives~\eqref{ineq-rioul2-alpha|y}.
\end{proof}

\begin{remark}
Since the factor $(1+\frac{\alpha-1}{\alpha})^\frac{\alpha}{\alpha-1}$ converges to $e$ as $\alpha\to 1$, Theorem~\ref{cor-one} is recovered by letting $\alpha\to 1$.
This factor is nonincreasing in $\alpha$, and since $1+x<e^x$ for $x\ne 0$, the term $(1+\frac{\alpha-1}{\alpha})^\frac{\alpha}{\alpha-1}=\frac{1}{(1-\frac{1-\alpha}{\alpha})^\frac{\alpha}{1-\alpha}}$ is greater\smallskip{} than $e$ for $\alpha<1$ and less than $e$ for $\alpha>1$.
Since $H_\alpha(X)$ is also nonincreasing in $\alpha$, none of the inequalities~\eqref{ineq-rioul2-alpha} (or~\eqref{ineq-rioul2-alpha|y}) is a trivial consequence of another for a different value of $\alpha$.
\end{remark}

\begin{example}\label{ex-alpha-G}
Thus, referring to Example~\ref{ex-alpha-mu+1/2},
\begin{align}
G(X) & >  \frac{1}{4}2^{H_{\frac{2}{3}}({X})}+\frac{1}{2}\\[1ex]
G(X) & >\frac{8}{27}2^{H_{\frac{3}{4}}({X})}+\frac{1}{2}\\[1ex]
G(X) & >\frac{4}{9}2^{H_{2}({X})}+\frac{1}{2}
\end{align}
and similarly for $X|Y$, 
where $\frac{9}{4}<e<\frac{27}{8}<4$. 
\end{example}

\begin{remark}\label{rmk-13}
By Remark~\ref{rmk-nogo}, no inequality of the type~\eqref{ineq-rioul2-alpha} or~\eqref{ineq-rioul2-alpha|y} can generally hold for $\alpha\leq \frac{1}{2}$.  This does not contradict Arikan's inequality~\cite{Arikan96} for the limiting case $\alpha=\frac{1}{2}$, which reads
\begin{equation}\label{ineq-arikan}
G(X|Y) \geq  \frac{2^{H_{\frac{1}{2}}\!(X|Y)}}{1\!+\!\ln M},
\end{equation}
because it was established when $X$ takes a \emph{finite} number $M$ of possible values. As $M\to+\infty$ the r.h.s. vanishes.
In other words, it is impossible to improve Arikan's inequality~\eqref{ineq-arikan} with some positive constant independent of $M$.
\end{remark}

\subsection{Arikan-type Inequalities for Rényi Entropies of Small Orders}\label{atifresm}

By Remark~\ref{rmk-nogo} and \ref{rmk-13}, the results of the previous subsection cannot generalize to $\alpha\leq\frac{1}{2}$. However, when $X$ takes values in a \emph{finite} alphabet of size $M$,
Arikan's inequality~\eqref{ineq-arikan}  for $\alpha=\frac{1}{2}$ and extensions of it for $\alpha<\frac{1}{2}$ can still be obtained using Theorem~\ref{thm-one} (equation~\eqref{eq-hH==}) applied to $\mathcal{G}(X)$, on top of the $\alpha$-Kullback inequality (Theorem~\ref{alpha-kullback}). In this case the density~\eqref{eq-phi-alpha} has to be constrained in a interval of finite length which depends on~$M$.

A derivation is as follows. 
Recall that $\mathcal{G}(X)\geq 1$ has mean $G(X)$ and $\alpha$-entropy $H_\alpha(X)$.
For simplicity consider $\mathcal{U}$ to be zero-mean, uniformly distributed in $(-\frac{1}{2},\frac{1}{2})$, so that $\mathcal{X}$ has the same mean $G(X)$ and is supported in the interval $(\frac{1}{2}, M+\frac{1}{2})$. Now consider
\begin{equation}\label{suboptimalphi}
\phi(x) = \frac{x^{\frac{1}{\alpha-1}} }{Z}
\end{equation}
restricted in the same interval $(\frac{1}{2}, M+\frac{1}{2})$. Then~\eqref{ineq-general-continuous-alpha} gives $H_\alpha(X)=h_\alpha(\mathcal{X}) < \frac{\alpha}{1-\alpha} \log G(X) + \log Z_\alpha$, where the strictness of the inequality follows from the fact that $\mathcal{X}=X+\mathcal{U}$ (which has a staircase density) cannot have density $\phi$.
Since $\alpha<1$ the latter inequality reads
\begin{equation}\label{gen-arikan}
G(X) > \frac{2^{\frac{1-\alpha}{\alpha}H_\alpha(X)} }{Z_\alpha^\frac{1-\alpha}{\alpha}}
\end{equation}
where $Z_\alpha=\int_{1/2}^{M+1/2} x^{-\frac{\alpha}{1-\alpha}}\d x$. 
In particular we have the following
\begin{corollary}[Arikan's Inequality~\cite{Arikan96}, slightly improved]
For $\alpha=\frac{1}{2}$, 
\begin{align}\label{ineq-arikan+}
G(X) &>  \frac{2^{H_{\frac{1}{2}}\!(X)}}{\ln(2M+1)}\\
G(X|Y) &>  \frac{2^{H_{\frac{1}{2}}\!(X|Y)}}{\ln(2M+1)}.  \label{ineq-arikan+|y}
\end{align} 
\end{corollary}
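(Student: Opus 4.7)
The plan is to apply the general inequality~\eqref{gen-arikan} directly with $\alpha=\tfrac{1}{2}$. First, I would compute $Z_\alpha$ explicitly in this case: with $\frac{\alpha}{1-\alpha}=1$, one has
\begin{equation}
Z_{1/2}=\int_{1/2}^{M+1/2}\!\frac{\dx}{x}=\ln(2M+1).
\end{equation}
Since $\frac{1-\alpha}{\alpha}=1$ as well, both the exponent of $Z_\alpha$ in the denominator of~\eqref{gen-arikan} and the coefficient of $H_\alpha(X)$ in the numerator reduce to $1$, so~\eqref{gen-arikan} collapses to exactly the unconditional inequality~\eqref{ineq-arikan+}.

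For the conditional version~\eqref{ineq-arikan+|y}, I would apply~\eqref{ineq-arikan+} to the conditional distribution $X|Y\!=\!y$ for each $y$, giving
\begin{equation}
G(X|Y\!=\!y) > \frac{2^{H_{1/2}(X|Y=y)}}{\ln(2M+1)}.
\end{equation}
Then I would take the expectation over $Y$. The only thing to verify is that this recovers $2^{H_{1/2}(X|Y)}$ on the right-hand side. This is where Arimoto's definition is especially convenient at $\alpha=\tfrac{1}{2}$: specializing~\eqref{def-arimoto} with $\frac{1-\alpha}{\alpha}=1$ and $\frac{\alpha}{1-\alpha}=1$ yields the \emph{linear} identity
\begin{equation}
2^{H_{1/2}(X|Y)} = \E\bigl[2^{H_{1/2}(X|Y=y)}\bigr],
\end{equation}
so no appeal to Jensen's inequality is required, and taking $\E$ term-by-term produces~\eqref{ineq-arikan+|y} directly.

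There is essentially no obstacle: the main subtlety is simply to keep track that the density $\phi(x)=x^{1/(\alpha-1)}/Z$ in~\eqref{suboptimalphi} must be \emph{restricted} to the interval $(\tfrac{1}{2},M+\tfrac{1}{2})$ containing the support of $\mathcal{X}$ (otherwise $Z_\alpha$ would diverge at $\alpha=\tfrac{1}{2}$), and that the inequality is strict because the staircase density of $\mathcal{X}=\mathcal{G}(X)+\mathcal{U}$ cannot coincide with $\phi$. The slight improvement over Arikan's original bound~\eqref{ineq-arikan} (namely $\ln(2M+1)$ rather than $1+\ln M$) comes precisely from placing the support symmetrically around the integer values $1,\ldots,M$ via the choice of $\mathcal{U}$ uniform on $(-\tfrac{1}{2},\tfrac{1}{2})$, rather than shifting it.
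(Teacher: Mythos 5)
Your proposal is correct and follows essentially the same route as the paper: plug $Z_{1/2}=\int_{1/2}^{M+1/2}\frac{\dx}{x}=\ln\frac{M+1/2}{1/2}=\ln(2M+1)$ into~\eqref{gen-arikan}, then pass to the conditional version via the linearity of Arimoto's definition~\eqref{def-arimoto} at $\alpha=\tfrac12$, so that no Jensen step is needed. Your added remarks on restricting $\phi$ to $(\tfrac12,M+\tfrac12)$ and on strictness match the setup already used in the derivation of~\eqref{gen-arikan}.
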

\begin{proof}
Plugging $Z_{\frac{1}{2}}\!=\!\ln \frac{M+1/2}{1/2}$ %
 in~\eqref{gen-arikan} gives~\eqref{ineq-arikan+}.
Since by~\eqref{def-arimoto}, $2^{H_{\frac{1}{2}}(X|Y)} = \E\, 2^{H_{\frac{1}{2}}(X|Y=y)}$, this immediately gives~\eqref{ineq-arikan+|y}.
\end{proof}

\begin{remark}
Inequality~\eqref{ineq-arikan+|y} slightly improves Arikan's original inequality~\eqref{ineq-arikan} for $M>1$ because $\ln(2M+1)<\ln(eM)=\ln M +1$.
It can be found from Arikan's derivation~\cite{Arikan96} that the optimal constant in the denominator is in fact $1+\frac{1}{2}+\frac{1}{3}+\cdots+\frac{1}{M}=\ln M + 0.5772\ldots + O(\frac{1}{M})$ (see~\cite[Eqn.~(47)]{SasonVerdu18bis}). Here $\ln(2M+1)=\ln M +  0.6931\ldots + O(\frac{1}{M})$ is not optimal but fairly close.
\end{remark}

For even smaller Rényi orders we have the following
\begin{corollary}
For any $0<\alpha<\frac{1}{2}$,
\begin{align}\label{gen-arikan<1/2}
G(X) &> (1-\tfrac{\alpha}{1-\alpha})^\frac{1-\alpha}{\alpha}\cdot \dfrac{ 2^{\frac{1-\alpha}{\alpha}H_\alpha(X)+\frac{1-2\alpha}{\alpha}}  }   {  {(2M+1)^\frac{1-2\alpha}{\alpha}}}
\\\label{gen-arikan<1/2|}
G(X|Y) &> (1-\tfrac{\alpha}{1-\alpha})^\frac{1-\alpha}{\alpha}\cdot \dfrac{ 2^{\frac{1-\alpha}{\alpha}H_\alpha(X|Y)+\frac{1-2\alpha}{\alpha}}  }   {  {(2M+1)^\frac{1-2\alpha}{\alpha}}}.
\end{align}
\end{corollary}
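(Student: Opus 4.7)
The plan is to specialize the general inequality~\eqref{gen-arikan}, established in the preceding derivation, to the range $0<\alpha<\frac{1}{2}$ by computing the normalization constant $Z_\alpha=\int_{1/2}^{M+1/2} x^{-\frac{\alpha}{1-\alpha}}\dx$ explicitly. For $\alpha<\frac{1}{2}$ we have $\frac{\alpha}{1-\alpha}<1$, so the exponent $-\frac{\alpha}{1-\alpha}$ lies in $(-1,0)$ and the antiderivative gives
\begin{equation*}
Z_\alpha=\frac{1-\alpha}{1-2\alpha}\Bigl[(M+\tfrac{1}{2})^{\frac{1-2\alpha}{1-\alpha}}-(\tfrac{1}{2})^{\frac{1-2\alpha}{1-\alpha}}\Bigr]<\frac{1-\alpha}{1-2\alpha}\Bigl(\frac{2M+1}{2}\Bigr)^{\!\frac{1-2\alpha}{1-\alpha}}.
\end{equation*}

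Next I would raise this bound to the $\frac{1-\alpha}{\alpha}$ power (which is positive) and substitute into~\eqref{gen-arikan}. This gives
\begin{equation*}
G(X)>\Bigl(\tfrac{1-2\alpha}{1-\alpha}\Bigr)^{\!\frac{1-\alpha}{\alpha}}\!\cdot\frac{2^{\frac{1-\alpha}{\alpha}H_\alpha(X)}\cdot 2^{\frac{1-2\alpha}{\alpha}}}{(2M+1)^{\frac{1-2\alpha}{\alpha}}},
\end{equation*}
where I absorb the factor $2^{\frac{1-2\alpha}{\alpha}}$ coming from the denominator $2^{\frac{1-2\alpha}{1-\alpha}\cdot\frac{1-\alpha}{\alpha}}$ into the numerator exponent. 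Rewriting $\frac{1-2\alpha}{1-\alpha}=1-\frac{\alpha}{1-\alpha}$ yields~\eqref{gen-arikan<1/2}.

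For the conditional version~\eqref{gen-arikan<1/2|}, I would apply~\eqref{gen-arikan<1/2} to the conditional distribution $X\mid Y=y$ for each $y$ (the support size $M$ is a uniform upper bound), which gives
\begin{equation*}
G(X|Y=y)>C_{\alpha,M}\cdot 2^{\frac{1-\alpha}{\alpha}H_\alpha(X|Y=y)}
\end{equation*}
with $C_{\alpha,M}$ the constant in front of the exponential. Taking the expectation over $Y$ and using the linearity $G(X|Y)=\E G(X|Y=y)$ from~\eqref{guessingentropycond} together with Arimoto's identity $\E_Y\,2^{\frac{1-\alpha}{\alpha}H_\alpha(X|Y=y)}=2^{\frac{1-\alpha}{\alpha}H_\alpha(X|Y)}$ (which follows directly from~\eqref{def-arimoto}) produces the stated bound without any Jensen step.

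The main obstacle is purely algebraic bookkeeping of the exponents $\frac{1-\alpha}{\alpha}$, $\frac{1-2\alpha}{\alpha}$, and $\frac{1-2\alpha}{1-\alpha}$; there is no new analytical difficulty since the strict inequality in~\eqref{gen-arikan} already accounts for the fact that the dithered variable $\mathcal{X}=\mathcal{G}(X)+\mathcal{U}$ has a staircase density and hence cannot equal the continuous optimizer~\eqref{suboptimalphi}. Notably, no Jensen convexity argument is needed for the conditional case here, which is the crucial difference with the proof of Theorem~\ref{cor-ineq-rioul2-alpha}: the exponent $\frac{1-\alpha}{\alpha}$ matches exactly the one appearing in Arimoto's conditional $\alpha$-entropy, so the expectation passes through as an equality.
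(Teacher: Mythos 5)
Your proposal is correct and follows essentially the same route as the paper: bound $Z_\alpha=\int_{1/2}^{M+1/2}x^{-\frac{\alpha}{1-\alpha}}\dx$, plug it into~\eqref{gen-arikan}, and pass to the conditional case via~\eqref{def-arimoto} with no Jensen step since the exponent $\frac{1-\alpha}{\alpha}$ matches Arimoto's definition. Your explicit evaluation $Z_\alpha<\frac{1-\alpha}{1-2\alpha}\bigl(M+\frac{1}{2}\bigr)^{\frac{1-2\alpha}{1-\alpha}}$ is in fact the bound that produces exactly the stated prefactor $(1-\frac{\alpha}{1-\alpha})^{\frac{1-\alpha}{\alpha}}$, so the bookkeeping is right.
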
 
\begin{proof}
One has $Z_\alpha= \int_{1/2}^{M+1/2} x^{-\frac{\alpha}{1-\alpha}}\d x < (M+\frac{1}{2})^{\frac{1-2\alpha}{1-\alpha}}$. Plugging this in~\eqref{gen-arikan} gives~\eqref{gen-arikan<1/2}. 
The second inequality then follows from~\eqref{def-arimoto}, which reads 
$2^{\frac{1-\alpha}{\alpha}H_\alpha(X|Y)} = \E\, 2^{\frac{1-\alpha}{\alpha}H_\alpha(X|Y=y)}$.
\end{proof}
 
\begin{example}
For any $M$-ary random variable $X$,
\begin{equation}
G(X) > \frac{2^{2H_{\frac{1}{3}}(X)} }{2(2M+1)}
\end{equation}
\begin{equation}
G(X) > \frac{32}{27}\cdot \frac{2^{3H_{\frac{1}{4}}(X)} }{(2M+1)^2}.
\end{equation}
and similarly for $X|Y$.
\end{example}

\begin{remark}
The method  of this Subsection also works for $\frac{1}{2}<\alpha<1$. In this case $Z_\alpha$ is bounded by $\frac{1-\alpha}{2\alpha-1} 2^{\frac{2\alpha-1}{1-\alpha}}$ (independently of $M$) and applying~\eqref{gen-arikan} gives
\begin{align}
G(X) &> (\tfrac{\alpha}{1-\alpha}\!-\!1)^\frac{1-\alpha}{\alpha}\cdot  2^{\frac{1-\alpha}{\alpha}H_\alpha(X)-\frac{2\alpha-1}{\alpha}}   \\[1ex]
G(X|Y) &> (\tfrac{\alpha}{1-\alpha}\!-\!1)^\frac{1-\alpha}{\alpha}\cdot  2^{\frac{1-\alpha}{\alpha}H_\alpha(X|Y)-\frac{2\alpha-1}{\alpha}}.
\end{align}
However, it can be verified that these inequalities are always weaker than~\eqref{ineq-rioul2-alpha} and~\eqref{ineq-rioul2-alpha|y}, respectively. This is not surprising since the derivation of the latter in the preceding subsection used, instead of~\eqref{suboptimalphi}, the optimal $\alpha$-exponential density achieving equality in~\eqref{ineq-general-continuous-alpha}.
\end{remark}

\subsection{Generalization to Guessing Moments}

While entropy $H(X)$ is generalized by the $\alpha$-entropy $H_\alpha(X)$ for any $\alpha>0$, the guessing entropy $G(X)$ can be generalized by the $\rho$-guessing entropy for any $\rho>0$, defined as the $\rho$th order moment~\cite{Arikan96}
\begin{equation}
G_\rho(X)\triangleq  \min \E\bigl(\truemathcal{G}^\rho(X)\bigr).
\end{equation}
Again the minimum occurs when the guessing function is a ranking function: $\truemathcal{G}(x)=k$ iff $p(x)=\P(X=x)$ is the $k$th largest probability in $X$'s distribution.
The conditional version given side information $Y$ is given by~\cite{Arikan96}
\begin{equation}
G_\rho(X|Y) \triangleq \E\bigl({G}_\rho(X|Y=y)\bigr).
\end{equation}

\begin{theorem}\label{thm-GrhoH}
When $H(X)$ is expressed in bits,
\begin{align}\label{GrhoH}
 G_\rho(X)  &> \frac{2^{\rho H(X)}}{\rho  \bigl(\Gamma(1+\tfrac{1}{\rho})\bigr)^\rho e}\\[1ex]
 G_\rho(X|Y)  &> \frac{2^{\rho H(X|Y)}}{\rho  \bigl(\Gamma(1+\tfrac{1}{\rho})\bigr)^\rho e}.
 \label{Grho|H}
\end{align} 
\end{theorem}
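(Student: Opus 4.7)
The plan is to mirror the proof of Theorem~\ref{cor-one} (improved Massey inequality), but replace the mean-constraint inequality~\eqref{maxentmean} with its $\rho$-th moment analogue from Theorem~\ref{lem-three}. Recall that the optimal guessing strategy corresponds to a ranking function, so $\mathcal{G}(X)\in\{1,2,\ldots\}$ and $H(\mathcal{G}(X))=H(X)$, $\E\bigl(\mathcal{G}^\rho(X)\bigr)=G_\rho(X)$.

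First I would dither $\mathcal{G}(X)$ to obtain a continuous nonnegative random variable. Let $\mathcal{U}$ be uniformly distributed on $(0,1)$, independent of $X$, and set $\mathcal{X}\triangleq\mathcal{G}(X)-\mathcal{U}$. Since $\mathcal{G}(X)\geq 1$ we have $\mathcal{X}>0$ a.s., and since $\mathcal{X}\leq\mathcal{G}(X)$ and $\rho>0$, we get $\E(\mathcal{X}^\rho)\leq\E(\mathcal{G}^\rho(X))=G_\rho(X)$. By Theorem~\ref{thm-one} (with $\Delta=1$), $h(\mathcal{X})=H(\mathcal{G}(X))=H(X)$.

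Next, apply the one-sided case of Theorem~\ref{lem-three} (inequality~\eqref{ineq-gen-rho} without the factor $2$) to $\mathcal{X}>0$ with parameter $\theta=\E(\mathcal{X}^\rho)$:
\begin{equation*}
H(X)=h(\mathcal{X})<\tfrac{1}{\rho}\log\bigl(\rho e\,\E(\mathcal{X}^\rho)\bigr)+\log\Gamma(1+\tfrac{1}{\rho})
\leq \tfrac{1}{\rho}\log\bigl(\rho e\,G_\rho(X)\bigr)+\log\Gamma(1+\tfrac{1}{\rho}),
\end{equation*}
where the strict inequality follows because $\mathcal{X}$ has a staircase density and therefore cannot match the smooth generalized Gaussian density achieving equality. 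Exponentiating base $2$ and solving for $G_\rho(X)$ yields~\eqref{GrhoH}.

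For the conditional inequality~\eqref{Grho|H}, apply~\eqref{GrhoH} to the conditional distribution $X\mid Y=y$ for each $y$, take expectation over $Y$, and invoke Jensen's inequality for the convex function $x\mapsto 2^{\rho x}$:
\begin{equation*}
G_\rho(X|Y)=\E\bigl(G_\rho(X|Y=y)\bigr)>\frac{\E\bigl(2^{\rho H(X|Y=y)}\bigr)}{\rho\bigl(\Gamma(1+\tfrac{1}{\rho})\bigr)^\rho e}\geq\frac{2^{\rho H(X|Y)}}{\rho\bigl(\Gamma(1+\tfrac{1}{\rho})\bigr)^\rho e}.
\end{equation*}
No single step is genuinely difficult; the only point requiring care is choosing the dither so that $\mathcal{X}$ stays strictly positive (so the one-sided version of Theorem~\ref{lem-three} applies) while preserving the inequality $\E(\mathcal{X}^\rho)\leq G_\rho(X)$—this is why I take $\mathcal{U}$ uniform on $(0,1)$ and subtract rather than add it.
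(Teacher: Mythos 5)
Your proof is correct and is essentially the paper's own argument: dithering $\mathcal{G}(X)$ with a uniform variable on an interval of length one so that $\mathcal{X}\in(\mathcal{G}(X)-1,\mathcal{G}(X))$ (your subtraction of $\mathcal{U}\sim(0,1)$ is exactly the paper's addition of $\mathcal{U}$ uniform on $(-1,0)$), invoking Theorem~\ref{thm-one} to get $h(\mathcal{X})=H(X)$, applying the one-sided case of~\eqref{ineq-gen-rho} with $\theta=\E(\mathcal{X}^\rho)\leq G_\rho(X)$, obtaining strictness from the staircase density, and passing to the conditional version via Jensen applied to $x\mapsto 2^{\rho x}$. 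No substantive difference from the paper's proof.
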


\begin{proof}
Applying Theorem~\ref{thm-one} to $\truemathcal{G}(X)$ for uniformly distributed $\mathcal{U}$ over the interval $(-1,0)$, one has $0\leq \mathcal{X}=\truemathcal{G}(X)+\mathcal{U}\leq \truemathcal{G}(X)$ with $h(\mathcal{X})=H(\truemathcal{G}(X))=H(X)$. Since $\theta=\E(\mathcal{X}^\rho)\leq \E(\truemathcal{G}^\rho(X))=G_\rho(X)$,~\eqref{ineq-gen-rho} of Theorem~\ref{lem-three} (one-sided case)
 gives~\eqref{GrhoH}.
The inequality is strict because the staircase density of $\truemathcal{G}(X)+\mathcal{U}$ cannot coincide with the (one-sided) $\alpha$-Gaussian achieving equality in~\eqref{ineq-gen-rho}.
Applying~\eqref{GrhoH} to $X|Y=y$ for every $y$, take the expectation over $Y$'s distribution and applying Jensen's inequality to the exponential function gives~\eqref{Grho|H}.
\end{proof}

\begin{remark}
During the revision process of this paper, the author became aware that~\eqref{GrhoH} (with an additional $o(1)$ term as $G_\rho(X)\to+\infty$) was obtained by Weinberger and Shayevitz in~\cite[Lemma~2]{WeinbergerShayevitz20} using a similar method.
\end{remark}

\begin{remark}
For $\rho=1$ we recover~\eqref{ineq-rioul2} without the additive constant $1/2$. 
This suboptimality comes from the fact that 
$\theta=\E(\mathcal{X}^\rho)=\E\bigl((\truemathcal{G}(X)+\mathcal{U})^\rho\bigr)$ cannot be determined as a function of $\E(\truemathcal{G}^\rho(X))=G_\rho(X)$ alone when $\rho\ne 1$.
\end{remark}

\begin{example}
For any discrete random variable $X$,
\begin{align}
G_2(X) &> 2 \cdot \frac{2^{2H(X)}}{\pi e}\\[1ex]
G_4(X) &> 8 \cdot \frac{2^{4H(X)}}{G^2\pi^3 e}
\end{align}
and similarly for $X|Y$, 
where $G%
=0.834626841674\ldots$ is Gauss's constant.
\end{example}

For $\alpha$-entropies we have the following
\begin{theorem}\label{thm-quinze}
When $H_\alpha(X)$ and $H_\alpha(X|Y)$ are expressed in bits, and $\alpha>\frac{1}{1+\rho}$, 
\begin{align}\label{GrhoHalpha}
G_\rho(X)&> 
\begin{cases}
\dfrac{\displaystyle 2^{\rho H_\alpha({X})}}{
  \bigl(\!\frac{(1+\rho)\alpha-1}{1-\alpha}\!\bigr)
 \bigl(\! \frac{\rho\alpha}{(1+\rho)\alpha-1} \!\bigr)^{\!\frac{\rho}{1-\alpha}}
 \Bigl(\! \frac{\Gamma(\frac{1}{\rho}+1)\Gamma(\frac{1}{1-\alpha}-\frac{1}{\rho})}{\Gamma(\frac{1}{1-\alpha})}\!\Bigr)^{\!\rho}}
\\\qquad\qquad\qquad\qquad\qquad\quad\text{for $\frac{1}{1+\rho}<\alpha<1$;}
\\[2ex] 
\dfrac{\displaystyle 2^{\rho H_\alpha({X})}}{
 \bigl(\!\frac{(1+\rho)\alpha-1}{\alpha-1}\!\bigr)
 \bigl(\frac{(1+\rho)\alpha-1}{\rho\alpha}\bigr)^{\frac{\rho}{\alpha-1}}
 \Bigl( \frac{\Gamma(\frac{1}{\rho}+1)\Gamma(\frac{\alpha}{\alpha-1})}{\Gamma(\frac{\alpha}{\alpha-1}+\frac{1}{\rho})}\Bigr)^{\!\rho}}
\\\qquad\qquad\qquad\qquad\qquad\quad\text{for $\alpha>1$,}
\end{cases}
\\\label{Grho|Halpha}
{G}_{\!\rho}(\!X|Y\!)\!&>\!\!
\begin{cases}
\!\dfrac{\displaystyle 2^{\rho H_\alpha({X|Y})}}{
\!  \bigl(\!\frac{(1+\rho)\alpha-1}{1-\alpha}\!\bigr)
 \bigl(\! \frac{\rho\alpha}{(1+\rho)\alpha-1} \!\bigr)^{\!\frac{\rho}{1-\alpha}}
\! \Bigl(\! \frac{\Gamma(\frac{1}{\rho}+1)\Gamma(\frac{1}{1-\alpha}-\frac{1}{\rho})}{\Gamma(\frac{1}{1-\alpha})}\!\Bigr)^{\!\rho}}
\\\qquad\qquad\qquad\qquad\qquad\quad\text{for $\frac{1}{1+\rho}<\alpha<1$;}
\\[2ex] 
\!\dfrac{\displaystyle 2^{\rho H_\alpha({X|Y})}}{
\! \bigl(\!\frac{(1+\rho)\alpha-1}{\alpha-1}\!\bigr)
 \bigl(\frac{(1+\rho)\alpha-1}{\rho\alpha}\bigr)^{\!\frac{\rho}{\alpha-1}}
 \Bigl( \frac{\Gamma(\frac{1}{\rho}+1)\Gamma(\frac{\alpha}{\alpha-1})}{\Gamma(\frac{\alpha}{\alpha-1}+\frac{1}{\rho})}\Bigr)^{\!\rho}}
\\\qquad\qquad\qquad\qquad\qquad\quad\text{for $\alpha>1$.}
\end{cases}
\end{align}
\end{theorem}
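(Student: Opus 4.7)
The plan is to mirror the proof of Theorem~\ref{thm-GrhoH} almost verbatim, replacing the Shannon-entropy bound \eqref{ineq-gen-rho} by its $\alpha$-entropy version~\eqref{eq-maxentgen-alpha} from Theorem~\ref{lem-three}. The condition $\alpha>\frac{1}{1+\rho}$ in the statement is precisely the one under which \eqref{eq-maxentgen-alpha} is available and under which the (power-mean) step needed for the conditional version goes through.

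First, for the unconditional inequality \eqref{GrhoHalpha}, I apply Theorem~\ref{thm-one} with $\mathcal{U}$ uniform on $(-1,0)$ so that $\mathcal{X}=\truemathcal{G}(X)+\mathcal{U}$ satisfies $0\leq\mathcal{X}\leq\truemathcal{G}(X)$ (since $\truemathcal{G}(X)\geq 1$) and $h_\alpha(\mathcal{X})=H_\alpha(\truemathcal{G}(X))=H_\alpha(X)$. Setting $\theta=\E(\mathcal{X}^\rho)\leq \E(\truemathcal{G}^\rho(X))=G_\rho(X)$, I then feed this $\theta$ into \eqref{eq-maxentgen-alpha} in its one-sided form (factor $2$ inside the logarithm removed, since $\mathcal{X}\geq 0$). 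The bound is monotone increasing in $\theta$ through its $\frac{1}{\rho}\log\theta$ term, so $\theta\leq G_\rho(X)$ may be substituted. Isolating $G_\rho(X)$ in the resulting inequality yields exactly \eqref{GrhoHalpha}, case by case on $\alpha\lessgtr 1$, with the denominator being $2^{\rho K_\alpha}$ for $K_\alpha$ the $\theta$-independent part of the right-hand side of~\eqref{eq-maxentgen-alpha}. The inequality is strict because the staircase density of $\mathcal{X}=\truemathcal{G}(X)+\mathcal{U}$ cannot coincide with the generalized $\alpha$-Gaussian that achieves equality in~\eqref{eq-maxentgen-alpha}.

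For the conditional inequality \eqref{Grho|Halpha}, I apply the unconditional bound pointwise to $X\mid Y\!=\!y$, obtaining $G_\rho(X\mid Y\!=\!y)> 2^{\rho H_\alpha(X\mid Y=y)}/2^{\rho K_\alpha}$, and take expectation over $Y$. What is needed is the comparison
\begin{equation*}
\E\bigl[2^{\rho H_\alpha(X\mid Y=y)}\bigr]\geq 2^{\rho H_\alpha(X\mid Y)}.
\end{equation*}
Setting $u_y=2^{H_\alpha(X\mid Y=y)}>0$ and $q=\frac{1-\alpha}{\alpha}$, Arimoto's definition \eqref{def-arimoto} reads $2^{H_\alpha(X\mid Y)}=(\E u_y^{q})^{1/q}$. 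The constraint $\alpha>\frac{1}{1+\rho}$ is equivalent to $q<\rho$ (in both regimes $\alpha<1$ and $\alpha>1$), so the generalized-mean inequality $(\E u^{q})^{1/q}\leq (\E u^{\rho})^{1/\rho}$ applies; raising to the positive power $\rho$ gives $2^{\rho H_\alpha(X\mid Y)}=(\E u_y^{q})^{\rho/q}\leq \E u_y^\rho$, which is exactly what is required. This replaces the Jensen-on-the-exponential step of the proof of Theorem~\ref{thm-GrhoH} and covers both the $\alpha<1$ (where $q>0$) and $\alpha>1$ (where $q<0$) cases uniformly.

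The only delicate point is the conditional step: one must recognize that the right convexity tool is the power-mean inequality rather than Jensen on the exponential (which is what worked for Shannon entropy because $\alpha\to 1$ degenerates the Arimoto averaging), and check that the threshold $\alpha>\frac{1}{1+\rho}$ guarantees $q<\rho$ in both regimes so that the direction of the inequality is preserved. The remaining computations---verifying that $2^{\rho K_\alpha}$ literally equals the denominators written in \eqref{GrhoHalpha} and \eqref{Grho|Halpha}, case $\alpha<1$ and case $\alpha>1$---are a mechanical rearrangement of \eqref{eq-maxentgen-alpha} and need not be spelled out beyond pointing at each factor.
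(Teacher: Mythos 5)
Your proposal is correct and follows essentially the same route as the paper: the unconditional bound mimics the proof of Theorem~\ref{thm-GrhoH} with the one-sided inequality~\eqref{eq-maxentgen-alpha} in place of~\eqref{ineq-gen-rho}, and your power-mean step $(\E u^{q})^{\rho/q}\leq \E u^{\rho}$ for $q=\frac{1-\alpha}{\alpha}<\rho$ is exactly the paper's application of Jensen's inequality to the strictly convex map $x\mapsto x^{\frac{\rho\alpha}{1-\alpha}}$ (valid precisely when $\alpha>\frac{1}{1+\rho}$), via~\eqref{def-arimoto}.
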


\begin{proof}
The proof of~\eqref{GrhoHalpha} is similar to the proof of~\eqref{GrhoH} in Theorem~\ref{thm-GrhoH} using inequality~\ref{eq-maxentgen-alpha} of
Theorem~\ref{lem-three} (one-sided case).
For~\eqref{Grho|Halpha} one proceeds as in the proof of Theorem~\ref{cor-ineq-rioul2-alpha} using~\eqref{def-arimoto} and the fact that $x\mapsto x^{\frac{\rho\alpha}{1-\alpha}}$ is strictly convex for all $\alpha>\frac{1}{1+\rho}$.
\end{proof}

\begin{example}
For any discrete random variable $X$,
\begin{align}
G_2(X) &>  \frac{2^{2H_{1/2}(X)}}{\pi^2}\\[1ex]
G_2(X) &> \frac{27}{16} \cdot \frac{2^{2H_{2/3}(X)}}{\pi^2}\\[1ex]
G_2(X) &> \frac{36}{125} \cdot {2^{2H_{2}(X)}}\\[1ex]
G_2(X) &> 3 \cdot \frac{2^{2H_{3}(X)}}{\pi^2}\\[1ex]
G_3(X) &> \frac{9}{2} \cdot \frac{2^{3H_{1/2}(X)}}{\sqrt{3}\,\pi^3}\\[1ex]
G_3(X) &> \frac{512}{2401} \cdot {2^{3H_{2}(X)}}\\[1ex]
G_4(X) &>  \frac{2^{4H_{1/3}(X)}}{G^4\pi^4}\\[1ex]
G_4(X) &>  \frac{27}{4} \cdot\frac{2^{4H_{1/2}(X)}}{\pi^4}\\[1ex]
G_4(X) &>  \frac{823543}{82944} \cdot\frac{2^{4H_{2/3}(X)}}{\pi^4}\\[1ex]
G_4(X) &>  \frac{10000}{59049} \cdot{2^{4H_{2}(X)}}\\[1ex]
G_4(X) &>  \frac{80}{9} \cdot\frac{2^{4H_{5}(X)}}{G^4\pi^4}
\end{align}
and similarly for $X|Y$, 
where $G%
=0.834626841674\ldots$ is Gauss's constant.
\end{example}

\begin{remark}
The reason why simple closed-form lower bounds on guessing entropy are obtained is due to the fact that Massey's approach uses bounds on continuous $\alpha$-entropies. Such simple lower bounds could not obtained by previous methods~\cite[Rmk.~5]{SasonVerdu18bis}. 
\end{remark}

\begin{remark}
While Theorem~\ref{thm-quinze} shows that $G_\rho(X)$ can always be lower-bounded by an exponential function of $H_{\alpha}(X)$ for any $\alpha>\frac{1}{1+\rho}$, such an inequality is impossible for $\alpha\leq\frac{1}{1+\rho}$ in general (when the number of possible values of $X$ is infinite). 
In fact, when $X$ has distribution $\P(X=k)=\frac{c}{(k\log k)^{\rho+1}}$\smallskip{} and $\alpha\leq \frac{1}{1+\rho}$, the series $\sum \frac{1}{k(\log k)^{\rho+1}}$ \smallskip{}converges---hence $G_\rho(X)$ is finite---while the series $\sum \frac{1}{(k\log k)^{\alpha(\rho+1)}}$ diverges so that $H_{\alpha}(X)=+\infty$.

As already remarked in~\cite[p.\,476]{LutwakYangShang05}, Arikan's inequality~\cite{Arikan96} on $G_\rho(X)$:
\begin{equation}
G_\rho(X) \geq  \frac{2^{H_{\!\frac{1}{1+\rho}}\!(X)}}{1\!+\!\ln M},
\end{equation}
(and similarly for $X|Y$) 
is for the limiting case $\alpha=\frac{1}{1+\rho}$, but  is valid only when $X$ takes a finite number $M$ of possible values. 
In a manner similar to was done in~\cite{SasonVerdu18bis},
it is always possible to use the method of Subsection~\ref{atifresm} to obtain inequalities of this kind for any $\alpha\leq\frac{1}{1+\rho}$.
\end{remark}

\section{Conclusion}\label{sec-conclusion}

Simple bounds on the differential entropy or Rényi entropy for a given fixed parameter (such as mean or variance) have long been established in connection with the important maximum entropy problem, which has been heavily studied for continuous distributions. By contrast, the similar problem for discrete distributions does not seem to be as popular: With the exception of discrete uniform or geometric laws, few results are known on the maximizing distributions.
However, bounding the discrete entropy or discrete Rényi entropy for a given fixed parameter (such as mean or variance) appears as a basic question in information theory. This paper has shown that using Massey's approach, many simple, closed-form bounds on discrete entropies or Rényi entropies can be deduced from bounds on the $\alpha$-entropies of a continuous distribution. One can envision that many similar derivations can be done for other types of parameter constraints.

Massey's approach gives, in particular, simple lower bounds on the guessing entropy or guessing moments, which are exponential in Rényi (or Rényi-Arimoto) entropies of any order $\alpha>0$, not just $\alpha=1$ (Massey's inequality) of $\alpha=\frac{1}{1+\rho}$ (Arikan's inequality). Since similar upper bounds also exist for $\alpha=\frac{1}{1+\rho}$~\cite{Arikan96,Bostas97,SasonVerdu18bis} it would be interesting to similarly upper bound guessing for other values of $\alpha$ in order to obtain tight evaluations in practical applications where a divide-and-conquer strategy is used~\cite{ChoudaryPopescu17} to guess a large secret from many small ones.

Finally, a variant of Massey's approach together with some Fourier analysis proves very tight ``Gaussian'' bounds for large variance---better than what would have been expected from convergence in entropy towards the Gaussian as established by the strong central limit theorem. Therefore, it is likely that Takano's $\sigma^{-1-\eps}$ term~\cite{Takano87} can be very much improved in general, at least for integer-valued random variables with finite higher-order moments. Since Massey-type bounds easily generalize to Rényi entropies with tight $\alpha$-Gaussian bounds, it would also be interesting to prove some corresponding convergence results in terms of $\alpha$-entropies and $\alpha$-Gaussians.

\section*{Acknowledgment}
The author is indebted to the anonymous reviewers for improving the clarity of exposition of this paper and pointing out references~\cite{SasonVerdu18bis} and~\cite{WeinbergerShayevitz20}.
\appendices %

\section{Reza's Equivalence Extended to Rényi Entropies}\label{app-reza}

Consider a continuous variable $\mathcal{X}$ having density~$f$, and {quantize} it to obtain the discrete~$X$ with step size~$\Delta$, in such a way that 
\begin{equation}
p(x_k)=\P(X\!=\!x_k)=\int_{k\Delta}^{(\!k+1\!)\Delta}\!f(x)\dx
\end{equation}
and the discrete values $x_k$ correspond to {mean} values  
\begin{equation}\label{mvt}
f(x_k)=\frac{1}{\Delta}\int_{k\Delta}^{(\!k+1\!)\Delta}\! \!f(x)\dx = \frac{p(x_k)}{\Delta}.
\end{equation}
\begin{proposition}\label{prop-reza}
If $f$ is continuous within each bin of length~$\Delta$ and the integral (in~\eqref{eq-h} or in~\eqref{eq-h-alpha}) defining $h_\alpha(X)$ exists, then
$$
\lim\limits_{\Delta\to 0} \{ H_\alpha(X)+\log\Delta \}= h_\alpha(\mathcal{X}).
$$
\end{proposition}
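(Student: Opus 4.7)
The plan is to directly compute $H_\alpha(X) + \log \Delta$ using the mean-value representation $p(x_k) = f(x_k)\Delta$ from \eqref{mvt}, and to recognize the result as a Riemann sum converging to the integral defining $h_\alpha(\mathcal{X})$.

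First I would treat the case $\alpha \neq 1$. Substituting $p(x_k) = f(x_k)\Delta$ into the definition of the discrete R\'enyi entropy gives
\begin{equation*}
\sum_k p(x_k)^\alpha = \Delta^\alpha \sum_k f(x_k)^\alpha = \Delta^{\alpha-1} \sum_k f(x_k)^\alpha \Delta,
\end{equation*}
so that
\begin{equation*}
H_\alpha(X) + \log \Delta = \frac{1}{1-\alpha} \log\Bigl( \sum_k f(x_k)^\alpha \Delta \Bigr),
\end{equation*}
after using $\tfrac{\alpha-1}{1-\alpha}\log\Delta = -\log\Delta$ to absorb the $\Delta^{\alpha-1}$ prefactor. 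Continuity of $f$ on each bin together with \eqref{mvt} exhibits the right-hand sum as a standard Riemann sum for $\int f(x)^\alpha \dx$, which tends to this integral as $\Delta \to 0$ by the hypothesis that the integral defining $h_\alpha(\mathcal{X})$ exists. Taking logarithms then yields $H_\alpha(X) + \log\Delta \to \tfrac{1}{1-\alpha}\log\!\int\! f^\alpha \dx = h_\alpha(\mathcal{X})$.

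Next I would handle $\alpha = 1$ (either by a direct computation or as a limit). Directly, using $p(x_k) = f(x_k)\Delta$ and $\sum_k p(x_k) = 1$,
\begin{equation*}
H(X) = -\!\sum_k f(x_k)\Delta\,\bigl(\log f(x_k) + \log\Delta\bigr) = -\log\Delta - \sum_k f(x_k) \log f(x_k) \cdot \Delta,
\end{equation*}
so that $H(X) + \log\Delta$ is itself a Riemann sum for $-\!\int f \log f \dx = h(\mathcal{X})$. Convergence follows from the assumed existence of the integral in \eqref{eq-h}.

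The main technical subtlety is that the support of $\mathcal{X}$ is typically unbounded, so the Riemann sums are really truncated approximations of improper integrals, and for $\alpha = 1$ the integrand $f \log f$ is not of constant sign. Hence the delicate step is to justify interchanging the limit $\Delta \to 0$ with the (possibly infinite) summation over $k$. This is where the hypothesis on existence of the integral is used: it guarantees that the tails of the Riemann sums can be bounded uniformly in $\Delta$ (e.g.\ by a dominating integrable function or by splitting positive and negative parts of $f\log f$ and invoking the standard convergence of Riemann sums on finite sub-intervals together with a tail estimate). With that in hand, passing to the limit yields the claimed equality in both cases, and the proposition is proved.
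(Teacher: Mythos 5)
Your proposal is correct and follows essentially the same route as the paper's own proof in Appendix~A: substituting the mean-value representation $p(x_k)=f(x_k)\Delta$ from~\eqref{mvt}, rewriting $H_\alpha(X)+\log\Delta$ (respectively $H(X)+\log\Delta$) as a Riemann sum for $\int f^\alpha$ (respectively $-\int f\log f$), and passing to the limit $\Delta\to 0$ using the assumed existence of the integral. Your added remark on controlling the tails for unbounded support is a fair elaboration of a point the paper leaves implicit, but it does not change the argument.
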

The assumptions are satisfied in particular when $f$ is continuous and compactly supported.

\begin{proof}
By the continuity assumption, the values~\eqref{mvt} are well defined and given by the mean value theorem.
Since the integral in~\eqref{eq-h} (resp.~\eqref{eq-h-alpha}) converges and $f$ is piecewise continuous, $f \log f$ (resp. $f^\alpha$) is Riemann-integrable. 
It follows that the integral in~\eqref{eq-h} and in~\eqref{eq-h-alpha} can be respectively approximated by the Riemann sum
\begin{align}
\sum_k \Delta\!\cdot\! f(x_k) \log \frac{1}{f(x_k)}  &= \sum_k p(x_k) \log \frac{\Delta}{p(x_k)}\notag\\&=H(X)+\log \Delta \\[1ex]
\tfrac{1}{1-\alpha} \log\sum_k \Delta\!\cdot\! f^\alpha(x_k)  &=\tfrac{1}{1-\alpha} \log  \sum_k \Delta^{1-\alpha} p^\alpha(x_k) \notag\\&=H_\alpha(X)+\log \Delta,
\end{align}
which tends to $h(\mathcal{X})$ (resp. $h_\alpha(\mathcal{X})$)  as $\Delta\to 0$.
\end{proof}

\section{Massey's Equivalence Extended to Rényi Entropies and Arbitrary Step Size}
\label{app-mass}

\begin{proof}[Proof of Theorem~\ref{thm-one}]
The density of $\mathcal{X}=X+ \mathcal{U}$ is a mixture of the form
\begin{equation}
f(x) = \sum_{k\in\Z} p(x_k) \,\chi(x-x_k) 
\end{equation}
where $x_k$ are the regularly spaced values of $X$ and $\chi$ is the density of $\mathcal{U}$.
The terms in the sum have disjoint supports.
Since entropy is invariant by translation, we may always assume that $\chi$ is supported in the interval $[0,\Delta]$. Splitting the integral in~\eqref{eq-h} or in~\eqref{eq-h-alpha} into parts over intervals $[x_k,x_{k+1}=x_k+\Delta]$ we obtain
\begin{equation}
\begin{aligned}
h(\mathcal{X})&=\sum_k p(x_k)\int \!\!\chi(x-x_k) \log\! \frac{1}{p(x_k)\chi(x-x_k)} \dx \\&= 
\sum_k p(x_k)\Bigl[\underbrace{\int\!\!\chi}_{=1}\Bigr] \log\! \frac{1}{p(x_k)}+ \Bigl[\underbrace{\sum_k p(x_k)}_{=1}\Bigr]\int \!\!\chi \log \frac{1}{\chi}
\\
h_\alpha(\mathcal{X}) &= \frac{1}{1-\alpha} \log \sum_k p(x_k)^\alpha\int\!\! \chi(x-x_k)^\alpha \dx \\&=
\frac{1}{1-\alpha} \log \sum_k p(x_k)^\alpha 
\int\!\! \chi^\alpha
\\&=
\frac{1}{1-\alpha} \log \sum_k p(x_k)^\alpha + \frac{1}{1-\alpha} \log 
\int\! \chi^\alpha
\end{aligned}
\end{equation}
which proves~\eqref{eq-X+Z}.
\end{proof}

\begin{remark}
The above proof follows the textbook solution~\cite{CoverThomasSolutions} to  exercice 8.7 of~\cite{CoverThomas} in the case $\alpha=1$. (A similar calculation appears in~\cite[Proof of Thm.~3]{RioulMagossi14}.)
In this particular case, an even simpler proof is as follows.
\begin{proof}[Proof of Theorem 1 ($\alpha=1$)]
By the support assumption, $X$ can be recovered by rounding $X+\mathcal{U}$, hence is a deterministic function of $\mathcal{X}$. Therefore,
$H(X|\mathcal{X})=0$ and
\begin{equation}
\begin{aligned}
H(X)&=H(X)-H(X|\mathcal{X})
\\&=I(X;\mathcal{X})\\&=h(\mathcal{X})-h(\mathcal{X}|X)\\&=h(\mathcal{X})-h(\mathcal{U}),  
\end{aligned}
\end{equation}
 which proves~\eqref{eq-X+Z}.
\end{proof}
\end{remark}

\section{Proof of Theorem~\ref{lem-three} and Its Corollaries}\label{app-lem}

We first prove Theorem~\ref{lem-three} and then deduce Corollaries~\ref{lem-one} and \ref{lem-two} as particular cases.

Set $T(x)$ in the form $T(x)=1+\beta\frac{|x|^\rho}{\theta}$ so that $m=1+\beta$ and $\beta$ is such that~\eqref{eq-phi-alpha} has finite $\rho$th-order moment $\theta=\E(|\mathcal{X}|^\rho)$. 
In order that $\phi(x)=\frac{1}{Z}\bigl(1+\beta\frac{|x|^\rho}{\theta}\bigr)^{\frac{1}{\alpha-1}}$ be integrable, it is necessary that $\beta$ has the same sign as $1-\alpha$. 

For $\alpha>1$ ($\beta<0$), the density is supported in the interval $|x|<\sqrt[\leftroot{-2}\uproot{2}\rho]{\frac{\smash[t]{\theta}}{\smash[b]{|\beta|}}}$ so that $1+\beta\frac{|x|^\rho}{\theta}\geq 0$.
In this case we write $\phi(x)=\frac{1}{Z} \big(1-|\beta|\frac{|x|^\rho}{\theta}\bigr)_{\!+}^{\frac{1}{\alpha-1}}$
with the notation $(X)_+=\max(X,0)$.

For $\alpha<1$, the existence of a finite variance implies that the integral of $\bigl(1+\beta\frac{|x|^\rho}{\theta}\bigr)^{1-\frac{1}{1-\alpha}}$ converges at infinity, which requires $\alpha>\frac{1}{1+\rho}$. 

In either case, $\beta$ is such that $\phi$ has $\rho$th moment $\theta$, that is, such that~\eqref{eq-trick} holds, hence $m=1+\beta=\frac{Z_\alpha}{Z}$. Now we can write
$Z= \sqrt[\leftroot{-2}\uproot{2}\rho]{\frac{\smash[t]{\theta}}{\smash[b]{|\beta|}}}\, I\bigl(\frac{1}{\alpha-1}\bigr)$ and
$Z_\alpha= \sqrt[\leftroot{-2}\uproot{2}\rho]{\frac{\smash[t]{\theta}}{\smash[b]{|\beta|}}}\, I\bigl(\frac{\alpha}{\alpha-1}\bigr)$
where %
\begin{equation}
I(\gamma)\triangleq
\begin{cases}
\displaystyle\int_{-\infty}^{+\infty}\!  \frac{\mathrm{d}x}{(1+|x|^\rho)^{-\gamma}}=\frac{2}{\rho}\int_0^1\!\! (1-t)^{-\gamma-\frac{3}{2}}t^{-\frac{1}{2}} \dt 
\\[2ex]=\dfrac{\frac{2}{\rho}\Gamma(\frac{1}{\rho})\Gamma(-\gamma-\frac{1}{\rho})  }{ \Gamma(-\gamma)}
\qquad\qquad\text{for $\gamma<0$;}\\[3ex]
\displaystyle\int_{-1}^1 (1-|x|^\rho)^\gamma \dx \,=\,\frac{2}{\rho}\int_0^1 (1-t)^\gamma t^{-\frac{1}{2}} \dt   
\\[2ex]=\dfrac{\frac{2}{\rho}\Gamma(\frac{1}{\rho})\Gamma(\gamma+1) }{ \Gamma(\gamma+1+\frac{1}{\rho})}
\quad\qquad\qquad\text{for $\gamma>0$.}
\end{cases}
\end{equation}
Here we have made the change of variables $t=\frac{x^p}{1+x^p}$ and $t=x^p$, respectively, for $x>0$, and recognized Euler integrals of the first kind. In either case, letting $\gamma=\frac{1}{\alpha-1}$,
\begin{equation}\label{eq-var-m}
m=\frac{Z_\alpha}{Z}=\frac{I(\gamma+1)}{I(\gamma) }
=\frac{-\gamma-1}{-\gamma-1-\frac{1}{\rho}}
=\frac{\rho\alpha}{(\rho+1)\alpha-1},
\end{equation}
hence $\beta=\frac{1-\alpha}{(\rho+1)\alpha-1}$.
Plugging this and the expression of $Z$ into that of $\phi$ gives the expression of the generalized $\alpha$-Gaussian density~\cite{LutwakYangShang05}\footnote{%
There is a misprint in the expression\smallskip{} of the generalized $\alpha$-Gaussian in~\cite[p.\,474]{LutwakYangShang05} where $\beta(\frac{1}{p},\frac{1}{1-\lambda})$ should read $\beta(\frac{1}{p},\frac{\lambda}{\lambda-1})$.
}:
\begin{equation}\label{alpha-Gaussian-generalized}
\phi(x) = 
\begin{cases}
 \sqrt[\leftroot{-2}\uproot{2}\rho]{\dfrac{{\beta}}{{\theta}}} \;
\dfrac{\Gamma(\frac{1}{1-\alpha})}{2\Gamma(1+\frac{1}{\rho})\Gamma(\frac{1}{1-\alpha}-\frac{1}{\rho})} \;
\dfrac{1}{\big(  1+\beta\frac{|x|^\rho}{\theta}    \bigr)^{\frac{1}{1-\alpha}}}
\\\qquad\qquad\qquad\qquad\qquad\qquad\text{for $\frac{1}{1+\rho}<\alpha<1$;}\\[3ex]
 \sqrt[\leftroot{-2}\uproot{2}\rho]{\dfrac{{|\beta|}}{{\theta}}} \;
\dfrac{\Gamma(\frac{\alpha}{\alpha-1}+\frac{1}{\rho})}{2\Gamma(1+\frac{1}{\rho})\Gamma(\frac{\alpha}{\alpha-1})} \;
  \big(1-|\beta|\frac{|x|^\rho}{\theta}\bigr)_{\!+}^{\frac{1}{\alpha-1}}
\\\qquad\qquad\qquad\qquad\qquad\qquad\text{for $\alpha>1$,}
\end{cases}
\end{equation}
and plugging~\eqref{eq-var-m} and the expression of $Z_\alpha$ or $Z$ into~\eqref{ineq-general-continuous-alpha} or~\eqref{ineq-general-continuous-alpha-Z} gives~\eqref{eq-maxentgen-alpha}. 
\qed

Corollary~\ref{lem-one} follows by setting $\rho=2$ for the centered variable $\mathcal{X}-\mu_{\mathcal{X}}$. 
The corresponding expression of the $\alpha$-Gaussian density (with $\beta=\frac{1-\alpha}{3\alpha-1}$) is~\cite{CostaHeroVignat03}
\begin{equation}\label{alpha-Gaussian}
\phi(x) = 
\begin{cases}
\sqrt{\dfrac{\beta}{\pi\sigma^2_{\mathcal{X}}}} \;
\dfrac{\Gamma(\frac{1}{1-\alpha})}{\Gamma(\frac{1}{1-\alpha}-\frac{1}{2})} \;
\dfrac{1}{\big(1+\beta (\frac{x-\mu_{\mathcal{X}}}{\sigma_{\mathcal{X}}})^2\bigr)^{\frac{1}{1-\alpha}}}
\\\qquad\qquad\qquad\qquad\qquad\qquad
\text{for $\frac{1}{3}<\alpha<1$;}\\[3ex]
\sqrt{\dfrac{|\beta|}{\pi\sigma^2_{\mathcal{X}}}}\;
\dfrac{\Gamma(\frac{\alpha}{\alpha-1}+\frac{1}{2})}{\Gamma(\frac{\alpha}{\alpha-1})} \;
  \big(1-|\beta| (\frac{x-\mu_{\mathcal{X}}}{\sigma_{\mathcal{X}}})^2\bigr)_{\!+}^{\frac{1}{\alpha-1}}
\\\qquad\qquad\qquad\qquad\qquad\qquad
\text{for $\alpha>1$,}
\end{cases}
\end{equation}
and $Z_\alpha$ is given by
\begin{equation}\label{eq-Zalpha-sigma}
Z_\alpha= \tfrac{\sigma_{\mathcal{X}}}{\sqrt{|\beta|}} I\bigl(\tfrac{\alpha}{\alpha-1}\bigr)=
\begin{cases}
\sqrt{\frac{\pi\sigma^2_{\mathcal{X}}(3\alpha-1)}{{1-\alpha}}}
\frac{\Gamma(\frac{\alpha}{1-\alpha}-\frac{1}{2})  }{ \Gamma(\frac{\alpha}{1-\alpha})}
&\text{for $\alpha<1$;}\\[1ex]
\sqrt{\frac{\pi\sigma^2_{\mathcal{X}}(3\alpha-1)}{{\alpha-1}}}
\frac{\Gamma(\frac{\alpha}{\alpha-1}+1) }{ \Gamma(\frac{\alpha}{\alpha-1}+\frac{3}{2})}
&\text{for $\alpha>1$.}
\end{cases}
\end{equation}

Corollary~\ref{lem-two} follows by setting $\rho=1$, where the multiplying factor $2$ in the above expressions is removed due to the one-sided constraint $\mathcal{X}\geq 0$.
The corresponding expression of the ``$\alpha$-exponential'' density (with $\beta=\frac{1-\alpha}{2\alpha-1}$) for $x>0$ is~\cite[\S\,II.B]{BunteLapidoth14}
\begin{equation}\label{alpha-exp}
\phi(x)=
\begin{cases}
\dfrac{\beta}{\mu_{\mathcal{X}}}\dfrac{\alpha}{1-\alpha}
\dfrac{1}{\big(1+\beta \frac{x}{\mu_{\mathcal{X}}}\bigr)^{\frac{1}{1-\alpha}}}
&\text{for $\frac{1}{2}<\alpha<1$}\\
\dfrac{|\beta|}{\mu_{\mathcal{X}}}\dfrac{\alpha}{\alpha-1}
 \big(1-|\beta| \frac{x}{\mu_{\mathcal{X}}}\bigr)_{\!+}^{\frac{1}{\alpha-1}}
&\text{for $\alpha>1$.}
\end{cases}
\end{equation} 
and $Z_\alpha$ is given by
\begin{equation}\label{eq-Zalpha-mean}
Z_\alpha= \frac{\mu_{\mathcal{X}}}{{|\beta|}} I\bigl(\frac{\alpha}{\alpha-1}\bigr)=\frac{\mu_{\mathcal{X}}}{{\beta}} \frac{1-\alpha}{2\alpha-1}=  \mu_{\mathcal{X}}.
\end{equation}

\section{Proof of Inequality~(\ref{trapezoidal})}
\label{app-trapezoidal}

Let $s=\frac{\alpha}{|\alpha-1|}$ and $a=\frac{2\alpha-1}{|1-\alpha|}\mu$. Then~\eqref{trapezoidal} is equivalent to 
\begin{align}
\sum_{x\in\N}\Bigl(1+\frac{x}{a}\Bigr)^{-s} &> \frac{a}{s-1}+\frac{1}{2} \qquad (\frac{1}{2}<\alpha<1)
\\
\sum_{x\in\N}\Bigl(1-\frac{x}{a}\Bigr)_{\!\!+}^{s} &> \frac{a}{s+1}+\frac{1}{2} \qquad (\alpha>1) 
\end{align}
This is proved by applying the following Lemma to $f(x)=\bigl(1+\frac{x}{a}\bigr)^{-s}$ and $\bigl(1-\frac{x}{a}\bigr)_{\!\!+}^{s}$, respectively.

\begin{lemma}
Let $f$ be nonnegative decreasing and strictly convex. Then
\begin{equation}
\sum_{x\in\N} f(x)  >  \frac{f(0)}{2}+\int_0^{+\infty} f(x) \,\mathrm{d}x.
\end{equation}
\end{lemma}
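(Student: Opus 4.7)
The plan is to invoke the standard trapezoidal overestimate for strictly convex functions on each unit interval and then sum. This is essentially the fact that the trapezoidal rule overshoots the true integral when the integrand is convex, applied here with unit step.

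First I would observe that strict convexity of $f$ on each interval $[k,k+1]$ forces the chord joining $(k,f(k))$ to $(k+1,f(k+1))$ to lie strictly above the graph of $f$ on the open subinterval. Integrating this pointwise inequality over $[k,k+1]$ gives the strict trapezoidal bound
\begin{equation*}
\int_k^{k+1}\! f(x) \dx \;<\; \frac{f(k)+f(k+1)}{2} \qquad (k\in\N).
\end{equation*}

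Next I would sum over $k=0,1,2,\ldots$. The left-hand side telescopes to $\int_0^{+\infty} f(x) \dx$. On the right-hand side, each value $f(k)$ with $k\geq 1$ contributes to two consecutive trapezoids (as right endpoint for the interval $[k-1,k]$ and as left endpoint for $[k,k+1]$), whereas $f(0)$ contributes only once, so
\begin{equation*}
\sum_{k=0}^{+\infty}\frac{f(k)+f(k+1)}{2} \;=\; \sum_{x\in\N} f(x) - \frac{f(0)}{2}.
\end{equation*}
Rearranging yields the claimed bound.

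The only subtlety is passage to the limit. The nonnegative, decreasing hypothesis ensures, via the integral comparison test, that $\sum_{x\in\N} f(x)$ and $\int_0^{+\infty}\! f(x)\dx$ are either both finite or both infinite, so the statement is never of the indeterminate form $\infty>\infty$; and in the meaningful (finite) case, strictness is preserved because the $k=0$ trapezoidal inequality is already strict while all subsequent terms contribute nonnegatively. This is the only point that needs checking, and it is immediate — there is no real obstacle, only bookkeeping.
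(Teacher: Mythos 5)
Your proof is correct and follows essentially the same route as the paper: the paper compares $f$ with its piecewise linear interpolant $g$ over the integers, which is exactly your interval-by-interval trapezoidal overestimate, and both arrive at $\int_0^{+\infty} f < \sum_{x\in\N}\frac{f(x)+f(x+1)}{2} = \sum_{x\in\N} f(x) - \frac{f(0)}{2}$. Your extra remark on strictness and finiteness is harmless bookkeeping that the paper omits (implicitly assuming $f$ integrable, as it is in the application).
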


\begin{proof}
Let $g(x)$ be the piecewise linear function defined for all $x\geq 0$ that linearly interpolates  the values of $f$ over the integers. Then
$\int_0^{+\infty} f(x) \,\mathrm{d}x <\int_0^{+\infty} g(x) \,\mathrm{d}x = \sum_{x\in\N} \frac{f(x)+f(x+1)}{2} = \sum_{x\in\N} f(x) - \frac{f(0)}{2}$.
\end{proof}

\IEEEtriggeratref{27}
%
%
%
%

%\bibliographystyle{IEEEtran} %
%\bibliography{paper} %

\begin{thebibliography}{10}
\providecommand{\url}[1]{#1}
\csname url@samestyle\endcsname
\providecommand{\newblock}{\relax}
\providecommand{\bibinfo}[2]{#2}
\providecommand{\BIBentrySTDinterwordspacing}{\spaceskip=0pt\relax}
\providecommand{\BIBentryALTinterwordstretchfactor}{4}
\providecommand{\BIBentryALTinterwordspacing}{\spaceskip=\fontdimen2\font plus
\BIBentryALTinterwordstretchfactor\fontdimen3\font minus
  \fontdimen4\font\relax}
\providecommand{\BIBforeignlanguage}[2]{{%
\expandafter\ifx\csname l@#1\endcsname\relax
\typeout{** WARNING: IEEEtran.bst: No hyphenation pattern has been}%
\typeout{** loaded for the language `#1'. Using the pattern for}%
\typeout{** the default language instead.}%
\else
\language=\csname l@#1\endcsname
\fi
#2}}
\providecommand{\BIBdecl}{\relax}
\BIBdecl

\bibitem{Massey88}
J.~L. Massey, ``On the entropy of integer-valued random variables,'' in
  \emph{Proc. Beijing International Workshop of Information Theory}, July 4--7
  1988.

\bibitem{Massey94}
------, ``Guessing and entropy,'' in \emph{Proc. of IEEE International
  Symposium on Information Theory}, 1994, p. 204.

\bibitem{Reza61}
F.~M. Reza, \emph{An Introduction to Information Theory}.\hskip 1em plus 0.5em
  minus 0.4em\relax New York: Dover, 1961.

\bibitem{McEliece}
R.~J. McEliece, \emph{The Theory of Information and Coding}.\hskip 1em plus
  0.5em minus 0.4em\relax Cambridge University Press, 1st Ed. 1985, 2nd Ed.
  2002.

\bibitem{CoverThomas}
T.~M. Cover and J.~A. Thomas, \emph{Elements of Information Theory}.\hskip 1em
  plus 0.5em minus 0.4em\relax John Wiley \& Sons, 1st Ed. 1990, 2nd Ed. 2006.

\bibitem{CoverThomasSolutions}
------, ``Elements of information theory: {S}olutions to problems,'' Aug. 2007.

\bibitem{Arikan96}
E.~Arikan, ``An inequality on guessing and its application to sequential
  decoding,'' \emph{IEEE Transactions on Information Theory}, vol.~42, no.~1,
  pp. 99--105, Jan. 1996.

\bibitem{ChoudaryPopescu17}
M.~O. Choudary and P.~G. Popescu, ``Back to {Massey}: {I}mpressively fast,
  scalable and tight security evaluation tools,'' in \emph{Proc. 19th Workshop
  on Cryptographic Hardware and Embedded Systems (CHES 2017)}, vol. LNCS 10529,
  2017, pp. 367--386.

\bibitem{TanasescuChoudaryRioulPopescu21}
A.~T\u{a}n\u{a}sescu, M.~O. Choudary, O.~Rioul, and P.~G. Popescu, ``Tight and
  scalable side-channel attack evaluations through asymptotically optimal
  {Massey}-like inequalities on guessing entropy,'' \emph{Entropy}, vol.~23,
  no.~11, pp. 1--10, 2021.

\bibitem{Rioul18}
O.~Rioul, ``This is {IT}: A primer on {Shannon}'s entropy and information,'' in
  \emph{Information Theory, Poincar{\'e} (Bourbaphy) Seminar XXIII 2018}, ser.
  Progress in Mathematical Physics, B.~Duplantier and V.~Rivasseau, Eds.,
  vol.~78.\hskip 1em plus 0.5em minus 0.4em\relax Birkh{\"a}user, Springer
  Nature, Aug. 2021, pp. 49--86.

\bibitem{Kullback54}
S.~Kullback, ``Certain inequalities in information theory and the
  {Cram{\'e}r-Rao} inequality,'' \emph{The Annals of Mathematical Statistics},
  vol.~25, no.~4, pp. 745--751, Dec. 1954.

\bibitem{Sanov57}
I.~N. Sanov, ``On the probability of large deviations of random variables,''
  \emph{Matematicheskii}, vol. 42 (84), no.~1, pp. 11--44 (in Russian), 1957
  (Translation, North Carolina Institute of Statistics, Mimeograph Series,
  No.~192, Mar.~1958).

\bibitem{Kullback}
S.~Kullback, \emph{Information Theory and Statistics}.\hskip 1em plus 0.5em
  minus 0.4em\relax Wiley, Dover, 1st Ed., 1959, 2nd Ed. 1968.

\bibitem{vanErvenHarremoes14}
T.~van Erven and P.~Harremo\"es, ``R\'enyi divergence and {Kullback-Leibler}
  divergence,'' \emph{IEEE Trans. Inf. Theory}, vol.~60, no.~7, pp. 3797--3820,
  Jul. 2014.

\bibitem{LapidothPfister16}
A.~Lapidoth and C.~Pfister, ``Two measures of dependence,'' \emph{Entropy},
  vol.~21, no. 778, pp. 1--40, 2019.

\bibitem{Sundaresan07}
R.~Sundaresan, ``Guessing under source uncertainty,'' \emph{IEEE Transactions
  on Information Theory}, vol.~53, no.~1, pp. 269--287, Jan. 2007.

\bibitem{Principe00}
J.~C. Principe, D.~Xu, and J.~W. {Fisher III}, \emph{Information-Theoretic
  Learning}, ser. Unsupervised Adaptive Filtering.\hskip 1em plus 0.5em minus
  0.4em\relax John Wiley \& Sons, Sept. 2000, vol.~I, ch.~7.

\bibitem{Rioul20}
O.~Rioul, ``R{\'e}nyi entropy power and normal transport,'' in \emph{Proc.
  International Symposium on Information Theory and Its Applications
  (ISITA2020)}, Oct. 24-27 2020, pp. 1--5.

\bibitem{BunteLapidoth16}
C.~Bunte and A.~Lapidoth, ``Maximizing {R\'enyi} entropy rate,'' \emph{IEEE
  Transactions on Information Theory}, vol.~62, no.~3, pp. 1193--1205, March
  2016.

\bibitem{Shannon48}
C.~E. Shannon, ``A mathematical theory of communication,'' \emph{Bell Syst.
  Tech. J.}, vol.~27, pp. 623--656, Oct. 1948.

\bibitem{CostaHeroVignat03}
J.~Costa, A.~Hero, and C.~Vignat, ``On solutions to multivariate maximum
  $\alpha$-entropy problems,'' in \emph{Energy Minimization Methods in Computer
  Vision and Pattern Recognition (EMMCVPR 2003)}, ser. Lecture Notes in
  Computer Science, A.~Rangarajan, M.~Figueiredo, and J.~Zerubia, Eds., vol.
  2683.\hskip 1em plus 0.5em minus 0.4em\relax Springer, 2003, pp. 211--226.

\bibitem{SteinWeiss71}
E.~M. Stein and G.~Weiss, \emph{Introduction to Fourier Analysis on Euclidean
  Spaces}.\hskip 1em plus 0.5em minus 0.4em\relax Princeton Univertsity Press,
  1971.

\bibitem{Poisson1823}
S.-D. Poisson, ``Suite du m{\'e}moire sur les int{\'e}grales d{\'e}finies et
  sur la sommation des s{\'e}ries,'' \emph{Journal de l'{\'E}cole Royale
  Polytechnique}, vol.~19, no.~12, pp. 404--509, Juillet 1823.

\bibitem{RioulMagossi14}
O.~Rioul and J.~C. Magossi, ``On {Shannon}'s formula and {Hartley}'s rule:
  Beyond the mathematical coincidence,'' \emph{Entropy}, vol.~16, no.~9, pp.
  4892--4910, Sept. 2014.

\bibitem{Evans88}
R.~J. Evans, J.~Boersma, N.~M. Blachman, and A.~A. Jagers, ``The entropy of a
  {Poisson} distribution: {Problem 87-6},'' \emph{SIAM Review}, vol.~30, no.~2,
  pp. 314--317, June 1988.

\bibitem{Adell10}
J.~A. Adell, A.~Lekuona, and Y.~Yu, ``Sharp bounds on the entropy of the
  {Poisson} law and related quantities,'' \emph{IEEE Transactions on
  Information Theory}, vol.~56, no.~5, pp. 2299--2306, May 2010.

\bibitem{Takano87}
S.~Takano, ``Convergence of entropy in the central limit theorem,''
  \emph{Yokohama Journal}, vol.~35, pp. 143--148, 1987.

\bibitem{Rioul19}
E.~de~Ch{\'e}risey, S.~Guilley, P.~Piantanida, and O.~Rioul, ``Best information
  is most successful: {M}utual information and success rate in side-channel
  analysis,'' \emph{IACR Transactions on Cryptographic Hardware and Embedded
  Systems (TCHES 2019)}, vol. 2019, no.~2, pp. 49--79, 2019.

\bibitem{Popescu20}
A.~T\u{a}n\u{a}sescu and P.~G. Popescu, ``Exploiting the {Massey} gap,''
  \emph{Entropy}, vol.~22, no. 1398, pp. 1--9, Dec. 2020.

\bibitem{Popescu19}
P.~G. Popescu and M.~O. Choudary, ``Refinement of {Massey} inequality,'' in
  \emph{Proc of the IEEE International Symposium on Information Theory}, 2019,
  pp. 495--496.

\bibitem{Arimoto75}
S.~Arimoto, ``Information measures and capacity of order $\alpha$ for discrete
  memoryless channels,'' in \emph{Topics in Information Theory, Proc. Second
  Colloquium Mathematica Societatis J\'anos Bolyai}, I.~Csisz\'ar and P.~Elias,
  Eds., no.~16.\hskip 1em plus 0.5em minus 0.4em\relax Keszthely, Hungary:
  Bolyai, 1975: North Holland, 1977, pp. 41--52.

\bibitem{FehrBerens14}
S.~Fehr and S.~Berens, ``On the conditional {R\'enyi} entropy,'' \emph{IEEE
  Transactions on Information Theory}, vol.~60, no.~11, pp. 6801--6810, Nov.
  2014.

\bibitem{SasonVerdu18}
I.~Sason and S.~Verd\'u, ``Arimoto-{R\'enyi} conditional entropy and {Bayesian}
  ${M}$-ary hypothesis testing,'' \emph{IEEE Transactions on Information
  Theory}, vol.~64, no.~1, pp. 4--25, Jan. 2018.

\bibitem{SasonVerdu18bis}
------, ``Improved bounds on lossless source coding and guessing moments via
  {R\'enyi} measures,'' \emph{IEEE Transactions on Information Theory},
  vol.~64, no.~6, pp. 4323--4346, June 2018.

\bibitem{WeinbergerShayevitz20}
N.~Weinberger and O.~Shayevitz, ``Guessing with a bit of help,''
  \emph{Entropy}, vol.~22, no.~39, pp. 1--26, 2020.

\bibitem{LutwakYangShang05}
E.~Lutwak, D.~Yang, and G.~Zhang, ``{Cram\'er--Rao} and moment-entropy
  inequalities for {R\'enyi} entropy and generalized {Fisher} information,''
  \emph{IEEE Transactions on Information Theory}, vol.~51, no.~2, pp. 473--478,
  Feb. 2005.

\bibitem{Bostas97}
S.~Bozta\c{s}, ``Comments on ``{An} inequality on guessing and its application
  to sequential decoding'','' \emph{IEEE Transactions on Information Theory},
  vol.~43, no.~6, pp. 2062--2063, Nov. 1997.

\bibitem{BunteLapidoth14}
C.~Bunte and A.~Lapidoth, ``Maximizing {R\'enyi} entropy rate,'' in \emph{IEEE
  28-th Convention of Electrical and Electronics Engineers in Israel}, 2014.

\end{thebibliography}
% Generated by IEEEtran.bst, version: 1.14 (2015/08/26)

%
%
%

\vspace*{-2ex}

\begin{IEEEbiographynophoto}{Olivier Rioul} is full Professor at the Department of Communication and Electronics at Télécom Paris, Institut Polytechnique de Paris, France. He graduated from École Polytechnique, Paris, France in 1987 and from École Nationale Supérieure des Télécommunications, Paris, France in 1989. He obtained his PhD degree from École Nationale Supérieure des Télécommunications, Paris, France in 1993. His research interests are in applied mathematics and include various, sometimes unconventional, applications of information theory such as inequalities in statistics, hardware security, and experimental psychology. He has been teaching information theory at various universities for almost twenty years and has published a textbook which has become a classical French reference in the field. 
\end{IEEEbiographynophoto}
\vfill

\end{document}